\documentclass{article}

\usepackage{amsmath}
\usepackage{amsthm}
\usepackage{amsxtra}
\usepackage{amssymb}

\usepackage{geometry} 

\usepackage{graphicx}

\usepackage{balance}

\usepackage{authblk}

\usepackage{mathrsfs} 
\usepackage{eufrak} 

\usepackage{pdfpages}

\usepackage{verbatim} 
\usepackage{hyperref}
\usepackage[utf8]{inputenc}
\usepackage{microtype}

\usepackage{xspace}

\usepackage{algorithm}
\usepackage{algpseudocode}
\algtext*{EndWhile}
\algtext*{EndFor}
\algtext*{EndProcedure}
\algtext*{EndIf}
\algnewcommand\algorithmicinput{\textbf{Input:}}
\algnewcommand\Input{\item[\algorithmicinput]}
\algnewcommand\algorithmicoutput{\textbf{Output:}}
\algnewcommand\Output{\item[\algorithmicoutput]}

\graphicspath{{figs/}}

\newtheorem{definition}{Definition}
\newtheorem{lemma}[definition]{Lemma}
\newtheorem{proposition}[definition]{Proposition}

\newtheorem{theorem}[definition]{Theorem}

\renewcommand{\lg}{\log}
\newcommand{\comp}{\mathit{components}}
\newcommand{\cost}{\mathscr{C}}
\newcommand{\init}{V_{\operatorname{init}}}

\renewcommand{\root}{\mathfrak{r}}
\newcommand{\C}{\mathbb{C}}
\renewcommand{\S}{\mathcal{S}}
\newcommand{\U}{\mathcal{U}}
\newcommand{\T}{\mathcal{T}}
\newcommand{\hw}[2]{\#{#1}({#2})}
\newcommand{\bhw}[2]{\overline{\hw{#1}{#2}}}

\newcommand{\NP}{\textsf{NP}}
\renewcommand{\P}{\textsf{P}}

\newcommand{\union}{\operatorname{union}}
\newcommand{\find}{\operatorname{find}}

\DeclareRobustCommand{\ALG}{%
	\ifmmode
		\operatorname{ON}
	\else
		\text{ON}\xspace
	\fi
}
\DeclareRobustCommand{\OFF}{%
	\ifmmode
		\operatorname{OPT}
	\else
		\text{OPT}\xspace
	\fi
}
\DeclareRobustCommand{\APPROXALGO}{%
	\ifmmode
		\operatorname{APPROX}
	\else
		\text{APPROX}\xspace
	\fi
}

\title{Efficient Distributed Workload (Re-)Embedding\footnote{A version of this
paper will appear at SIGMETRICS'19. Authors are ordered alphabetically.}}

\author{Monika Henzinger}
\author{Stefan Neumann}
\author{Stefan Schmid}
\affil{University of Vienna,
			Faculty of Computer Science,
			Vienna, Austria}
\date{}

\begin{document}

\maketitle

\begin{abstract}
Modern networked systems
are increasingly reconfigurable,  
enabling \emph{demand-aware} infrastructures
whose resources can be adjusted 
according to the workload they currently serve.
Such dynamic adjustments can be exploited to 
improve
network utilization and hence performance, 
by \emph{moving} frequently
interacting communication partners closer, 
e.g., collocating them in the same server or datacenter.
However, dynamically changing the embedding of workloads 
is algorithmically challenging: communication patterns
are often not known ahead of time, but must be 
\emph{learned}. During the learning process,
overheads related to unnecessary moves 
(i.e., re-embeddings) should be minimized.
This paper studies a fundamental
model which captures the tradeoff between the benefits
and costs of dynamically collocating communication partners 
on $\ell$ servers, in an online manner. 
Our main contribution is a distributed online
algorithm which is asymptotically almost optimal, i.e.,
almost matches the lower bound 
(also derived in this paper)
on the competitive ratio of any (distributed or centralized)
online algorithm.
As an application, we show that our algorithm can be used to solve a distributed
union find problem in which the sets are stored across multiple servers.
\end{abstract}

\section{Introduction}
\label{sec:introduction}

Along with the trend towards more
\emph{data centric} applications 
(e.g., online services like web search,
social networking, financial services
as well as emerging applications such as distributed machine learning~\cite{survey2017datacenter,dist-ml}), 
comes a need to
\emph{scale out} such applications, and distribute
the workload across multiple servers or even datacenters.
However, while such parallel processing can improve
performance, it can entail a non-trivial load on the
interconnecting network.
 Indeed, distributed 
cloud applications, such as batch processing,
streaming, or scale-out databases, 
can generate a
significant amount of network
traffic~\cite{talk-about}.  

At the same time, emerging
networked systems are becoming increasingly flexible and
thereby provide novel 
opportunities to mitigate the overhead 
that distributed
applications impose on the network. In particular,
the more flexible and dynamic resource allocation 
(enabled, e.g.,
by virtualization) introduces a vision of
\emph{workload-aware} infrastructures which 
optimize themselves to the demand~\cite{ccr18san}. 
In such infrastructures, communication partners
which interact intensively, may
be \emph{moved} closer (e.g.,
collocated on the same 
server, rack, or datacenter) in an adaptive
manner, depending on the demand. 
This ``re-embedding'' of the workload allows to keep
communication local and reduce costs.
Indeed, empirical studies have shown that 
communication patterns in distributed applications
feature much locality,
which highlights the potential of such self-adjusting
networked systems~\cite{projector,Roy:2015,Benson-imc}. 

However, leveraging such resource reconfiguration
flexibilities 
to optimize performance, 
poses an algorithmic challenge. 
\emph{First,} 
while collocating
communication partners reduces communication cost,
it also introduces a \emph{reconfiguration cost} (e.g., due
to virtual machine migration).
Thus, an algorithm needs to strike a balance between
the benefits and the cost of such reconfigurations.
\emph{Second,} as workloads and communication patterns 
are usually not known ahead of time, 
reconfiguration decisions need to be made in
an \emph{online} manner, i.e., without knowing the future.
We are hence in the realm of online algorithms and
competitive analysis.

This paper studies the fundamental tradeoff underlying the optimization of such
workload-aware reconfigurable systems. In particular, we consider the design of
an online algorithm which, without prior knowledge of the workload, aims to
minimize communication cost by performing a small number of \emph{moves}
(i.e., migrations). In a nutshell
(more details will follow below), we consider a \emph{communication graph}
between $n$ vertices (e.g., virtual machines) which can be perfectly partitioned
among a set of $\ell$ servers (resp.~racks or datacenters) of a given
\emph{capacity}. We assume that the \emph{communication patterns}, which
partition the communication graph, consist of $n/\ell$ vertices and that once
the whole communication graph was revealed, each server must contain exactly one
communication pattern.

The communication graph is initially unknown and revealed to the algorithm in an
online manner, edge-by-edge, by an adversary who aims to maximize the cost of
the given algorithm. The cost here consists of \emph{communication cost} and
\emph{moving cost}: The algorithm incurs one unit cost if the two endpoints
(i.e., communication partners) of the request belong to different servers. After
each request, the algorithm can reconfigure the infrastructure and move
communication endpoints from one server to another, essentially
\emph{repartitioning} the communication partners; however, each move
incurs a cost of $\alpha>1$.

In other words, this paper considers the problem of \emph{learning a partition},
i.e., an optimal assignment of communication partners to servers, at low
communication and moving
cost.  Interestingly, while the problem is natural and fundamental,
 not much is
known today about the algorithmic challenges underlying this problem,
except for the negative result that no good competitive algorithm can exist  
if communication partners can change arbitrarily over time~\cite{obr-original}.
This lower bound motivates us, in this paper, to focus on the online 
\emph{learning variant} where the communication partners are unknown but fixed.
At the
same time, as we will show, the problem features interesting connections to
several classic problems. Specifically, the problem can be seen as a
\emph{distributed} version of classic online caching
problems~\cite{competitive-analysis} or an \emph{online} version of the $k$-way
partitioning problem~\cite{ethan18optimal}.

\subsection{Our Contributions}

We initiate the study of a fundamental problem, how to learn and re-embed
workload in an online manner, with few moves.  We make the following main
contributions.

We present a distributed $O((\ell \lg \ell \lg n) / \varepsilon)$-competitive
online algorithm for servers of capacity $(1+\varepsilon)n/\ell$, where
$\varepsilon \in (0,1/2)$. We allow the servers to have $\varepsilon n/\ell$
more space than is strictly needed to embed its corresponding communication
pattern (which is of size $n/\ell$); we denote this additional space as
\emph{augmentation}.  Such augmentation is also needed, as our lower bounds
discussed next show.

We show that there are inherent limitations of what online algorithms can
achieve in our model:
We derive a lower bound of $\Omega(1/\varepsilon + \lg n)$ on the competitive
ratio of any deterministic online algorithm given servers of capacity at least
$(1+\varepsilon)n/\ell$.  This lower bound has several consequences:
(1) To obtain $O(\lg n)$-competitive algorithms, the servers must have
$\Omega(n/(\ell \lg n))$ augmentation. 
(2)~If the servers have $\Omega(n/\ell)$ augmentation (e.g., each server has
	10\% more capacity than the size of its communication pattern), our
	algorithm is optimal up to an $O(\ell \lg \ell)$ factor.
Thus, our results are particularly interesting for large servers, e.g., in a
wide-area networking context where there is usually only a small number of
datacenters where communication partners can be collocated (e.g., $\ell = 20$):
if each datacenter (``server'') has augmentation $0.1 \cdot n/\ell$ , our
algorithm is optimal up to constant factors.

The distributed algorithms we present not only provide good competitive ratios
but they are also highly efficient w.r.t.\ the network traffic they cause. In
fact, we show that for $\ell = O(\sqrt{\varepsilon n})$ servers, running the
algorithms introduces only little overhead in network traffic and that this
overhead is asymptotically negligible (see Section~\ref{sec:distributed}).

While the previous algorithms require exponential time, we also present
polynomial time algorithms at the cost of a slightly worse competitive ratio of
$O((\ell^2 \lg n \lg \ell) / \varepsilon^2)$ in Section~\ref{sec:fast}.

As a sample application of our newly introduced model we present a distributed
union find data structure~\cite{galler64improved,tarjan84worst} (also known as
disjoint-set data structure or merge-find data structure) in
Section~\ref{sec:union-find}:
There are $n$ items from a universe which are distributed over $\ell$
servers; each server can store at most $(1+\varepsilon) n/\ell$ items and each
item belongs to a unique set. The operation \emph{union} allows to merge two
sets. In our setting, we require that items from the same set must be assigned
to the same server. To reduce the network traffic, our goal is to minimize the
number of item moves during union operations. For example, when two sets are
merged which are assigned to different servers, then the items of one of the
sets must be reassigned to another server.  We compare against an optimal
offline algorithm which knows the initial assignment of all items and all union
operations in advance. We obtain the same competitive ratios as above. We
believe that this distributed union find data structure will be useful as a
subroutine for several problems such as merging duplicate websites in search
engines~\cite{broder97syntactic}.

We also show that our algorithms solve an online version of the $k$-way
partition problem in Section~\ref{sec:partition}.

\subsection{Organization}

We introduce our model
formally in Section~\ref{sec:preliminaries}.
To ease the readability, we first explore centralized online algorithms 
that efficiently collocate communication patterns 
for $\ell=2$ servers in Section~\ref{sec:two-servers}, and then study the general case
of $\ell > 2$ servers in Section~\ref{sec:l-servers}.
In Section~\ref{sec:distributed-fast} we show how the previously derived centralized
algorithms can be made distributed and how the algorithm can be implemented in
polynomial time at the cost of a slightly worse competitive ratio.
We provide the lower bounds in Section~\ref{sec:lowerbounds}.
Section~\ref{sec:applications} provides a distributed union find data structure
and a result for online $k$-way partitioning; these problems serve as sample
applications of the problem we study.  After reviewing related work in
Section~\ref{sec:related}, we conclude our contribution in
Section~\ref{sec:conclusion}.

\section{Model}
\label{sec:preliminaries}

We start by formally introducing the model which we will be studying in this paper.
We consider a set of vertices $V$ (e.g., a set of virtual machines) which
interact according to an initially unknown communication pattern, which can be
represented as a communication graph $G = (V,E)$ with $n=|V|$ vertices and
$m=|E|$ edges. The vertices of $G$ are partitioned into $\ell$ sets $V_0, \dots,
V_{\ell-1}$ where each $V_i$, forming a connected communication component (the
workload), has size\footnote{Note that in general $n/\ell$ is not always an
	integer and we would have to take rounding into account.  However, we ignore
	this technicality for better readability of the paper.}
$n/\ell$; the connected components of $G$ coincide with the sets $V_i$.  The
sets $V_i$ are the communication patterns which need to be recovered by the
online algorithm, henceforth called \emph{ground truth components}.

The communicating vertices $V$ need to be assigned
to $\ell$ \emph{servers} $S_0, \dots, S_{\ell-1}$. 
Accordingly, we define an \emph{assignment} (the embedding) which is a
function from the vertices to the servers.  
The \emph{load} of a server $S_j$ is
the number of vertices that are assigned to it.  
An assignment is
\emph{valid} if each server has load at most $n/\ell + K$ and 
we call $n/\ell + K$ the
\emph{capacity} of the servers and
$K$ the \emph{augmentation}. If $K=0$, the total server capacity
exactly matches the number of vertices.
The \emph{available capacity} of a server is the 
difference between the
server's capacity and its load.
An assignment is \emph{perfectly
balanced} if each server has load exactly $n/\ell$.  
We assume that when the
algorithm starts, we have a perfectly balanced assignment. 
We will write $V(S_j)$ to denote 
the set of vertices assigned to server
$S_j$ and $\init(S_j)$ for the set of vertices \emph{initially} assigned to
server $S_j$. We say that an assignment is a \emph{perfect partitioning} if it
satisfies $\{V(S_0), \dots, V(S_{\ell-1})\} = \{V_0, \dots, V_{\ell-1}\}$, 
i.e., the
vertices on the servers coincide with the 
connected components of~$G$.

The communication graph $G = (V,E)$ 
is revealed by an adversary in an 
\emph{online manner}, as a sequence of edges 
$\sigma = (e_1, \dots, e_r)$, 
where $r$ denotes the number of communication requests and $e_i \in E$ for each $i$. Note that 
the adversary can only
provide edges which are present 
in $E$ and that each edge can appear multiple
times in the sequence of edges. We assume 
that the sequence of the edges
provided by the adversary reveals the 
ground truth components $V_i$, i.e., after
having seen all edges in $\sigma$ the algorithm can compute the
connected components of $G$ which (by assumption) coincide with the ground truth components $V_i$.
We present an illustration of the model in Figure~\ref{fig:model}.

\begin{figure}
  \begin{centering}
    \includegraphics[width=.99\columnwidth]{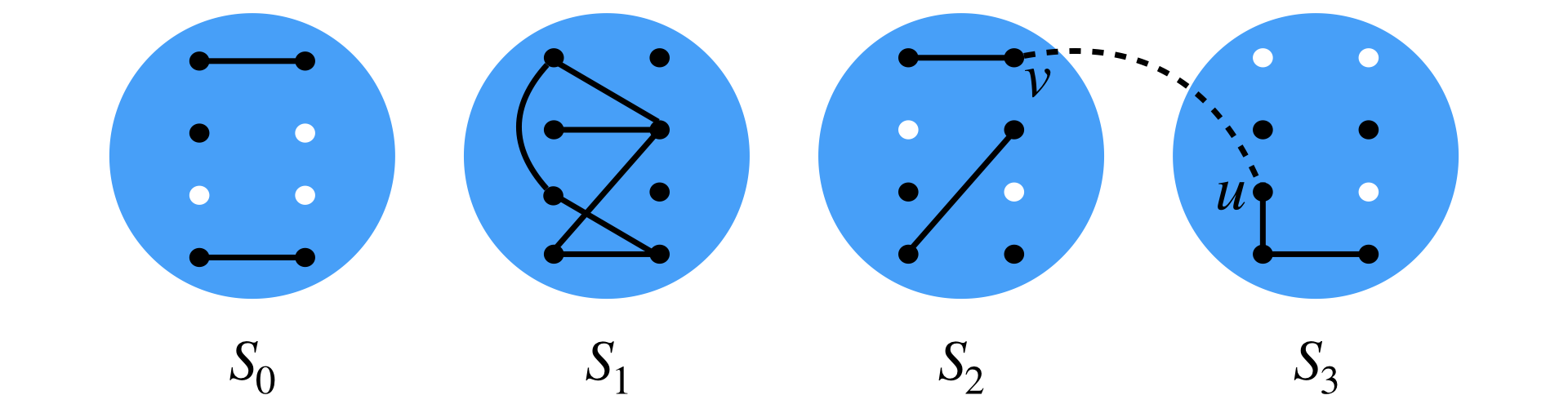}
	\caption{An illustration of the model we consider. In the picture there are
		$\ell = 4$ servers each depicted by a blue circle. Vertices assigned to
		a server are represented by black dots whereas white dots represent
		unused server capacities. Note that there are $n = 24$ vertices and each
		server has capacity $(1+\varepsilon)n/\ell = 8$ for $\varepsilon = 1/3$.
		In the picture, server $S_0$ has load $5$ and server $S_1$ has load $8$.
		When two vertices communicated, we draw an edge between them depicted by
		a black line. Observe how this naturally gives rise to connected
		components and note that $S_1$ contains a ground truth component of size
		$n/\ell = 6$. If the adversary were to insert the edge $(u,v)$ next, the
		algorithm could, for example, move the connected component containing
		$v$ to $S_3$ at cost $2\alpha$.}
	\label{fig:model}
  \end{centering}
\end{figure}

Now an online algorithm must 
iteratively change the assignment such that
eventually the assignment is a perfect partitioning.

The 
reassignment needs 
to be done while minimizing certain
communication and migration cost.  
If an edge $e = (u,v)$ provided by the
adversary has both endpoints in the same server $S_i$
at the time of the request, 
an algorithm incurs no costs. If
$u$ and $v$ are in different servers $S_i$ and $S_j$, 
then their communication
cost is $1$. Reassigning, i.e., \emph{moving},
a vertex $u$ from a server
$S_i$ to a server $S_j$ costs $\alpha > 1$.

When measuring the cost of an 
online algorithm, we will compare against an optimal 
offline algorithm denoted by \OFF.
\OFF has \emph{a priori} knowledge of
the communication graph $G = (V,E)$
as well as the 
given the sequence of all edges 
$\sigma=(e_1, \dots, e_r)$.
In other
words, \OFF can compute the assignment of
vertices to servers  
which provides the minimum
migration cost from the initial assignment.

Now let the cost paid by an online algorithm 
be denoted by \ALG and let the cost
of the optimal offline algorithm be denoted by
\OFF.
We consider the design of an online 
algorithm $\ALG$ 
which minimizes the (strict)
\emph{competitive ratio} defined as
 $\frac{\ALG}{\OFF}$. 

\paragraph{The Role of Connected Components}
We will briefly discuss how connected components are induced by subsequence of
$\sigma$ and how we will treat connected components in our algorithms. We then
give a reduction which helps us to avoid considering communication costs in our
proofs.

Recall that the adversary provides a sequence of edges $\sigma$ to an algorithm
in an online manner. As this happens, an algorithm can keep track of all edges
it has seen so far. Let this set of edges be $E'$.  Using the edges in $E'$, the
algorithm can compute the connected components $C_1, \dots, C_q$ which are
induced by $E'$.  Here, $q$ denotes the current number of connected components.

To obtain a better understanding of the relationship between the connected
components $C_i$ and the ground truth components $V_j$, we make four
observations:
(1)~When the algorithm starts, all connected components $C_i = \{ v_i \}$ only
consist of single vertices (because $\sigma$ has not yet revealed any edges).
(2)~When a previously unknown edge $e = (u,v)$ is revealed which has its
endpoints in different connected components $C_u$ and $C_v$, these connected
components get merged. 
(3)~Suppose a subsequence of $\sigma$ induces $q > \ell$ connected components
$C_i$ (i.e., $\sigma$ has not yet revealed the whole graph~$G$). Then for each
ground truth component $V_j$ there exists a subset $\C \subset
\{C_1,\dots,C_q\}$ of the connected components such that $V_j = \bigcup_{C \in \C} C$.
(4)~When an algorithm terminates (and, hence, $\sigma$
revealed all edges in $E$), there exists a one-to-one correspondence between the
connected components $C_i$ and the ground truth components $V_j$. 

By assumption on the input from the adversary, when all of $\sigma$ was
revealed, $E'$ reveals the ground truth components $V_0, \dots, V_{\ell-1}$.  Thus, in
total there will be exactly $n - \ell$ edges connecting vertices from different
connected components.

All of the algorithms we consider in this paper have the property that they
always assign vertices of the same connected component to the same server.
This property implies that the communication cost paid by such an algorithm is
bounded by its moving cost (we prove this in the following lemma). Hence, in the
rest of the paper we only need to bound the moving costs of our algorithms to
obtain a bound on their total costs.

\begin{lemma}
\label{lem:reduction}
	Suppose an algorithm $\mathcal{A}$ always assigns all vertices of the same
	connected component to the same server and pays $\cost$ for moving vertices.
	Then its communication cost is at most $\cost$. Furthermore, its total cost is
	at most $2\cost$.
\end{lemma}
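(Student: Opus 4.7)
The plan is to charge the communication cost of $\mathcal{A}$ against its own moving cost. First I would observe that, under the hypothesis that $\mathcal{A}$ always keeps each connected component on a single server, the only edges on which $\mathcal{A}$ can pay any communication cost are ``merging'' edges: if an edge $e=(u,v)$ is revealed with $u$ and $v$ currently on different servers, then $u$ and $v$ cannot already lie in the same connected component (otherwise they would already share a server by the invariant), so $e$ must be the first revealed edge between their two components $C_u$ and $C_v$. In particular, repetitions of previously-seen edges generate no communication cost, and neither do mergers in which $C_u$ and $C_v$ already sit on the same server because of earlier (possibly proactive) moves of $\mathcal{A}$.

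Next, for every merging edge on which $\mathcal{A}$ does pay a unit of communication cost, I would argue that $\mathcal{A}$ must subsequently perform at least one vertex move. Since $C_u$ and $C_v$ sit on different servers at the moment of the request, and the invariant requires $C_u \cup C_v$ to reside on a single server immediately after the request, at least one vertex (from whichever side is migrated) must change servers. Each such move contributes at least $\alpha > 1$ to $\cost$, so summing over all communication-cost events I obtain $(\text{communication cost}) \le (\text{number of moves}) = \cost/\alpha \le \cost$, which gives the first statement. The total cost of $\mathcal{A}$ is the sum of its communication and moving costs, and therefore at most $\cost + \cost = 2\cost$, which is the second statement.

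The only subtle point is keeping the distinction between ``merging edges'' and ``merging edges that actually generate communication cost,'' since proactive moves by $\mathcal{A}$ can eliminate the communication cost of some mergers. Fortunately this only strengthens the inequality underlying the charging argument, so no further case analysis is required; the proof reduces to the two observations above plus the strict inequality $\alpha > 1$.
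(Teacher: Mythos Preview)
Your proposal is correct and follows essentially the same approach as the paper's proof: both charge each unit of communication cost to the at-least-$\alpha$ units of moving cost that the algorithm must incur immediately afterward to restore the collocation invariant, and then use $\alpha>1$ to conclude. Your additional framing in terms of ``merging edges'' and the remark about proactive moves are sound elaborations, but the underlying argument is the same per-event comparison the paper uses.
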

\begin{proof}
	Suppose the adversary provides an edge $(u,v)$. We consider two cases.
	\emph{Case~1:} $u$ and $v$ are assigned to the same server. Then
	$\mathcal{A}$ does not pay any communication costs. \emph{Case~2:} $u$~and
	$v$ are
	assigned to connected components $C_u$ and $C_v$ on different servers. Then the
	algorithm needs to pay $1$ communication cost.  However, in this case
	$\mathcal{A}$ must move $C_u$ or $C_v$ to a different server at the cost of at
	least $\alpha > 1$.  Hence, the moving cost is larger than the communication
	cost. We conclude that $\mathcal{A}$'s total communication cost is at most
	$\cost$.  By summing the two quantities, we obtain the second claim of the
	lemma.
\end{proof}

While in Lemma~\ref{lem:reduction} we have shown that algorithms which always
collocate connected components immediately are efficient w.r.t.\ their total
cost, in Section~\ref{sec:connected-components} we show that any efficient
algorithm must satisfy a similar (slightly more general) property.

Throughout the rest of the paper, we write $|C|$ to denote the number of
vertices in a connected component $C$. For a vertex $u$, we write $C_u$ to
denote the connected component $C$ which contains $u$.

\section{Online Partition for Two Servers}
\label{sec:two-servers}

In this section, we consider the problem of learning
a communication graph with few moves
with \emph{two} servers. 
As we will see later, the concepts introduced in this
section will be useful when solving the problem
with $\ell > 2$ servers.  We derive the following result.

\begin{theorem}
\label{thm:2servers}
	Consider the setting with two servers of capacity $(1+\varepsilon)n/2$ for
	$\varepsilon \in (0,1)$, i.e., the augmentation is $\varepsilon n/2$. Then
	there exists an algorithm with competitive ratio $O( (\lg n) / \varepsilon)$.
\end{theorem}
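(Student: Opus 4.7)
The plan is to exhibit an algorithm that maintains the invariant that every connected component of the currently revealed subgraph resides entirely on one server. By Lemma~\ref{lem:reduction} this reduces the analysis to bounding only the moving cost. The merging rule I would adopt is the classical ``move the smaller component'' rule: when an edge $(u,v)$ with $u,v$ on different servers is revealed, move the smaller of $C_u, C_v$ to the server containing the larger. This gives the standard doubling argument---each time a vertex is moved, its containing component at least doubles in size, so any vertex is relocated at most $\lceil \lg n \rceil$ times in total, yielding an unconditional bound of $O(n \lg n)$ moves.

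The principal algorithmic difficulty is that the capacity bound $(1+\varepsilon)n/2$ may not admit the smaller component on its intended destination. To deal with this I would introduce a size threshold $\tau$ of order $\varepsilon n$ and treat \emph{small} components (size $\leq \tau$) and \emph{large} components separately. Small components can be absorbed whenever the augmentation allows; if the destination is too full, the algorithm evicts other small components back to the source server, which preserves the doubling invariant since all evicted components are still small and can be re-merged later. Large components each occupy an $\Omega(\varepsilon)$ fraction of the total capacity, so at any time there are at most $O(1/\varepsilon)$ of them; merges that involve large components are handled by a tailored rule (possibly moving the larger component, or performing a swap of two large components across servers) that still admits a charging argument against $\OFF$.

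To convert these move bounds into the stated competitive ratio, I would run a charging argument. Each move of $\ALG$ is charged either to an $\OFF$ move of the same vertex (in which case the doubling argument bounds the number of charges per $\OFF$ move by $O(\lg n)$) or to a ``witness'' event in the edge sequence that any valid final configuration of $\OFF$ is forced to pay an $\Omega(\alpha \varepsilon)$ amount to absorb. The $\lg n$ factor then comes from the doubling, while the $1/\varepsilon$ factor reflects the fact that with only $\varepsilon n/2$ augmentation $\OFF$ has very little slack to dodge those witness events. I expect the hardest step to be precisely the large-component part of the charging argument when $\varepsilon$ is small: large-component moves are individually expensive and cannot always be paired with an $\OFF$ move of the same vertex, so linking them to an $\OFF$ \emph{cost} will likely require a potential function measuring the ``distance'' between the current partition and the two candidate ground-truth labellings of $V_0, V_1$, where each problematic merge of $\ALG$ either decreases the potential or corresponds to an unavoidable structural change that $\OFF$ must also pay for.
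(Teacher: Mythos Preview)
Your plan has a genuine gap in both the algorithmic mechanism and the analysis.

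\textbf{The eviction step breaks the doubling bound.} You rely on the fact that each vertex is moved at most $O(\lg n)$ times because its component at least doubles at every move. But when the destination is full and you ``evict other small components back to the source server,'' those evicted components are moved \emph{without} being merged into anything larger. The doubling argument simply does not cover those moves, and nothing in your sketch bounds how many evictions occur. An adversary can force repeated evictions of the same small component, so without a separate accounting you cannot even get $O(n\lg n)$ total moves, let alone the competitive ratio.

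\textbf{The large-component treatment is a placeholder, not a mechanism.} Saying there are $O(1/\varepsilon)$ large components and that merges among them are handled by ``a tailored rule (possibly moving the larger component, or performing a swap)'' together with an unspecified potential function is not a plan; it is a list of things one might try. The difficulty is exactly here: a single large-component move costs $\Omega(\varepsilon n)$, and you have not tied any such move to a comparable cost of $\OFF$.

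What the paper actually does is structurally different and avoids both issues. First it pins down $\OFF$ exactly: $\OFF = 2\alpha\Delta$ where $\Delta = \min_j |\init(S_0)\cap V_j|$ (Lemma~\ref{lem:cost-opt-2server}). Then it runs two algorithms in sequence. The first is a \emph{majority voting} variant of small-to-large (Algorithm~\ref{algo:majority-voting}): whenever a component's size crosses a power of~$2$, it is relocated to the server from which the majority of its vertices originated. This single extra rule guarantees that the assignment never drifts more than $4\Delta$ from the initial one, so the algorithm uses augmentation at most $4\Delta$ and costs $O(\alpha\Delta\lg\Delta)$, i.e.\ it is $O(\lg n)$-competitive \emph{provided} $4\Delta \le \varepsilon n/2$. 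If the augmentation is ever exceeded, that event certifies $\Delta = \Omega(\varepsilon n)$; the paper then switches to the Small--Large--Rebalance algorithm (Algorithm~\ref{algo:small-large-rebalance}), whose $O(n\lg n + (\Delta\lg n)/\varepsilon)$ moves are now $O((\lg n)/\varepsilon)$-competitive precisely because $\OFF = \Omega(\alpha\varepsilon n)$.

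The idea you are missing is this two-regime structure driven by the parameter $\Delta$: one algorithm that stays close to the initial assignment (and hence has cost $O(\OFF\cdot\lg n)$ but unbounded augmentation in terms of $\varepsilon$), and a fallback whose $\Theta(n\lg n)$ overhead is acceptable only once you know $\OFF$ is large. Your threshold-and-evict scheme tries to do both at once and, as sketched, achieves neither.
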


The proof is organized as follows. 
We first characterize
the optimal solution by \OFF 
in Section~\ref{sec:cost-opt}. 
We then present an
algorithm which is efficient whenever 
\OFF incurs ``significant cost'', 
in Section~\ref{sec:small-large-rebalance}. 
In Section~\ref{sec:majority-voting},
we describe an
algorithm which is efficient whenever the solution by 
\OFF is ``cheap''.
We prove
Theorem~\ref{thm:2servers} via a combination of the 
two algorithms in Section~\ref{sec:proof-thm-2servers}.

\subsection{Costs of \OFF}
\label{sec:cost-opt}

The following lemma gives a precise characterization of the cost paid by \OFF in
the two server case. It introduces a parameter $\Delta$ which equals the number
of vertices moved by \OFF and which we will be using throughout the rest of this
section.

\begin{lemma}
\label{lem:cost-opt-2server}
	Suppose $\ell = 2$ and the vertices initially assigned 
	to the servers $S_i$
	are given by the sets $\init(S_i)$ for $i = 0,1$. 
	Then the cost of \OFF is
	$2 \alpha \Delta$, where
	\begin{align*}
		\Delta = \min\{ |\init(S_0) \cap V_0|, |\init(S_0) \cap V_1| \}.
	\end{align*}
	It follows immediately that $\Delta \leq n/4$ (as $|\init(S_0)| = n/2$).
\end{lemma}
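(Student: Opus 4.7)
The plan is to exploit the very constrained combinatorial structure of the two-server setting: there are only two possible perfect partitionings (ground truth components assigned to servers in the two possible ways), and we can count the moves required to reach each one exactly.

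First I would introduce four quantities capturing the overlap of the initial assignment with the ground truth components:
$a = |\init(S_0) \cap V_0|$, $b = |\init(S_0) \cap V_1|$, $c = |\init(S_1) \cap V_0|$, $d = |\init(S_1) \cap V_1|$.
Since the initial assignment is perfectly balanced and the $V_i$ partition $V$, the identities $a+b=c+d=a+c=b+d=n/2$ force $a=d$ and $b=c$. In particular $\Delta=\min(a,b)$ and, since $a+b=n/2$, one immediately gets $\Delta\le n/4$.

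Next I would observe that any algorithm (online or offline) must eventually reach a perfect partitioning, of which there are exactly two possibilities: either $V(S_0)=V_0,V(S_1)=V_1$ (cost $b+c=2b$ moves, because exactly the $b$ vertices of $\init(S_0)\cap V_1$ and the $c$ vertices of $\init(S_1)\cap V_0$ are misplaced), or the swap $V(S_0)=V_1,V(S_1)=V_0$ (cost $a+d=2a$ moves by the symmetric count). Since each misplaced vertex must be moved at least once, \OFF performs at least $\min\{2a,2b\}=2\Delta$ moves, giving a lower bound of $2\alpha\Delta$ on its total cost.

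For the matching upper bound, I would exhibit a concrete \OFF strategy: perform the $2\Delta$ moves for the cheaper of the two target partitionings upfront, before processing any edge of $\sigma$. These moves can be interleaved one-at-a-time; starting from load $(n/2,n/2)$ with capacity $(1+\varepsilon)n/2$, the transient imbalance during a single swap is at most one vertex, so the valid-assignment constraint is preserved whenever $\varepsilon \ge 2/n$ (which we may assume as we are in the regime where augmentation is meaningful). After these moves \OFF is in the chosen perfect partitioning, so every subsequent edge in $\sigma$ has both endpoints on the same server and contributes zero communication cost; the total cost is exactly $2\alpha\Delta$.

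The only mildly delicate point is the lower bound argument: one must rule out the possibility that \OFF pays a few communications in order to avoid some moves. But reaching a perfect partitioning is a hard constraint of the model, so the lower bound on the number of moves cannot be traded away for communication cost, and no subtleties arise. The two-server case is what makes everything clean — with only two possible final partitionings, the combinatorial minimum is transparent; the analogous statement for $\ell>2$ servers would require genuinely more work.
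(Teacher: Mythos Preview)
Your argument is correct and follows essentially the same route as the paper: identify the two possible perfect partitionings, count the moves for each via the overlaps $a,b,c,d$, and take the minimum. Your version is in fact more explicit than the paper's, which simply asserts that the move count is minimized without writing out the $a=d$, $b=c$ identities or separating the lower and upper bounds; your added care about transient capacity during the swap is a detail the paper omits entirely.
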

\begin{proof}
	Recall that our model forces \OFF to provide a final assignment satisfying
	$\{V(S_0), V(S_1)\} = \{V_0, V_1\}$, i.e., \OFF must produce a final assignment
	which coincides with the ground truth components (even if paying for each
	communication request individually and not relocating any vertices might be
	cheaper). Thus, we can assume that \OFF performs all vertex moves in the
	beginning, to avoid paying any communication cost.  Since the edge sequence
	$\sigma = (e_1, \dots, e_r)$ provided by the adversary is assumed to reveal
	the connected components $V_0$ and $V_1$, \OFF can compute $V_0$ and $V_1$
	before it performs any moves.

	As there are only two servers, one of them must contain at least half of
	the vertices from $V_0$ in the initial assignment. Now 
	let us first assume that this
	server is $S_0$; this setting is illustrated in
	Figure~\ref{fig:cost-opt-2server}. In this case, \OFF can move the $\Delta$
	vertices in $\init(S_0) \cap V_1$ to $S_1$ and those in $\init(S_1) \cap V_0$ to
	$S_0$. It is easy to verify that this yields an assignment satisfying
	$\{V(S_0), V(S_1)\} = \{V_0, V_1\}$ and that the moving cost is minimized.
	Further, the cost for this reassignment is exactly $2 \alpha \Delta$.

	The second case where $S_1$ contains more than half of the vertices from
	$V_0$ in the initial assignment is symmetric.
\end{proof}

While in Lemma~\ref{lem:cost-opt-2server} we have presented the lower bound
w.r.t.\ server $S_0$, we could also express the lower bound in terms of server
$S_1$. We then obtain the following equality:
\begin{align*}
	\Delta = \max_{i=0,1} \min_{j=0,1} | \init(S_i) \cap V_j |.
\end{align*}

\begin{figure}
  \begin{centering}
    \includegraphics[width=.99\columnwidth]{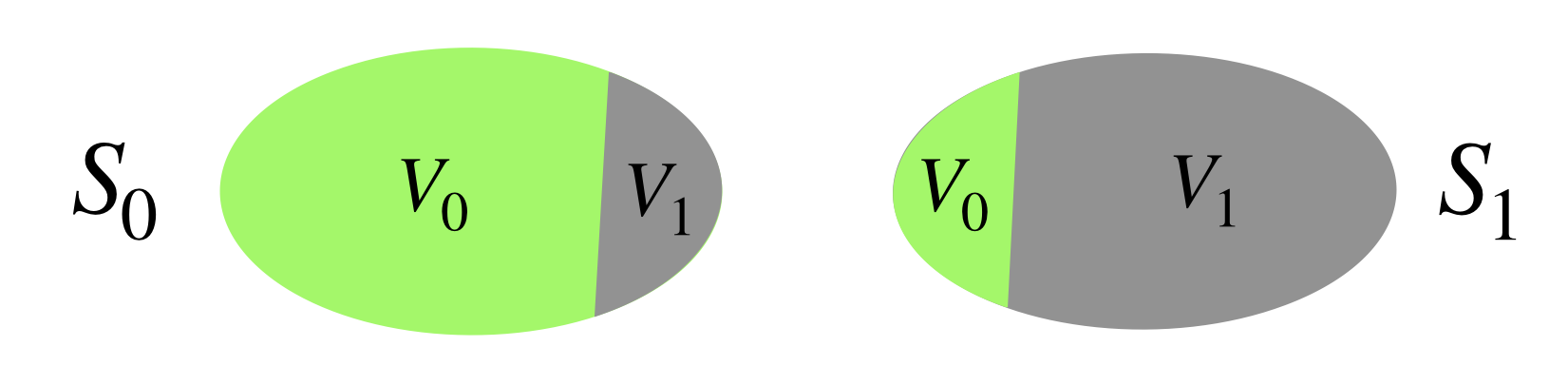}
	\caption{The initial assignment considered in the proof of
		Lemma~\ref{lem:cost-opt-2server}. The green and grey areas of the
		servers correspond to subsets of $V_0$ and $V_1$. Server $S_0$ ($S_1$)
		contains most of the vertices from $V_0$ ($V_1$). Here, \OFF would move
		the green part from $S_1$ to $S_0$ and the grey part from $S_0$ to
		$S_1$.}
	\label{fig:cost-opt-2server}
  \end{centering}
\end{figure}

\subsection{The Small--Large--Rebalance Algorithm}
\label{sec:small-large-rebalance}

A natural idea to obtain a small number of vertex moves is to proceed as
follows. Whenever two vertices $u$ and $v$ belonging to different
connected components communicate, the algorithm merges 
their connected components. 
If the two components were already assigned 
to the same server, no vertex moves are required. 
If $u$ and $v$ are
assigned to different servers, we move the smaller connected 
component to the
server of the larger connected component. This algorithm 
is efficient in that it
never performs more than $O(n \lg n)$ vertex moves (see
		Lemma~\ref{lem:small-large-only}).

However, the algorithm could require much augmentation, as it
does not account for server capacities. 
Thus, we propose the following extension called the
\emph{Small--Large--Rebalance Algorithm}: Whenever a server exceeds its
capacity, the algorithm computes a perfectly balanced assignment of the
vertices which respects the previously observed connected components; we call
this a \emph{rebalancing step}.  We provide pseudocode in
Algorithm~\ref{algo:small-large-rebalance}.

Section~\ref{sec:rebalancing:two-servers} shows how such a rebalancing step can
be implemented in $O(n^2)$ time. Later, we show that there can be at most
$O((\lg n)/\varepsilon)$ such rebalancing steps which implies that the total
running time Algorithm~\ref{algo:small-large-rebalance} is $O((n^2 \lg n)/\varepsilon)$.

Note that Algorithm~\ref{algo:small-large-rebalance} also works in the setting
with $\ell$ servers for $\ell > 2$. We will analyze this more general algorithm in
Section~\ref{sec:small-large-rebalance-lservers}.

\subsubsection{Analysis}
To analyze Algorithm~\ref{algo:small-large-rebalance}, we first consider the
algorithm from the first paragraph which does not have the rebalancing step.
When the algorithm moves a smaller component to the server of a larger
component, we call this a \emph{small-to-large step}.

\begin{lemma}
\label{lem:small-large-only}
	Consider the algorithm which always moves the smaller connected component to
	the server of the larger connected component when it obtains an edge
	between vertices from different connected components. The algorithm moves
	each vertex at most $O(\lg n)$ times. Its total number of vertex moves is
	$O(n \lg n)$.
\end{lemma}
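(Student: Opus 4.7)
The plan is to use the classical ``union by size'' amortized analysis: I would bound the number of times any individual vertex can be moved, and then sum over all $n$ vertices. The key invariant to track for a vertex $v$ is the size $|C_v|$ of the connected component containing $v$ at the moment just after a move that involves $v$.

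First I would observe that a vertex $v$ is moved only when the edge $(u,v')$ revealed by the adversary causes its connected component $C_v$ to be merged with another connected component $C$ on a different server, and $|C_v| \le |C|$ (otherwise the algorithm would move $C$ instead of $C_v$). In that case, $v$'s new connected component has size $|C_v| + |C| \ge 2 |C_v|$. Hence every move of $v$ at least doubles the size of the component containing $v$. Since $v$ starts in a singleton component of size $1$ and the component containing $v$ can never exceed $n$, the number of moves of $v$ is at most $\lfloor \lg n \rfloor$, giving the first claim.

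Summing over all $n$ vertices gives a total of $O(n \lg n)$ moves, which is the second claim. This also yields the total cost bound via multiplication by $\alpha$, but the lemma only counts moves.

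There is essentially no obstacle here: the argument is the standard doubling argument for small-to-large merging. The only mild subtlety is making sure the tie-breaking (when $|C_v| = |C|$) is handled, but one can just adopt the convention that in case of equality one of the two sides is declared ``smaller,'' and the doubling inequality $|C_v^{\text{new}}| \ge 2 |C_v^{\text{old}}|$ still holds. No rebalancing steps appear in this lemma's algorithm, so no additional accounting is needed; rebalancing will be analyzed separately in the following subsection.
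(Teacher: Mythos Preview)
Your proposal is correct and follows essentially the same approach as the paper: both use the standard ``union by size'' doubling argument, observing that each time a vertex is moved its component size at least doubles, so it can be moved at most $O(\lg n)$ times, and then summing over all vertices. The paper phrases this via a token-accounting scheme, but the underlying argument is identical to yours.
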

\begin{proof}
	Consider any vertex $v$. 
	We use the following accounting:
	Whenever $v$ is in the a smaller component that is
	moved, add a token to $v$. Now observe that whenever $v$ gains a token, the
	size of its component at least doubles. This implies that $v$ can be in the
	smaller component only $O(\lg n)$ times. Thus, $v$ cannot accumulate more
	than $O(\lg n)$ tokens.  Since this holds for each of the $n$ vertices, the
	total number of moves is $O(n \lg n)$.
\end{proof}

The following lemma provides the analysis for
Algorithm~\ref{algo:small-large-rebalance} which performs small-to-large steps
and rebalancing steps.

\begin{lemma}
\label{lem:small-large-rebalance}
	Suppose both servers have capacity $(1+\varepsilon)n/2$, i.e., the
	augmentation is $\varepsilon n/2$ for $\varepsilon \in (0,1)$.
	Then Algorithm~\ref{algo:small-large-rebalance} performs
	$O((\lg n) / \varepsilon)$ rebalancing steps and
	$O((n \lg n) / \varepsilon)$ vertex moves.
\end{lemma}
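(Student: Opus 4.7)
The strategy is to separate the cost into the two kinds of moves performed by the algorithm: small-to-large moves and rebalancing moves. For the small-to-large moves, Lemma~\ref{lem:small-large-only} already yields a total of $O(n \lg n)$ over the entire execution: the doubling argument in that lemma counts exactly these moves and is not affected by rebalancing, since rebalancing relocates whole components without changing their sizes. So the only thing left is to bound the number of rebalancing steps and the cost each incurs.

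First, I would upper bound the number of rebalancing steps by $O((\lg n)/\varepsilon)$ via a potential-style argument tying rebalancing to small-to-large moves. Immediately after any rebalancing step the assignment is perfectly balanced, so each server has load exactly $n/2$. For the \emph{next} rebalancing to be triggered, some server (say $S_0$) must reach load strictly greater than $(1+\varepsilon)n/2$. Between two consecutive rebalancings no rebalancing moves happen, so the only load changes on $S_0$ come from small-to-large moves. Therefore the number of small-to-large moves whose destination is $S_0$ during this interval must exceed $\varepsilon n/2$. Since every small-to-large move is counted against exactly one destination server and the total number of small-to-large moves over all time is $O(n\lg n)$ by Lemma~\ref{lem:small-large-only}, the number of rebalancing steps is at most $O(n\lg n)/(\varepsilon n/2) = O((\lg n)/\varepsilon)$, proving the first part of the claim.

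Next, I would bound the moves performed within a single rebalancing step by $O(n)$: a rebalancing step produces a perfectly balanced assignment respecting connected components, so at most all $n$ vertices are relocated, and trivially no more than that. Multiplying by the bound on the number of rebalancing steps gives $O((n\lg n)/\varepsilon)$ rebalancing moves in total. Adding the $O(n\lg n)$ small-to-large moves from Lemma~\ref{lem:small-large-only} yields an overall bound of $O((n\lg n)/\varepsilon)$ vertex moves, as claimed.

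The main obstacle is the counting argument in the second paragraph: one must be careful that ``the only way to push a server over capacity between rebalancings is via small-to-large moves'' and that destinations are charged consistently when successive rebalancings are triggered by different servers. Once one observes that each small-to-large move has a unique destination server and that the load of the overflowing server grew by at least $\varepsilon n/2$ since it was last reset to $n/2$ by the previous rebalancing (or by the initial perfectly balanced assignment for the first rebalancing), the charging is unambiguous and the rest of the argument follows routinely.
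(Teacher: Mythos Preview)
Your proposal is correct and follows essentially the same approach as the paper: separate the moves into small-to-large and rebalancing, invoke Lemma~\ref{lem:small-large-only} for the former, and bound the number of rebalancings by observing that at least $\varepsilon n/2$ small-to-large moves must occur between consecutive rebalancings since the overflowing server's load grew from $n/2$ to beyond $(1+\varepsilon)n/2$. Your explicit remark that rebalancing does not change component sizes (so the doubling argument of Lemma~\ref{lem:small-large-only} remains valid) and your care about charging each small-to-large move to its unique destination server are both good clarifications that the paper leaves implicit.
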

\begin{proof}
	We prove the bound on the number of vertex moves; the claim about the number
	of rebalancing steps is proved along the way.  Note that all vertex moves
	performed by the algorithm originate from either small-to-large steps or
	from rebalancing steps. We bound the number of each of these vertex moves
	separately.

	Note that the token-based argument from Lemma~\ref{lem:small-large-only}
	still applies to the small-to-large steps of
	Algorithm~\ref{algo:small-large-rebalance}. This implies that the total
	number of vertex moves due small-to-large steps is $O(n \lg n)$.

	Now consider the vertex moves caused by the rebalancing steps and recall
	that the initial assignment is perfectly balanced. Whenever a server
	exceeds its load, the small-to-large steps of the algorithm must have moved
	at least $\varepsilon n/2$ vertices (because the augmentation of one of the
	servers is exceeded).  This can only happen
	$O((n \lg n) / (\varepsilon n)) = O((\lg n) / \varepsilon)$ times since the
	total number of vertex moves due to small-to-large steps is $O(n \lg n)$.
	Hence, the number of rebalancing steps is at most $O((\lg n) / \varepsilon)$.
	Since each rebalancing step performs $O(n)$ vertex moves, the lemma follows.
\end{proof}

\begin{algorithm}[t!]
\caption{The Small--Large--Rebalance Algorithm}\label{algo:small-large-rebalance}
\begin{algorithmic}[1]
\Input{A sequence of edges $\sigma=(e_1, \dots, e_r)$}
\Procedure{SmallLargeRebalance}{$e_1, \dots, e_r$}
	\For{$i = 1, \dots, r$}
		\State $(u, v) \leftarrow e_i$
		\If{$C_u$ and $C_v$ are not assigned to the same server}
			\State \Comment{We must move $C_u$ and $C_v$ to the same server.}
			\State Assume w.l.o.g.\ that $|C_u| \leq |C_v|$
			\If{the server of $C_v$ has available capacity $|C_u|$}
				\State \label{line:small-large-rebalance:small-to-large}
			   	Move $C_u$ to the server of $C_v$
				\State \Comment{Small-to-large step}
			\Else			\Comment{Rebalancing step}
				\State \label{line:rebalance}
				Move to a perfectly balanced assignment respecting the connected
				components
			\EndIf
		\EndIf
		\State Merge $C_u$ and $C_v$
	\EndFor
\EndProcedure
\end{algorithmic}
\end{algorithm}

\subsubsection{More Efficient Rebalancing}
\label{sec:efficient-small-large-rebalance}

We next propose a
better rebalancing strategy
which makes Algorithm~\ref{algo:small-large-rebalance} 
more efficient. So far, we used $\Theta(n)$ vertex moves for each
rebalancing operation at the cost of $\Theta(\alpha n)$. We now bring the
rebalancing cost down to $O(\OFF)$.

We adjust Algorithm~\ref{algo:small-large-rebalance} in the following way:
Instead of rebalancing by taking \emph{any} perfectly balanced assignment
respecting the connected components (Line~\ref{line:rebalance}), we choose a
perfectly balanced assignment respecting the connected components \emph{which
minimizes the number of vertex moves from the initial solution}. We call such an
assignment \emph{cheap}.

To find a cheap assignment, the algorithm could simply do the following:
(1)~Recall the initial assignment. (2)~Exhaustively enumerate all perfectly
balanced assignments respecting the connected components. (3)~Among all of these
assignments find one which is cheap.  While such a simple algorithm can in
principle be computationally costly, we can here exploit the online model of
computation which allows us unlimited computational power.  In
Section~\ref{sec:fast} we show how less efficient rebalancing strategies can be
implemented in polynomial time and we obtain slightly worse competitive ratios.

With the improved rebalancing strategy, we obtain
Proposition~\ref{prop:small-large-rebalance}.

\begin{proposition}
\label{prop:small-large-rebalance}
	Suppose all servers have capacity $(1+\varepsilon)n/2$, $\varepsilon > 0$.
	Then the number of vertex reassignments performed by
	Algorithm~\ref{algo:small-large-rebalance} with more efficient rebalancing
	is $O(n \lg n + (\Delta \lg n) / \varepsilon)$, where $\Delta$ is the number
	of vertex moves used by \OFF.
\end{proposition}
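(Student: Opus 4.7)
The plan is to extend the analysis of Lemma~\ref{lem:small-large-rebalance} by replacing the trivial $O(n)$ bound on the number of moves per rebalancing step with a bound of the form $O(\Delta)$ plus the small-to-large movement accumulated since the previous rebalancing. Both the total number of small-to-large vertex moves ($O(n\lg n)$) and the total number of rebalancing steps ($O((\lg n)/\varepsilon)$) carry over unchanged from Lemma~\ref{lem:small-large-rebalance}: the first follows from Lemma~\ref{lem:small-large-only}, and the second from the fact that a rebalancing is triggered only after at least $\varepsilon n/2$ small-to-large vertex moves occur since the previous perfectly balanced assignment.

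The key observation is that \OFF's final assignment, which places $V_0$ on one server and $V_1$ on the other, is a feasible candidate for the cheap rebalancing at every point during the execution. Indeed, it is perfectly balanced since $|V_0|=|V_1|=n/2$, and every currently known connected component $C$ is contained in some ground-truth component $V_j$, so placing all of $V_j$ on a single server keeps $C$ entirely on one server. By Lemma~\ref{lem:cost-opt-2server}, \OFF's final assignment differs from the initial assignment in exactly $2\Delta$ vertex positions. Since the cheap rebalancing selects a perfectly balanced, component-respecting assignment $A^*_i$ minimizing the number of vertex positions that differ from the initial assignment, we conclude that at every rebalancing $i$ the assignment $A^*_i$ differs from the initial one in at most $2\Delta$ positions.

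Let $A^*_0$ denote the initial assignment, let $A^*_i$ denote the assignment installed by the $i$-th rebalancing, and let $s_i$ be the number of vertex-level changes caused by small-to-large steps occurring strictly between rebalancings $i-1$ and $i$. The assignment just before rebalancing $i$ differs from $A^*_{i-1}$ in at most $s_i$ positions, so by two applications of the triangle inequality (routing through the initial assignment), the number of vertices reassigned during rebalancing $i$ satisfies
\begin{align*}
  \mathrm{moves}(i) \;\leq\; s_i \;+\; \bigl(\text{diff}(A^*_{i-1},\text{initial})\bigr) \;+\; \bigl(\text{diff}(\text{initial},A^*_i)\bigr) \;\leq\; s_i + 2\Delta + 2\Delta.
\end{align*}
Summing over the $O((\lg n)/\varepsilon)$ rebalancings and using $\sum_i s_i = O(n\lg n)$ yields total rebalancing cost $O(n\lg n + (\Delta\lg n)/\varepsilon)$. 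Adding the $O(n\lg n)$ small-to-large moves gives the claimed bound.

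The main obstacle I anticipate is justifying that \OFF's fixed final assignment is a valid rebalancing candidate at \emph{every} step, since the set of currently revealed connected components changes throughout the execution. The resolving invariant is that every edge revealed so far lies within a single $V_j$, so each current connected component is a subset of some ground-truth component and is therefore placed on one server by \OFF; this must be stated explicitly in the write-up so that the inequality $\text{diff}(\text{initial},A^*_i)\leq 2\Delta$ applies uniformly in $i$.
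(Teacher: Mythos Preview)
Your proposal is correct and follows essentially the same approach as the paper: both arguments bound each rebalancing by routing through the initial assignment and use that \OFF's final partition is always a feasible candidate for the cheap rebalancing (so each $A^*_i$ is within $O(\Delta)$ of the initial assignment), then sum the per-rebalancing bound $s_i + O(\Delta)$ over the $O((\lg n)/\varepsilon)$ rebalancings. The paper phrases this via a ``rollback'' analogy (undo small-to-large moves, undo last rebalancing, redo to new cheap assignment) rather than the explicit triangle inequality you use, but the two formulations are equivalent.
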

\begin{proof}
	First, note that the number of vertex moves for moving smaller components to
	larger components (Line~\ref{line:small-large-rebalance:small-to-large}) is
	$O(n \lg n)$, by exactly the same arguments used in the proof of
	Lemma~\ref{lem:small-large-rebalance}.

	Second, we bound the number of vertex moves required for the rebalancing
	operations. Whenever the algorithm needs to rebalance, we can assume (for
	the sake of the analysis) that the algorithm makes the following three
	steps:
	(1)~Roll back all changes done by small-to-large moves
	(Line~\ref{line:small-large-rebalance:small-to-large}) \emph{since the last
	rebalancing operation}. Thus, after rolling back we have the same assignment
	as after the last rebalancing operation. 
	(2)~Roll back to the initial assignment (by undoing the last rebalancing
	operation).
	(3)~Move to a cheap assignment.

	Observe that Step~(1) and~(2) of the previous three step procedure increase
	the number of vertex moves only by a constant factor compared to when the
	algorithm does not roll back: 
	In total, Step~(1) only adds additional $O(n \lg n)$ vertex moves because
	each small-to-large move is undone exactly once. Step~(2) only doubles the
	number of vertex moves for moving to cheap assignments as each rebalancing
	is only undone once.

	Thus, we can complete the proof 
	if we can show that the total number of vertex
	moves for moving from the initial assignment to the cheap assignments is
	bounded by
	$O((\Delta \lg n) / \varepsilon)$.

	By Lemma~\ref{lem:small-large-rebalance}, the number of rebalancing steps is
	bounded by $O((\lg n) / \varepsilon)$.  Now we argue that for moving from the
	initial solution to each cheap assignment, the rebalancing moves at most
	$O(\Delta)$ vertices: Every time the algorithm computes a cheap rebalancing,
	the final solution obtained by \OFF is a perfectly balanced assignment
	respecting the connected components. Thus, the number of vertex moves to
	obtain a cheap rebalancing is bounded by the number of moves performed by
	\OFF which is $O(\Delta)$. This finishes the proof.
\end{proof}

\subsection{The Majority Voting Algorithm}
\label{sec:majority-voting}

We now present an algorithm which works well whenever the cost paid by \OFF is
small, i.e., when \OFF only needs to move few vertices.  The issue with
Algorithm~\ref{algo:small-large-rebalance} from
Section~\ref{sec:small-large-rebalance} is that
during its execution, it might deviate much from the initial assignment (and
thus move many vertices). The following algorithm has the property that it
always stays close to the initial assignment.

For ease of readability, we will often refer to the two servers as the
\emph{left} and \emph{right} servers, respectively, 
instead of calling them $S_0$ and $S_1$.

Our algorithm starts by coloring vertices on the left server yellow and on the
right server black. Throughout the execution of the algorithm, the vertices
will keep this initially assigned color. The algorithm then follows the idea of
always moving the smaller connected component to the server of the larger
connected component; we will refer to this as \emph{small-to-large step}. To
stay close to the initial assignment, whenever the number of vertices in a newly
merged connected component surpasses a power of $2$, the algorithm performs a
majority vote and moves the component to the server where more of its vertices
originate from.  More formally, we say that a set of vertices (e.g., a connected
component) has a \emph{yellow (black) majority} if it contains more yellow
(black) vertices than black (yellow) vertices. In the majority voting step, the
algorithm moves a component with a yellow (black) majority which is currently on
the right (left) server to the left (right) server. The pseudocode for this
procedure is stated in Algorithm~\ref{algo:majority-voting}.\footnote{Note that in Algorithm~\ref{algo:majority-voting} the following is
	possible when a component $C_u$ is merged with a component $C_v$: $C_u$ is
	moved from $S$ to $S_v$ due to a small-to-large step and immediately after
	that $C_u \cup C_v$ is moved back to $S$ due to a majority-voting step.
	Thus, it would be more efficient to compute the result of the
	majority-voting step earlier and to move $C_v$ to $S$ immediately (without
	ever moving $C_u$ to $S_v$). This modification would be slightly more
	efficient but it would affect the competitive ratio of the algorithm only by
	at most a constant factor. Thus, to simplify our analysis, we ignore this
	modification.
}

\begin{algorithm}[t!]
\caption{The Majority Voting Algorithm}\label{algo:majority-voting}
\begin{algorithmic}[1]
\Input{A sequence of edges $\sigma=(e_1, \dots, e_r)$}
\Procedure{MajorityVoting}{$e_1, \dots, e_r$}
	\State Color all vertices assigned to the left server yellow and all
	vertices assigned to the right server black
	\For{$i = 1, \dots, r$}
		\State $(u,v) \leftarrow e_i$
		\State Suppose w.l.o.g.\ that $|C_u| \leq |C_v|$
		\If{$C_u$ and $C_v$ are on different servers}
			\State\label{line:small-to-large} Move $C_u$ to the server of $C_v$
									\Comment{Small-to-large step}
		\EndIf
		\State Merge $C_u$ and $C_v$
		\If{there exists an $i \in \mathbb{N}$ s.t.\ $|C_u| < 2^i$, $|C_v| <
			2^i$ and $|C_u \cup C_v| \geq 2^i$}		 \Comment{Majority voting step}
		\label{line:majority-vote}
			\If{$C_u \cup C_v$ has a yellow majority}
				\State Move $C_u \cup C_v$ to the left server
			\EndIf
			\If{$C_u \cup C_v$ has a black majority}
				\State Move $C_u \cup C_v$ to the right server
			\EndIf
		\EndIf
	\EndFor
\EndProcedure
\end{algorithmic}
\end{algorithm}

The reason for introducing the majority voting step is that it keeps the
assignments produced by the algorithm during its runtime close to the initial
assignment. Due to this property, we can show that the cost of
Algorithm~\ref{algo:majority-voting} is always close to the cost of \OFF. The
formal guarantees are stated in Proposition~\ref{prop:majority-voting}.

\begin{proposition}
\label{prop:majority-voting}
	Let $\Delta$ be the number of vertex moves performed by \OFF (see
	Section~\ref{sec:cost-opt}).  Then Algorithm~\ref{algo:majority-voting} is
	$O(\lg n)$-competitive and the load of both servers is bounded by $n/2 +
	4\Delta$.
\end{proposition}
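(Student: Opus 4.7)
The plan is to establish the two assertions of the proposition separately: the load bound of $n/2+4\Delta$ and the $O(\lg n)$-competitive ratio. Both rely on carefully tracking how the small-to-large and majority-voting steps interact with the (at most) $\Delta$ \emph{misplaced} vertices --- those whose initial server differs from their \OFF-assignment (i.e., the vertices that \OFF actually moves).

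For the load bound I would prove by induction on the algorithm's steps that $b_L + y_R \leq 4\Delta$, where $b_L$ is the number of initially-right-server (black) vertices currently on the left and $y_R$ is defined symmetrically. This suffices since the load of $S_0$ equals $n/2 - y_R + b_L$ (and symmetrically for $S_1$). The structural observation driving the induction is that within each ground truth $V_i$ only $\Delta/2$ vertices carry the "wrong" color, so any connected component $C\subseteq V_i$ whose size exceeds this amount has a strict majority in the "correct" color. The majority-vote step at each power-of-$2$ threshold beyond this size then places $C$ on its ground truth's \OFF-server; between votes, a small-to-large step can only dislodge such a large $C$ by merging it into an even larger same-ground-truth component on the wrong server, which the induction forbids. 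Thus only components of size $O(\Delta)$ can sit on the wrong server, and a careful summation yields the $4\Delta$ bound.

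For the competitive ratio I would invoke Lemma~\ref{lem:reduction} to reduce the algorithm's total cost to twice its moving cost and combine this with $\OFF = 2\alpha\Delta$ from Lemma~\ref{lem:cost-opt-2server}; it then suffices to bound the number of vertex moves by $O(\Delta\lg n)$. I decompose the moves into small-to-large moves and majority-voting moves and charge each to a misplaced vertex. For a small-to-large step of $C_u$ onto $C_v$'s server, the two components are in the same ground truth but on distinct servers, so (by the structural property used for the load bound) at least one of them is on the wrong server and therefore contains a misplaced vertex; I charge the $|C_u|$ moves of the step to that misplaced vertex. The doubling argument of Lemma~\ref{lem:small-large-only} shows that between consecutive charges to a given misplaced vertex its containing component at least doubles, so each of the $\Delta$ misplaced vertices receives at most $O(\lg n)$ charges. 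Majority-vote moves are handled analogously: a vote actually relocates the component only when it is off its majority's server, an event again traceable to a misplaced vertex near a power-of-$2$ crossing, and the same doubling caps per-misplaced-vertex charges at $O(\lg n)$.

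The main obstacle is making the charging argument tight: a single small-to-large step can relocate a large block of vertices, and a naive charge of $|C_u|$ moves to a single misplaced vertex would allow $\Omega(n)$ charges on that vertex over the execution. The fix is to exploit the fact that each charge on a given misplaced vertex comes with a (at-least-doubling) growth of its containing component, capping the charges at $O(\lg n)$; combined with the "wrong-server implies contains-misplaced-vertex" structural property, this bounds the total number of moves by $O(\Delta\lg n)$. A related subtlety concerns the load invariant: immediately after the small-to-large sub-step but before the ensuing majority vote, the invariant $b_L + y_R \leq 4\Delta$ can transiently fail, and one must argue that the vote restores it by the end of the step.
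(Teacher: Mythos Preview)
Your load-bound sketch points in the right direction but differs from the paper's argument. The paper does not maintain an inductive invariant on $b_L+y_R$; instead it proves a pointwise structural lemma via a \emph{doubling decomposition}: for any component $C$ currently on server $S_i$ but destined for $S_{1-i}$, trace $C$ back to its last majority vote; that vote examined a subcomponent $C_1\cup C_2$ of size at least $|C|/2$ and placed it on $S_i$, so at least half of $C_1\cup C_2$---hence at least $|C|/4$ of $C$---carries the color of $S_i$. Summing over all $V_0$-components on the right server (which together hold at most $\Delta$ black vertices) gives total size at most $4\Delta$, and the trivial bound $|V_1|=n/2$ finishes the load estimate. Your ``careful summation'' is exactly this fraction argument, but you state only the weaker claim ``wrong server implies contains \emph{a} misplaced vertex'' rather than ``contains $\ge|C|/4$ misplaced vertices''; the latter is what makes the summation work. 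Also, your invariant $b_L+y_R\le 4\Delta$ need not hold with constant $4$; what one actually gets is $b_L\le 4\Delta$ and $y_R\le 4\Delta$ separately, and either alone suffices since $\mathrm{load}(S_0)=n/2+b_L-y_R$.

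Your cost argument has a genuine gap. You charge all $|C_u|$ vertex moves of a small-to-large step to a \emph{single} misplaced vertex and then cap the \emph{number} of charges per misplaced vertex at $O(\lg n)$ via doubling. But each charge has weight $|C_u|$, not $1$: if the chosen misplaced vertex sits in the moving component across a geometric sequence of doublings, the weights sum only to $O(n)$, yielding $O(\Delta\cdot n)$ total moves rather than $O(\Delta\lg n)$. Your ``main obstacle'' paragraph correctly identifies this danger, but the proposed fix---doubling caps the number of charges---does not control their total weight. The missing quantitative ingredient is again the fraction lemma: a wrong-placed component contains not just one but $\Omega(|C|)$ misplaced vertices, so one could spread the $|C_u|$ moves over $\Omega(|C_u|)$ misplaced vertices at $O(1)$ each.

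The paper sidesteps per-move charging entirely. It fixes a ground-truth component $V_i$ and isolates the \emph{mixed components}: those $C\subseteq V_i$ that are moved onto their final server for the last time and contain at least one foreign-colored vertex. A refinement of the fraction lemma shows each mixed $C$ has at least $|C|/8$ foreign vertices, so the mixed components have total size at most $8\Delta$. A separate per-component bound---each vertex is moved $O(\lg|C|)$ times over the whole run, by small-to-large plus majority steps---gives cost $O(\alpha|C|\lg|C|)$ for everything a component ever undergoes; summing over mixed components yields $O(\alpha\Delta\lg\Delta)$. Vertices of $V_i$ outside every mixed component spent their entire history in monochromatic components, which are never moved. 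The key idea you are missing is thus: bound cost per component over its full history, then use the fraction lemma to control \emph{which} components contribute---rather than accounting move by move.
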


We devote rest of this subsection to the proof of the proposition. We start
bounding the augmentation.  For the proofs recall that $V_0$ and $V_1$ are the
ground truth connected components of~$G$.

In the following we are interested in what happened to a connected component
since its last majority vote.  To this end, we decompose it into a sequence of
smaller connected components such that first a majority vote is performed and after
that, only small-to-large steps are performed. For all of these small-to-large
steps, the component will stay on the server that was picked by the majority
vote.  The following definition makes this notion formal.

\begin{definition}[Doubling Decomposition]
\label{def:double-decomposition}
	Let $C$ be a connected component and let $s \in \mathbb{N}$ be such that
	$2^t \leq |C| < 2^{t+1}$. Consider $k$ disjoint sets of vertices $C_i
	\subseteq V$ and let $\C_j = \bigcup_{i=1}^j C_i$ for $j = 1,\dots,k$.

	A sequence $(C_1,\dots,C_k)$ is a \emph{doubling decomposition} of $C$ if
	the following properties hold:
	\begin{enumerate}
		\item \label{def:doubling-decomposition:union}
			$C = \C_k = \bigcup_{i = 1}^k C_i$,
		\item \label{def:doubling-decomposition:merges}
			during the execution of the algorithm, first $\C_1 \cup C_2$
			are merged, then $\C_2 \cup C_3$ are merged, and, more generally,
			$\C_{i-1} \cup C_i$ is merged before $\C_i \cup C_{i+1}$,
		\item \label{def:doubling-decomposition:move}
			for each $i = 1,\dots,k-1$, $|\C_i| \geq C_{i+1}$ and the
			algorithm moves $C_{i+1}$ to the server of $\C_i$,
		\item \label{def:doubling-decomposition:double}
			$|C_1| < 2^{t}$ and $|\C_2| = |C_1 \cup C_2| \geq 2^{t}$.
	\end{enumerate}
\end{definition}

Note that when considering a doubling decomposition, there will be exactly one
majority-vote for the components $\C_j$ --- the one after $C_1$ and $C_2$ are
merged. Thus, $C$ and all $\C_j$, $j \geq 2$, will be assigned to the server
that was picked in the majority vote of $C_1 \cup C_2$.

The following lemma shows that doubling decompositions are indeed well-defined.
Its proof provides the construction of a doubling decomposition for a given
connected component.

\begin{lemma}
\label{lem:doubling-decomposition-exists}
	Let $C$ be a connected component. Then there exists a doubling
	decomposition $(C_1, \dots, C_k)$ for $C$.
\end{lemma}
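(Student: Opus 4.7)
The plan is to construct the decomposition by tracing the merge history of $C$ in reverse, stopping at the first merge after which the component's size first reached $2^t$. The key observation is that $C$ was built up by the algorithm through a sequence of small-to-large merges (each merge combining a larger component $L$ with a smaller one $S$ and moving $S$ to $L$'s server); we peel off these merges from the outside in.

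Concretely, I would define the trace recursively. Set $A^{(0)} = C$. Provided $|A^{(i)}| \geq 2$, let $(L^{(i)}, S^{(i)})$ be the two components whose merge produced $A^{(i)}$, indexed so that $|L^{(i)}| \geq |S^{(i)}|$. If $|L^{(i)}| \geq 2^t$, continue by setting $A^{(i+1)} = L^{(i)}$; otherwise stop. Termination is immediate because $|A^{(i+1)}| = |L^{(i)}| \leq |A^{(i)}|-1$, so the sizes form a strictly decreasing sequence of positive integers; moreover, since $|A^{(i)}| \geq 2^t \geq 1$ throughout, the merge that formed $A^{(i)}$ exists whenever $|A^{(i)}| \geq 2$. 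Let $j$ be the terminating index, i.e., the first index with $|L^{(j)}| < 2^t$, and set $k = j+2$.

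Next I would extract the decomposition from the trace by defining $C_1 = L^{(j)}$, $C_2 = S^{(j)}$, and $C_{i+2} = S^{(j-i)}$ for $i = 1, \ldots, j$. The four required properties would then be verified in turn. The main inductive identity is
\begin{align*}
	\C_m = A^{(k-m)} \qquad \text{for } m = 2, \dots, k,
\end{align*}
whose base case $\C_2 = L^{(j)} \cup S^{(j)} = A^{(j)} = A^{(k-2)}$ is direct, and whose inductive step uses $\C_{m+1} = \C_m \cup C_{m+1} = A^{(k-m)} \cup S^{(k-m-1)} = L^{(k-m-1)} \cup S^{(k-m-1)} = A^{(k-m-1)}$. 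Setting $m = k$ gives Property~\ref{def:doubling-decomposition:union}. Property~\ref{def:doubling-decomposition:merges} holds because, by construction, the merge producing $A^{(k-m)} = \C_m$ occurs strictly before the merge producing $A^{(k-m-1)} = \C_{m+1}$ in the algorithm's execution. Property~\ref{def:doubling-decomposition:move} follows from $|\C_i| = |L^{(k-i-1)}| \geq |S^{(k-i-1)}| = |C_{i+1}|$ together with the fact that the algorithm moved the smaller side $S^{(k-i-1)}$ onto the server of the larger side $L^{(k-i-1)}$. Finally, Property~\ref{def:doubling-decomposition:double} follows from the stopping rule: $|C_1| = |L^{(j)}| < 2^t$, while the invariant $|A^{(i)}| \geq 2^t$ maintained throughout the trace gives $|\C_2| = |A^{(j)}| \geq 2^t$.

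The only subtlety I anticipate is the degenerate case $|C| = 1$ (where $t = 0$ and Property~\ref{def:doubling-decomposition:double} cannot be satisfied literally), which should be handled by a remark that the lemma concerns components that were produced by at least one merge; in every nontrivial case the trace above is well-defined and yields a valid decomposition. Apart from this boundary case, the construction is essentially forced: the decomposition is unique given $C$ and its merge history, which makes verifying each property a straightforward unfolding of the definitions.
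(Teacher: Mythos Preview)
Your construction is correct and follows essentially the same approach as the paper: both peel off the smaller component at each merge in reverse chronological order until the remaining larger side first drops below $2^t$, then read off the $C_i$ in forward order. Your treatment is considerably more formal than the paper's (which dispatches Properties~\ref{def:doubling-decomposition:union}--\ref{def:doubling-decomposition:double} in a single sentence each), and your remark about the degenerate case $|C|=1$ is a valid edge case the paper silently assumes away by starting from ``the last edge which caused the algorithm to set $C = C_u \cup C_v$''.
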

\begin{proof}
	Suppose $(u,v)$ was the last edge which caused the algorithm to set
	$C = C_u \cup C_v$. W.l.o.g.\ assume that $|C_u| \leq |C_v|$ (in case of
	ties let $C_u$ be the connected component that is moved by the algorithm). Then
	set $C_k = C_u$ and set $\C_{k-1} = C_v$. Now repeat this procedure for
	$\C_{k-1}$ in place of $C$ to obtain $C_{k-1}$ and $\C_{k-2}$. Continue
	this procedure until $C_1$ is of appropriate size.

	Note that Properties~\ref{def:doubling-decomposition:union}
	and~\ref{def:doubling-decomposition:merges} follow immediately from the
	above construction. Property~\ref{def:doubling-decomposition:move} follows
	from the definition of small-to-large steps and the choice of $C_u$ above.
	Property~\ref{def:doubling-decomposition:double} is guaranteed by the
	stopping criterion of the above recursion.
\end{proof}

Lemma~\ref{lem:majority-voting:minority-color-wrong-server} will be crucial for
the proofs of many upcoming claims in this section. The lemma asserts that when
a connected component $C$ is currently assigned to the (say) right server but at
the end it will be assigned to the left server, then it must contain relatively
many vertices that were initially assigned to the right server.

\begin{lemma}
\label{lem:majority-voting:minority-color-wrong-server}
	Let $C$ be a connected component with $|C| \geq 4$. Suppose that $C$ is
	currently assigned to server $S_i$ and that $C$ will be assigned to server
	$S_{1-i}$ when the algorithm terminates.  Then $C$ contains at least $|C|/4$
	vertices which were initially assigned to $S_i$.
\end{lemma}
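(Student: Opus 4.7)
The plan is to analyze the doubling decomposition of $C$ provided by Lemma~\ref{lem:doubling-decomposition-exists}. Let $(C_1,\dots,C_k)$ be such a decomposition and let $t$ satisfy $2^t\leq |C|<2^{t+1}$; since $|C|\geq 4$ we have $t\geq 2$ and in particular $|\C_2|\geq 2^t\geq 4$.

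The first step is to argue that the server on which $C$ currently sits equals the server on which $\C_2 = C_1\cup C_2$ sat immediately after the majority-voting check (line~\ref{line:majority-vote} of Algorithm~\ref{algo:majority-voting}) that was performed when $C_1$ and $C_2$ were merged. By Property~\ref{def:doubling-decomposition:move} of the decomposition, each of the subsequent operations producing $\C_3,\dots,\C_k=C$ is a small-to-large merge in which $\C_j$ plays the role of the larger component, so $\C_j$ itself is never moved; and the majority-voting trigger $|C_u|,|C_v|<2^i\leq |C_u\cup C_v|$ is not satisfied for any of those merges because $|\C_j|<|C|<2^{t+1}$ while $|\C_2|\geq 2^t$. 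Hence $\C_2$ itself already rests on $S_i$ right after that first majority-voting check.

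A short case analysis on this majority-voting check then yields $|\C_2 \cap \init(S_i)| \geq |\C_2|/2$. If the check finds a strict $S_i$-color majority, $\C_2$ is moved to $S_i$ and the inequality is strict. If the check produces a tie, no move is made; and since the preceding small-to-large step placed $\C_2$ on the server of $C_1$ (the larger of the two by Property~\ref{def:doubling-decomposition:move}), $C_1$ must already have been on $S_i$, and the tie forces $|\C_2 \cap \init(S_i)|=|\C_2|/2$ exactly. The only remaining possibility, a strict $S_{1-i}$-color majority, would send $\C_2$ to $S_{1-i}$, contradicting that $\C_2$ is currently on $S_i$.

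Combining both facts gives
\[
  |C \cap \init(S_i)| \;\geq\; |\C_2 \cap \init(S_i)| \;\geq\; \tfrac{1}{2}|\C_2| \;\geq\; 2^{t-1} \;>\; \tfrac{1}{4}|C|,
\]
where the final step uses $|C|<2^{t+1}$. The main subtlety is the tie case of the majority vote; it is dispatched by tracking $\C_2$'s server both just before and just after the vote, and by using that only small-to-large steps touch $\C_j$ afterwards.
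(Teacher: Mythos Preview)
Your proof is correct and follows essentially the same route as the paper's: take a doubling decomposition, argue that $C$ sits on the same server as $\C_2$ right after the majority vote triggered at the $C_1\cup C_2$ merge, and conclude that at least half of $\C_2\geq |C|/2$ must carry the $S_i$-color. Your write-up is in fact slightly more careful than the paper's---you explicitly verify that no further majority vote fires for $\C_3,\dots,\C_k$ (using $2^t\leq|\C_j|<2^{t+1}$), and you separate out the tie case of the vote, whereas the paper simply asserts a ``black majority'' and leaves the tie implicit.
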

\begin{proof}
	Assume w.l.o.g.\ that $C$ is currently assigned to the right server and it
	will be assigned to the left server when the algorithm terminates. We show
	that at least a $1/4$-fraction of the vertices in $C$ must be black.  This
	implies the lemma.
	
	Let $(C_1, \dots, C_k)$ be a doubling decomposition of $C$ which exists by
	Lemma~\ref{lem:doubling-decomposition-exists}.  Observe that
	$C$ must be assigned to the same server as $C_1 \cup C_2$ after they were
	merged and after the algorithm processed the majority vote for $C_1 \cup
	C_2$ (by Properties~\ref{def:doubling-decomposition:move}
	and~\ref{def:doubling-decomposition:double} of doubling decompositions).
	Thus, $C_1 \cup C_2$ had a black majority, i.e., it contains at least
	$|C_1 \cup C_2|/2$ black vertices. Since $|C_1 \cup C_2| \geq |C| / 2$, $C$
	must contain at least $|C|/4$ black vertices.
\end{proof}

Now we bound the augmentation that is used by
Algorithm~\ref{algo:majority-voting}.

\begin{lemma}
\label{lem:majority-voting:augmentation}
	The load of both servers is bounded by $n/2 + 4\Delta$. Hence,
	Algorithm~\ref{algo:majority-voting} uses at most $4\Delta$ augmentation.
\end{lemma}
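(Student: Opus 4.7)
The plan is to bound the load of $S_1$ by $n/2+4\Delta$; the bound on $S_0$ then follows by the symmetric argument. Fix any point during the execution and partition the components currently on $S_1$ into \emph{right-destined} (those the algorithm will eventually leave on $S_1$) and \emph{left-destined} (those it will eventually move to $S_0$). All right-destined components lie inside the single ground-truth set $F_1\in\{V_0,V_1\}$ that the algorithm places on $S_1$ at the end, so together they contribute at most $|F_1|=n/2$ to the load; it therefore suffices to bound the combined size of the left-destined components on $S_1$ by $4\Delta$.

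For each left-destined component $C$ with $|C|\ge 4$, Lemma~\ref{lem:majority-voting:minority-color-wrong-server} yields at least $|C|/4$ vertices of $C$ that were initially placed on $S_1$ (``black'' vertices). Summing over all pairwise-disjoint left-destined components on $S_1$, the total load from the large ones is at most $4\cdot|\init(S_1)\cap F_0|$, where $F_0$ denotes the ground-truth set the algorithm ultimately places on $S_0$. To turn this into $4\Delta$ I would split on $\Delta$: if $\Delta\ge n/8$ the claimed bound $n/2+4\Delta\ge n$ is trivially satisfied by any configuration, while if $\Delta<n/8$ I would argue that the algorithm's final assignment must coincide with \OFF's, so that Lemma~\ref{lem:cost-opt-2server} gives $|\init(S_1)\cap F_0|=\Delta$. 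For the coincidence, inspect the last majority vote of each $V_j$: it is cast on some $C_1\cup C_2\subseteq V_j$ whose size is at least the largest power of two at most $|V_j|=n/2$, and hence exceeds $n/4$; since $V_j$ contains at most $\Delta<n/8<|C_1\cup C_2|/2$ minority-color vertices, the vote's majority must match \OFF's choice of server for $V_j$, and no subsequent small-to-large step can dislodge $V_j$.

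The remaining technicality is the left-destined components of size $\le 3$, for which Lemma~\ref{lem:majority-voting:minority-color-wrong-server} is vacuous. A short case check against the small-to-large and majority-voting rules shows that no all-yellow component of size $\le 3$ can ever reach $S_1$, so each such small left-destined component still contains at least one vertex of $\init(S_1)\cap F_0$; charging each small component by its size $|C|\le 3$ to one such vertex, together with the factor-$4$ charge for the large components, keeps the total load of left-destined components on $S_1$ at most $4\cdot|\init(S_1)\cap F_0|\le 4\Delta$. The main obstacle I expect is the ``algorithm agrees with \OFF'' step in the non-trivial regime, since without it the initially-on-$S_1$ vertices produced by Lemma~\ref{lem:majority-voting:minority-color-wrong-server} need not lie in $\init(S_1)\cap F_0$ and therefore need not be counted by $\Delta$.
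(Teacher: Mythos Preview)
Your proof is correct and follows essentially the same strategy as the paper: split the contents of the server into vertices that will ultimately stay there (bounded trivially by $|F_1|=n/2$) versus components that will leave, and use Lemma~\ref{lem:majority-voting:minority-color-wrong-server} to charge each leaving component to black vertices in the leaving ground-truth set, of which there are only~$\Delta$.

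If anything you are more careful than the paper on two points. First, the paper asserts without argument that the algorithm's final assignment coincides with \OFF's once $\Delta<n/4$; you instead prove this via the last majority vote in the doubling decomposition, and you use the sharper threshold $\Delta<n/8$, which is what that argument actually delivers (the range $n/8\le\Delta<n/4$ is harmless anyway since then $n/2+4\Delta\ge n$). Second, you explicitly handle the components of size~$\le 3$ that fall outside the hypothesis of Lemma~\ref{lem:majority-voting:minority-color-wrong-server}; your observation that no monochromatic-yellow component can sit on $S_1$ is exactly Lemma~\ref{lem:majority-voting:same-color}, so you may simply cite it rather than redo the case check.
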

\begin{proof}
	Assume that at some point during the execution of the algorithm the (w.l.o.g.)
	right server contains more vertices than the left server. We bound the
	load of the right server.

	Recall from Lemma~\ref{lem:cost-opt-2server} that
	$\Delta \leq n/4$.
	We start by considering the case where $\Delta = n/4$. 
	In this case, even moving all $n$ vertices
	to the right server only causes augmentation $n/2 = 2\Delta$.

	Now consider the case where $\Delta < n/4$.  Since $\Delta < n/4$,
	the initial assignment of $S_1$ must contain more vertices from either $V_0$
	or $V_1$. Thus, exactly one of the ground truth components $V_0$ and $V_1$
	must have a black majority (as the algorithm colored all vertices initially
	assigned to $S_1$ black).  We assume w.l.o.g.\ that $V_1$ has this black
	majority. This implies that $V_1$ has $n/2 - \Delta > n/4 > \Delta$ black
	vertices and $V_0$ has $\Delta$ black vertices. Further, as the algorithm
	proceeds, the vertices from $V_1$ must be moved to the right server.

	The right server contains at each point a (potentially empty) set of
	vertices from $V_0$ and a (potentially empty) set of vertices from $V_1$.
	For the latter set we use the trivial upper bound of $n/2$, while for the
	earlier set we give a bound of $\Delta/4$. The lemma follows.
   
	Consider a component $C$ which is on the right server and a subset of $V_0$.
	By Lemma~\ref{lem:majority-voting:minority-color-wrong-server}, $C$ contains
	at least $|C|/4$ black vertices.

	As there are only $\Delta$ black vertices in the ground truth component
	$V_0$ and each component $C \subseteq V_0$ on the right server has at least
	a $1/4$-fraction of black vertices, it follows that all components on the
	right server which are subsets of $V_0$ can only contain $4 \Delta$
	vertices.
\end{proof}

Having derived the bound for the augmentation, our next goal is to show that the
cost paid by the algorithm is bounded by $O(\alpha \Delta \lg \Delta)$. We start
by bounding the cost paid by the algorithm for each connected component.

The following lemma implies that the algorithm pays nothing for components in
which all vertices have the same color.
\begin{lemma}
\label{lem:majority-voting:same-color}
	Let $C$ be a connected component and suppose all vertices in $C$ have the
	same color. Then the algorithm has never moved the vertices in $C$.
\end{lemma}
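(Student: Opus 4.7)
The plan is to prove the lemma by induction on how the connected component $C$ is built up through the sequence of merges performed by the algorithm. The underlying idea is simple: the only way a vertex in $C$ could have been moved is through a small-to-large step or a majority-voting step, and if every vertex in $C$ has the same color, then at every intermediate stage the sub-components of $C$ already sit on their original server, so neither type of step ever relocates them.

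Concretely, assume w.l.o.g.\ that every vertex in $C$ is yellow, i.e., initially assigned to the left server. I will prove by induction on the number of merges that produced $C$ that at all times throughout the algorithm's execution, $C$ (and every sub-component of $C$ created during its construction) has been assigned to the left server, and no vertex in $C$ has been moved. For the base case, when $C$ is a singleton $\{v\}$ with $v$ yellow, the claim is immediate from the initial assignment. For the inductive step, consider the last merge that produced $C$, triggered by some edge $(u,v)$ with $C = C_u \cup C_v$. Since $C$ is all yellow, both $C_u$ and $C_v$ are all yellow; by the inductive hypothesis both reside on the left server and their vertices have never been moved. Hence in the processing of $(u,v)$, the condition of Line~\ref{line:small-to-large} (``$C_u$ and $C_v$ are on different servers'') is false, so no small-to-large move occurs. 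If the majority-voting step of Line~\ref{line:majority-vote} is triggered, then $C_u \cup C_v$ has an all-yellow (thus yellow) majority and the algorithm ``moves'' it to the left server — but it is already there, so no vertex actually changes its server. Therefore no vertex in $C$ has been moved.

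The main (very mild) obstacle is being explicit about the semantics of the word ``move'' in the majority-voting step: the pseudocode says ``move $C_u \cup C_v$ to the left server'' even if that component already lies on the left server, and we must be clear that such a no-op does not count as a vertex reassignment. Once this is noted, the induction goes through cleanly, the symmetric argument handles the all-black case, and the lemma follows.
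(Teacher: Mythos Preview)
Your proof is correct and follows essentially the same approach as the paper: an induction (on the number of merges, equivalently on $|C|$) showing that monochromatic sub-components always sit on their original server, so neither the small-to-large step nor the majority-voting step ever relocates them. Your explicit remark that ``move $C_u \cup C_v$ to the left server'' is a no-op when the component is already there is a helpful clarification the paper leaves implicit.
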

\begin{proof}
	We prove the claim by induction over $s = |C|$.
	
	Let $|C| = s = 1$. Then $C$ consists of a single vertex. But the algorithm never
	moves single vertices unless they become part of a larger connected
	component.  Hence, $C$ is not moved.

	Now let $|C| = s+1$. Consider the last edge $(u,v)$ which was inserted that
	forced the algorithm to merge $C = C_u \cup C_v$. Since in $C$ all vertices
	have the same color, all vertices in $C_u$ and $C_v$ must have the same
	color. By induction hypothesis, the vertices in $C_u$ and $C_v$ have never
	been moved before. Thus, $C_u$ and $C_v$ must be assigned to the same
	server. This implies that a small-to-large step would not move $C_u$ or
	$C_v$. Further, a majority voting step would not move $C_u \cup C_v$ since
	all vertices vote for the server which they are already assigned to. Thus,
	no vertices in $C$ are moved.
\end{proof}

Next, we bound the cost paid for any connected component.
\begin{lemma}
\label{lem:majority-voting:server-cost}
	Let $C$ be a connected component.  Then the cost (over the entire execution
	time of the algorithm) paid for the vertices in $C$ is at most
	$O(\alpha |C| \lg |C|)$.
\end{lemma}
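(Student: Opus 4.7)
The plan is to charge the total cost to individual vertices in $C$ and show that each vertex in $C$ is moved only $O(\lg |C|)$ times during the entire execution. Summing over all $|C|$ vertices and multiplying by the per-move cost $\alpha$ then yields the desired bound $O(\alpha |C| \lg |C|)$.

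Inspecting Algorithm~\ref{algo:majority-voting}, every vertex move originates from one of exactly two lines: the small-to-large step (Line~\ref{line:small-to-large}) or the majority voting step (Line~\ref{line:majority-vote}). I would bound each contribution separately. For the small-to-large steps, I would reuse the token/doubling argument from Lemma~\ref{lem:small-large-only} verbatim: whenever a vertex $v \in C$ is moved in a small-to-large step, $v$ belongs to the smaller of two merging components, so the size of $v$'s connected component at least doubles. Since the size of $v$'s component is monotonically non-decreasing and is capped by $|C|$ (once $v$'s component equals $C$, no further small-to-large step involving $v$ can occur, as any further merge would extend $v$'s component beyond $C$), $v$ can participate in at most $\lg |C|$ small-to-large moves.

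For the majority voting steps, I would exploit the threshold structure of Line~\ref{line:majority-vote}: a majority vote that could move $v$'s component fires only when there exists $i \in \mathbb{N}$ such that $v$'s component size freshly crosses the power-of-two threshold $2^i$, meaning $|C_u|, |C_v| < 2^i$ but $|C_u \cup C_v| \geq 2^i$. Because the size of $v$'s component only grows, each threshold $2^i$ is crossed at most once over $v$'s lifetime, and thresholds larger than $|C|$ are never reached. Hence $v$ participates in at most $\lg |C|$ majority voting moves.

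Combining the two bounds, every vertex in $C$ is moved at most $O(\lg |C|)$ times in total, each move costing $\alpha$. Summing over the $|C|$ vertices in $C$ gives total cost $O(\alpha |C| \lg |C|)$, as claimed. I do not expect a real obstacle here; the only mild subtlety is that at the time of a move, $v$ need not yet be in a component equal to $C$, but the argument only uses the monotone growth of the sub-component containing $v$ (bounded above by $|C|$), which is exactly what both halves of the proof need.
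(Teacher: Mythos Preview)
Your proposal is correct and essentially identical to the paper's proof: both charge moves to individual vertices via a token argument, split into small-to-large moves (bounded by doubling) and majority-voting moves (bounded by power-of-two threshold crossings), and conclude each vertex incurs $O(\lg |C|)$ moves. Your write-up is slightly more explicit than the paper about why $|C|$ (rather than $n$) is the correct cap on the doubling count, but the underlying argument is the same.
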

\begin{proof}
	Consider a vertex $u \in C$. We
	perform the following accounting: 
	we assign a token to $u$ each time when it is
	reassigned to a server and we show that the number of tokens for $u$ is
	bounded by $O(\lg |C|)$. This implies that the total number of reassignments
	for the vertices in $C$ is $O(|C| \lg |C|)$ and the lemma follows.

	First, consider the case where $u$ is moved because it is in a smaller
	connected component (Line~\ref{line:small-to-large}). Whenever this happens
	the size of the connected component containing $u$ at least doubled.  This
	can only happen $O(\lg |C|)$ times.

	Second, consider the case when $u$ is moved because of a majority vote.  A
	majority vote is performed every time when the size of the component containing
	$u$ doubled. This can only happen $O(\lg |C|)$ times and, hence, this can
	only add another $O(\lg |C|)$ tokens for $u$.

	Thus, the total number of tokens assigned to $u$ is $O(\lg |C|)$.
\end{proof}

Note that Lemma~\ref{lem:majority-voting:server-cost} is only useful for
components of size at most $O(\Delta$): If we were to apply the lemma to a
component $C$ of size $\Theta(n)$ then the cost would only be bounded by
$O(\alpha n \lg n)$.  However, this can be much worse than our desired bound of
$O(\alpha \Delta \lg \Delta)$ when $\Delta \ll n$. Thus, we need a more
fine-grained argument to obtain our goal of showing that the cost paid by
Algorithm~\ref{algo:majority-voting} never exceeds $O(\alpha \Delta \lg \Delta)$.
To do this, we first prove two technical lemmas.

\begin{lemma}
\label{lem:majority-voting:minority-color}
	Suppose $C$ is a component which is moved from $S_i$ to $S_{1-i}$ and the
	vertices in $C$ are never reassigned after this move.  Then $C$ contains
	at least $|C|/8$ vertices which were initially assigned to $S_i$.
\end{lemma}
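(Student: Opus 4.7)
The plan is to split the argument into two cases based on which of the two lines of Algorithm~\ref{algo:majority-voting} performs the move of $C$ from $S_i$ to $S_{1-i}$: a small-to-large step (Line~\ref{line:small-to-large}) or a majority-voting step (Line~\ref{line:majority-vote}). In each case I will locate a connected component that is on $S_i$ at some identifiable moment and whose vertices end up permanently on $S_{1-i}$, so that I can invoke Lemma~\ref{lem:majority-voting:minority-color-wrong-server} to harvest a large fraction of $S_i$-colored vertices. The weaker constant $1/8$ (instead of $1/4$) is there precisely to absorb a factor-of-two slack that will appear in the majority-voting case.

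In the small-to-large case, right before the move $C$ itself is sitting on $S_i$, and by hypothesis its vertices stay on $S_{1-i}$ once they are moved, hence in particular at termination. If $|C|\ge 4$, Lemma~\ref{lem:majority-voting:minority-color-wrong-server} applied to $C$ at this moment immediately gives $\ge |C|/4\ge |C|/8$ vertices of $S_i$-color in $C$. For $|C|<4$ we only need at least one such vertex, and this is forced by Lemma~\ref{lem:majority-voting:same-color}: if $C$ were entirely $S_{1-i}$-colored the algorithm would never have moved any of its vertices, yet $C$ currently resides on $S_i$, a contradiction.

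In the majority-voting case, write $C=C_u\cup C_v$ with $|C_u|\le |C_v|$, as in the algorithm. Just after the small-to-large merge and before the vote, the merged component $C$ sits on $C_v$'s server, and since the vote moves it away from $S_i$ this server is $S_i$; so $C_v$ lay on $S_i$ right before this edge was processed. After the vote $C$ moves to $S_{1-i}$ and, by assumption, none of its vertices are reassigned afterwards, so $C_v\subseteq C$ also ends on $S_{1-i}$. If $|C_v|\ge 4$, Lemma~\ref{lem:majority-voting:minority-color-wrong-server} applied to $C_v$ yields $\ge |C_v|/4$ vertices in $C_v$ initially on $S_i$; since $|C_v|\ge |C|/2$, this gives $\ge |C|/8$ such vertices inside $C_v\subseteq C$. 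If instead $|C_v|<4$, then $|C|\le 2|C_v|\le 6$, so $|C|/8<1$ and it is enough to exhibit one $S_i$-colored vertex in $C$, which is again provided by Lemma~\ref{lem:majority-voting:same-color}: a monochromatic $C$ of $S_{1-i}$-color could never have been positioned on $S_i$ before the vote.

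The only mildly delicate step will be justifying the application of Lemma~\ref{lem:majority-voting:minority-color-wrong-server} to $C_v$ in the voting case, which requires pinpointing the moment ``just before this edge is processed'' and checking that $C_v$'s vertices indeed end on $S_{1-i}$; once the pseudocode of Algorithm~\ref{algo:majority-voting} is read carefully, this follows directly from the fact that the small-to-large step either leaves $C_v$ in place or migrates $C_u$ to $C_v$'s server, so $C_v$'s server right before the merge is exactly the server of the merged component right before the vote.
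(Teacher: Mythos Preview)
Your proof is correct and follows essentially the same approach as the paper: a case split on whether the move is a small-to-large step or a majority-voting step, applying Lemma~\ref{lem:majority-voting:minority-color-wrong-server} to $C$ in the first case and to $C_v$ in the second, with a separate treatment of small components via Lemma~\ref{lem:majority-voting:same-color}. Your case boundary in the voting case ($|C_v|\ge 4$ versus $|C_v|<4$) is equivalent to the paper's ($|C|\ge 8$ versus $|C|\le 7$), and you are in fact slightly more careful than the paper in the small-to-large case, where you explicitly handle $|C|<4$ before invoking Lemma~\ref{lem:majority-voting:minority-color-wrong-server}.
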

\begin{proof}
	There are only two possible reasons why $C$ is moved: Either due to a
	small-to-large step (Line~\ref{line:small-to-large}) or due to a majority
	voting step (Line~\ref{line:majority-vote}). We consider both cases
	separately.

	Case 1: $C$ is moved due to a small-to-large step. Then by
	Lemma~\ref{lem:majority-voting:minority-color-wrong-server}, $C$ must
	contain at least $|C|/4$ vertices which were initially assigned to $S_i$.

	Case 2: $C$ is moved due to a majority voting step.

	First, consider the case when $C$ contains at most 7 vertices. Then at least
	one vertex was initially assigned to $S_i$ (if all vertices had been
	initially assigned to $S_{1-i}$, they would all have the same color and a
	majority vote would not move $C$ due to
	Lemma~\ref{lem:majority-voting:same-color}).  Thus, at least a
	$1/7$-fraction of the vertices were initially assigned to $S_i$ and the
	lemma holds.
	
	Second, suppose that $C$ contains at least $8$ vertices.  Consider the last
	edge $(u,v)$ that caused the merge $C = C_u \cup C_v$.  Suppose that the
	small-to-large step moved $C_u$ to the server of $C_v$.  Note that $C_v$
	was assigned to $S_i$ and $C_u$ was moved to $S_i$.  Now apply
	Lemma~\ref{lem:majority-voting:minority-color-wrong-server} to $C_v$. This
	implies that $C_v$ must contain at least $|C_v| / 4$ vertices that were
	initially assigned to $S_i$.  As $|C_u| \leq |C_v|$, $C$ must contain at
	least $|C|/8$ vertices that were initially assigned to $S_i$.
\end{proof}

We are now ready to show that the cost incurred by the majority-voting algorithm
never exceeds $O(\alpha \Delta \lg \Delta)$.

\begin{lemma}
\label{lem:majority-voting:cost}
	The total cost paid by Algorithm~\ref{algo:majority-voting} is at most
	$O(\alpha \Delta \lg \Delta)$ and the final assignment is a perfect
	partitioning.
\end{lemma}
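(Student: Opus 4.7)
My plan is to prove (i) each vertex is moved only while in a component of size $O(\Delta)$, and (ii) only $O(\Delta)$ vertices are moved in total. Combined with the token-doubling argument of Lemma~\ref{lem:majority-voting:server-cost}, which then gives $O(\lg \Delta)$ moves per moved vertex, this yields total vertex-moves $O(\Delta \lg \Delta)$ and hence cost $O(\alpha \Delta \lg \Delta)$. The perfect-partitioning claim is immediate: once $\sigma$ is exhausted, the algorithm's final connected components are exactly $V_0, V_1$, and (as argued below) they end up on different servers. The degenerate case $\Delta \ge n/16$ is handled trivially by applying Lemma~\ref{lem:majority-voting:server-cost} to each of $V_0, V_1$, which already yields $O(\alpha n \lg n) = O(\alpha \Delta \lg \Delta)$; so in what follows I assume $\Delta < n/16$.

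For step (i) the key structural claim is: once a component $C$ attains size $|C| > 8\Delta$, its vertices are never moved again. The argument is pigeonhole: $C \subseteq V_j$ for some $j$, and $V_j$ contains at most $\Delta$ vertices of its minority color, so $C$ contains at least $7|C|/8$ majority-color vertices. Thus the last majority vote triggered while $|C|$ was crossing a power of two placed $C$ on $V_j$'s majority-color server. Any subsequent merge with $C' \subseteq V_j$ either has $|C'| \le |C|$, in which case small-to-large moves $C'$ rather than $C$, or $|C'| > |C| > 8\Delta$, in which case the same pigeonhole places $C'$ on the same majority-color server, making the attempted move of $C$ to $C'$'s server a no-op. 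Majority votes on the combined component stay on the same side for the same reason.

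For step (ii) I invoke Lemma~\ref{lem:majority-voting:minority-color}. By (i), every ever-moved vertex has a well-defined \emph{finalizing} event: the last time any component containing it is reassigned. These events correspond to disjoint components $C^{(1)}, C^{(2)}, \ldots$ whose sizes sum to the total number of ever-moved vertices. Each $C^{(k)}$ was moved from some $S_i$ to $S_{1-i}$ and not reassigned afterwards, so Lemma~\ref{lem:majority-voting:minority-color} yields at least $|C^{(k)}|/8$ vertices of $C^{(k)}$ that end up on a server different from their initial one. Since (i) applied to $V_0, V_1$ themselves (both of size $n/2 > 8\Delta$) forces the final permutation in which each $V_j$ sits on its majority-color server, at most $2\Delta$ vertices in total are initially misplaced, giving $\sum_k |C^{(k)}| \le 16\Delta$. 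Multiplying by the $O(\lg \Delta)$ per-vertex bound from (i) yields the claim.

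The main obstacle is the freezing claim (i); once it is established, everything else combines cleanly. A secondary subtlety is confirming that the final permutation is the ``good'' one (so that the number of misplaced vertices is at most $2\Delta$ rather than $n - 2\Delta$), but this too follows from (i) applied to the full components $V_0, V_1$, which also yields the different-server requirement for perfect partitioning.
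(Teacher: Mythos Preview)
Your proof is correct and follows essentially the same route as the paper: both identify the components at their final move, invoke Lemma~\ref{lem:majority-voting:minority-color} to show each contributes at least an eighth of its size in ``misplaced'' vertices, and hence bound the total mass of such components by $O(\Delta)$; the $O(\lg\Delta)$ factor then falls out of Lemma~\ref{lem:majority-voting:server-cost}.

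The one genuine addition in your write-up is the freezing claim (i). The paper does not isolate this as a separate step: it gets the $\lg\Delta$ factor more directly by observing that each mixed component has size at most $8\Delta$ (automatic once the total is at most $8\Delta$) and plugging that into Lemma~\ref{lem:majority-voting:server-cost}. Your claim (i), however, buys something the paper glosses over: it pins down that the final assignment places each $V_j$ on its majority-color server, which is exactly what is needed to say only $2\Delta$ vertices are misplaced. The paper's proof implicitly assumes this alignment with \OFF's solution without justifying it, so your version is in fact more careful on that point.
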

\begin{proof}
	When the algorithm finishes, the final assignment must be a perfect
	partitioning because the connected components were completely revealed.  We
	only need to prove that the cost of the algorithm is
	$O(\alpha \Delta \lg \Delta)$.

	Recall that \OFF moves exactly $2\Delta$ vertices
	(Lemma~\ref{lem:cost-opt-2server}). We can assume w.l.o.g.\ that \OFF moves
	$\Delta$ vertices from $V_0$ that were initially assigned to $S_1$ to $S_0$
	and $\Delta$ vertices from $V_1$ that were initially assigned to $S_0$ to
	$S_1$.  We will argue that the cost paid by the algorithm for moving all
	vertices from $V_0$ into the $S_0$ will be $O(\alpha \Delta \lg \Delta)$;
	the same will hold for $V_1$ and $S_1$ symmetrically.
	
	Consider time $T$ during the execution of the algorithm where the following
	happens. A connected component $C$ is reassigned the left server and $C$ has
	the following properties: (1)~$C$ is a subset of $V_0$ and (2)~the vertices
	in $C$ never leave the left server after time $T$.  Since each vertex of
	$V_0$ is assigned to the left server when the algorithm terminates, each
	vertex of $V_0$ is contained in a component with the above properties (when
	a vertex or component is never moved, we set $T=0$). We call a component
	with the above properties \emph{mixed} if it contains at least one black
	vertex.  Note that when mixed component $C$ is assigned to the left server,
	$C$ contains a black vertex and, hence, $C$ must be moved from the right to
	the left server.

	We now bound the cost for mixed components. Let $X$ be the set of all mixed
	components and let $C \in X$.  	Since $C$ is mixed,
	Lemma~\ref{lem:majority-voting:minority-color} implies that at least $|C|/8$
	vertices of $C$ are black. As the black vertices in mixed components form a
	partition of the $\Delta$ black vertices in $V_0$ moved by \OFF, we obtain
	that the number of black vertices in mixed components is $\Delta$.  Thus,
	the total number of vertices in all mixed components is $\sum_{C \in X} |C|
	\leq 8 \Delta$.
	
	By Lemma~\ref{lem:majority-voting:server-cost}, the total cost paid for
	each $C \in X$ until (including) its final move is $O(\alpha |C| \lg |C|)$.
	Since (by assumption) the vertices in $C$ never move between the servers
	again, their cost never exceeds $O(\alpha |C| \lg |C|)$ until the algorithm
	finishes.  Hence, the cost paid by the algorithm for all mixed components is
	\begin{align*}
		\sum_{C \in X} O(\alpha |C| \lg |C|)
		\leq \sum_{C \in X} O(\alpha |C| \lg \Delta)
		= O(\alpha \Delta \lg \Delta).
	\end{align*}

	Now consider the vertices of $V_0$ which are not part of mixed components.
	These vertices must have been part of components in which all vertices are
	colored yellow. By Lemma~\ref{lem:majority-voting:same-color}, these
	vertices have never been moved. Thus, they do not incur any additional cost
	for the algorithm.
\end{proof}

\begin{proof}[Proof of Proposition~\ref{prop:majority-voting}]
	Lemma~\ref{lem:majority-voting:augmentation} gives the bound for the
	augmentation used by the algorithm. By Lemma~\ref{lem:majority-voting:cost}
	and Lemma~\ref{lem:cost-opt-2server}, Algorithm~\ref{algo:majority-voting}
	obtains a competitive ratio of
	\begin{align*}
	\frac{\ALG}{\OFF}
	= \frac{ O(\alpha \Delta \lg \Delta) }{ 2 \alpha \Delta }
	= O(\lg \Delta)
	= O(\lg n).
	& \qedhere
	\end{align*}
\end{proof}

\subsection{Bringing It All Together: Theorem~\ref{thm:2servers}}
\label{sec:proof-thm-2servers}

\begin{proof}{Proof of Theorem~\ref{thm:2servers}.}
Consider the following algorithm: Run the
majority-voting algorithm until we have seen all edges or until at some point it
tries to exceed the allowed augmentation. In the latter case, compute a
perfectly balanced assignment respecting the connected components and start
running Algorithm~\ref{algo:small-large-rebalance}
(Section~\ref{sec:efficient-small-large-rebalance}).

To prove the theorem, we distinguish two cases based on $\Delta$.

First, suppose $\Delta < \varepsilon n / 4$.  By
Proposition~\ref{prop:majority-voting}, Algorithm~\ref{algo:majority-voting} uses at
most $4 \Delta$ augmentation. Thus, in the current case the augmentation used by
Algorithm~\ref{algo:majority-voting} is bounded by $4 \Delta < \varepsilon n$
and it is $O(\lg n)$-competitive.  This proves the theorem for this case.

Second, suppose $\Delta \geq \varepsilon n / 4$. In this case we run
Algorithm~\ref{algo:majority-voting} until it tries to
exceed the allowed augmentation; this serves as a certificate that
$\Delta \geq \varepsilon n/4$. At this point we switch to
Algorithm~\ref{algo:small-large-rebalance}.

When we switch algorithms, Algorithm~\ref{algo:majority-voting} has
paid $O(\alpha n \lg n)$, by applying
Lemma~\ref{lem:majority-voting:server-cost}
to each connected component, and then summing over these costs. For switching to
the perfectly balanced reassignment, we only need to pay $O(\alpha n)$ \emph{once}.

By Proposition~\ref{prop:small-large-rebalance},
Algorithm~\ref{algo:small-large-rebalance} never uses more than
$O(n \lg n + (\Delta \lg n) / \varepsilon)$ vertex
moves.  Using the bound $\Delta \geq \varepsilon n /4$ and the fact
that $\OFF$ pays $2 \alpha \Delta$ (Lemma~\ref{lem:cost-opt-2server}), we
obtain the desired competitive~ratio:
\begin{align*}
	\frac{ \ALG }{ \OFF }
	&= O\left( \frac{ \alpha n \lg n + (\alpha \Delta \lg n) / \varepsilon }{\alpha \Delta}\right)
	= O\left( \frac{\lg n}{\varepsilon} \right).
	&\qedhere
\end{align*}
\end{proof}

\section{Generalization to Many Servers}
\label{sec:l-servers}

We extend our study to 
the scenario with $\ell$ servers. 
As we will see, while several concepts
introduced for the two server case 
are still useful, the $\ell$-server 
case introduces additional challenges.
We derive 
the following main result.

\begin{theorem}
\label{thm:lservers}
	Given a system with $\ell$ servers each of capacity $(1+\varepsilon)n/\ell$
	(i.e., augmentation $\varepsilon n/\ell$),
	for $\varepsilon \in (0,1/2)$,
	then there exists an $O((\ell \lg n \lg \ell) / \varepsilon)$-competitive algorithm.
\end{theorem}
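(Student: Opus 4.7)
The plan is to mirror the proof of Theorem~\ref{thm:2servers} from Section~\ref{sec:proof-thm-2servers}: build an algorithm tailored to the regime in which $\Delta$, the number of vertex moves performed by \OFF, is small, run it while it respects the augmentation bound, and otherwise switch to a generalization of Algorithm~\ref{algo:small-large-rebalance}. The first step is an analog of Lemma~\ref{lem:cost-opt-2server} showing that $\OFF = \Theta(\alpha \Delta)$, where $\Delta$ is now the minimum number of vertex reassignments needed to move from the initial balanced assignment to any assignment putting each ground-truth component $V_i$ onto a distinct server.

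For the small-$\Delta$ regime, I would generalize the Majority Voting algorithm of Section~\ref{sec:majority-voting}. The most transparent route is hierarchical: recursively partition the $\ell$ servers into two virtual meta-servers of capacity roughly $(1+\varepsilon)n/2$ each, run two-server Majority Voting to decide which meta-server each connected component joins, and recurse inside each meta-server. The recursion has $O(\lg \ell)$ levels, and each level applies Proposition~\ref{prop:majority-voting} with its own share of the augmentation budget, yielding competitive ratio $O(\lg n \lg \ell)$ (the $\lg \ell$ being the recursion depth) and total augmentation $O(\Delta \lg \ell)$ per physical server. Thus, whenever $\Delta \leq c\,\varepsilon n / (\ell \lg \ell)$ for an appropriate constant $c$, this algorithm stays within the permitted augmentation $\varepsilon n /\ell$ per server.

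Algorithm~\ref{algo:small-large-rebalance} already operates on $\ell$ servers, so for the large-$\Delta$ regime I would directly apply its analysis: the potential argument of Lemma~\ref{lem:small-large-only} is color-blind, so small-to-large steps incur $O(n \lg n)$ moves in total. A rebalancing is triggered only after some server's augmentation $\varepsilon n/\ell$ is overflowed by small-to-large moves, so there are at most $O((\ell \lg n)/\varepsilon)$ rebalancings, each costing $O(\Delta)$ moves under the cheap-rebalancing strategy of Section~\ref{sec:efficient-small-large-rebalance}. The total cost is $O(\alpha n \lg n + \alpha \Delta \ell \lg n / \varepsilon)$. Combining the two algorithms as in Section~\ref{sec:proof-thm-2servers} and using the certificate $\Delta = \Omega(\varepsilon n / (\ell \lg \ell))$ obtained when the small-$\Delta$ algorithm exceeds its budget, the overall competitive ratio is $O((\ell \lg \ell \lg n)/\varepsilon)$.

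The main obstacle is generalizing Majority Voting cleanly. A direct $\ell$-color plurality version weakens Lemma~\ref{lem:majority-voting:minority-color-wrong-server}: a component on the wrong server is only guaranteed to have a $1/\ell$ fraction of that server's color, which inflates both augmentation and cost by $\poly(\ell)$ factors. The hierarchical reduction avoids this by using only two-color majorities at each level, but then requires careful bookkeeping of how $\Delta$, augmentation, and cost distribute across the $O(\lg \ell)$ levels; this accounting is the technical heart of the proof and is the source of the extra $\lg \ell$ factor in the final bound.
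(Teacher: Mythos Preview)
Your proposal is essentially correct and follows the same architecture as the paper: a hierarchical bipartition of the $\ell$ servers, a recursive majority-voting scheme for the small-$\Delta$ regime, and Algorithm~\ref{algo:small-large-rebalance} (with the cheap-rebalancing variant) for the large-$\Delta$ regime, glued together by a stopping criterion that certifies $\Delta = \Omega(\varepsilon n/(\ell \lg \ell))$.

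A few differences in execution are worth noting. First, the paper does not attempt a direct $\ell$-server analog of Lemma~\ref{lem:cost-opt-2server}; instead it introduces an intermediate offline algorithm \APPROXALGO\ (Section~\ref{sec:offline}) that greedily traverses the bipartition tree and is shown to be a $2$-approximation of \OFF\ (Lemma~\ref{lem:offline-cost}). The online majority-voting algorithm is then compared against \APPROXALGO\ rather than \OFF\ directly, which sidesteps the difficulty of characterizing \OFF. Second, the paper's majority voting is not literally a recursion of independent two-server instances: each time a component's size crosses a power of $2$, a \emph{single} vote traverses the entire bipartition tree top-down and places the component on one physical server (Algorithm~\ref{algo:recursive-majority-voting}). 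This integrated design avoids the bookkeeping issue you flag---at lower levels of a naive recursion the ``initial assignment'' would already be perturbed by higher-level moves---and in fact yields $O(\lg n)$ rather than $O(\lg n \lg \ell)$ competitiveness in the small-$\Delta$ case (Lemma~\ref{lem:recursive-majority-voting:total-cost}); the $\lg \ell$ factor enters only through the stopping-criterion threshold and the resulting lower bound on \OFF\ (Lemma~\ref{lem:recursive-majority-voting:stop}). Your looser $O(\lg n \lg \ell)$ bound in that regime is harmless for the final theorem, since the large-$\Delta$ branch dominates anyway.
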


Our algorithm will be  
based on a recursive bipartitioning scheme, 
described in Section~\ref{sec:bipartitioning}. 
We will use this bipartitioning
scheme to derive a static approximation algorithm 
of the optimal solution
(Section~\ref{sec:offline}). Then we provide 
a recursive version of the majority
voting algorithm which we will compare against the approximation algorithm
(Section~\ref{sec:recursive-majority-voting}). In
Section~\ref{sec:small-large-rebalance-lservers}, we analyze the
Small--Large--Rebalance algorithm in the $\ell$ server setting and
we conclude 
by proving Theorem~\ref{thm:lservers} in
Section~\ref{sec:proof-thm-lservers}

\subsection{The Bipartition Tree}
\label{sec:bipartitioning}

We establish a recursive bipartitioning scheme of the
servers which we will be using throughout the rest of this section. All
algorithms in this section which use the recursive bipartitioning create such a
bipartitioning at the start of the algorithm, before the adversary provides any
edge. After that the bipartitioning will never be changed.

We obtain the bipartition scheme by growing a balanced binary tree on a set of
$\ell$ leaves, where each leaf corresponds to a server $S_i$. We call this tree
the \emph{bipartition tree} and denote it by $\T$.

We denote the internal nodes of $\T$ by $w_1, \dots, w_s$. For an internal node
$w_j$, we write $T(w_j)$ to denote the subtree of $T$ which is rooted at $w_j$
and we define $S(w_j)$ to be the set of servers which are leaves in
$T(w_j)$.  We further write $V(w_i)$ to denote the set of vertices which are
assigned to the servers in $S(w_j)$. See Figure~\ref{fig:bipartition-tree} for an
illustration.

\begin{figure}
  \begin{centering}
    \includegraphics[width=.99\columnwidth]{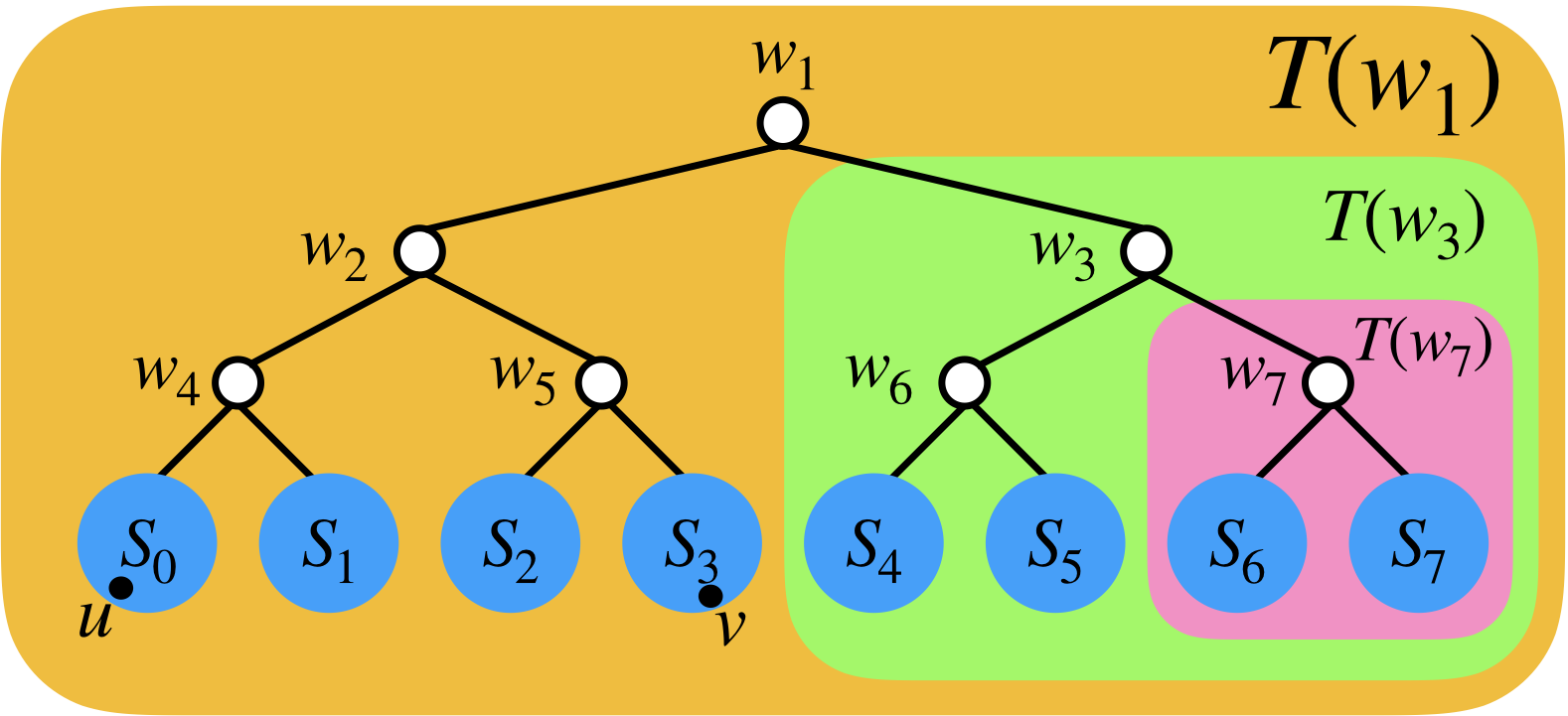}
	\caption{ An illustration of the bipartition tree $\T$ for servers
			$S_0,\dots,S_7$. The internal nodes of the bipartition tree are
			denoted $w_1,\dots,w_7$. We highlighted the subtrees $\T = T(w_1)$,
			$T(w_3)$, and $T(w_7)$. Here we obtain the server sets $S(w_1) = \{S_0,\dots,S_7\}$,
			$S(w_3) = \{S_4,\dots,S_7\}$, and $S(w_7) = \{S_6,S_7\}$.  }
	\label{fig:bipartition-tree}
  \end{centering}
\end{figure}

Observe that $\T$ defines a bipartition scheme: let $w$ be an internal node of
$\T$ and let $w_0$, $w_1$ be its children. Then\footnote{If $w_j$ is a
	leaf corresponding to server $S$, we set $S(w_j) = \{S\}$.}
$S(w_0)$ and $S(w_1)$ are disjoint and their union is $S(w)$. Thus, $\T$ implies
a bipartition scheme of the servers and internal nodes correspond to
bipartition~steps.

Note that since $\T$ is a balanced binary tree, there are $\ell-1$ internal nodes
in total and each server is contained in at most $\lceil \lg \ell \rceil$ subtrees
of $T$. Hence, for each server $S_j$ there are at most $\lceil \lg \ell \rceil$
internal vertices $w$ such that $S_j \in S(w)$.

In the following we will refer to the internal nodes in $\T$ as \emph{nodes},
whereas the vertices $V$ from the graph $G$ are called \emph{vertices}.

\subsection{Offline Approximation Algorithm}
\label{sec:offline}

We are not aware of a concise characterization of the optimal solution used by
\OFF (unlike in the two-server case in Section~\ref{sec:cost-opt}).  Thus, to
get a better understanding of the solution obtained by \OFF, we provide an
\emph{offline} approximation algorithm, called \APPROXALGO, which exploits the
previously defined bipartition scheme and which obtains a 2-approximation of the
optimal solution.  However, unlike the solution obtained by \OFF, we allow the
approximation algorithm to use \emph{unlimited} augmentation in each server; its
only goal is to move all vertices from the same ground truth components to the
same server using few vertex moves.\footnote{In this setting, a trivial solution
	assigns all vertices to the same server at cost $O(\alpha n)$.}
Later, \APPROXALGO will play a role for the design and analysis of our
online algorithm.

Intuitively, \APPROXALGO traverses the bipartition tree $\T$ top--down and
greedily minimizes the number of vertices ``moved over'' each server
bipartition.  We now describe the algorithm in more detail.

\APPROXALGO is given the sequence of edges $\sigma = (e_1, \dots, e_r)$
\emph{a priori} and it also knows the initial assignment $\init(S_0), \dots,
\init(S_{\ell-1})$ of the vertices to the $\ell$ servers.  Using the knowledge
about the edges, \APPROXALGO starts by computing the connected components of $G$
and obtaining the ground truth components $V_0, \dots, V_{\ell-1}$. 

Now, for each ground truth component $V_i$, \APPROXALGO does the following.  Let
$\root$ be the root of $\T$ and let $w_0$ and $w_1$ denote its children.  Let
$A_{ij} = V(w_j) \cap V_i$, $j = 0,1$, denote the vertices from $V_i$ which are
currently assigned to servers in $S(w_j)$. Define $n_{ij} = |A_{ij}|$ and assume
w.l.o.g.\ that $n_{i0} \geq n_{i1}$. The algorithm marks the vertices from
$A_{i1}$ as \emph{dirty}. Now the algorithm recurses on the subtree $T(w_0)$ in
place of $\T$ and marks more vertices of $V_i$ as dirty.  The recursion stops
when $S(w_0)$ only contains a single server $S$.  Then the algorithm moves all
dirty vertices of $V_i$ into server $S$. 

The pseudocode of \APPROXALGO is stated in Algorithm~\ref{algo:offline}.

\begin{algorithm}[t!]
\caption{The static approximation algorithm \APPROXALGO}\label{algo:offline}
\begin{algorithmic}[1]
\Input{All edges $e_1, \dots, e_r$ at once}
\Procedure{\APPROXALGO}{$e_1, \dots, e_r$}
	\State Compute the connected components $V_0, \dots, V_{\ell-1}$ of $G$
	\For{$i = 0, \dots, \ell-1$}
		\State \textsc{RecursiveStep}($\T$, $i$)
	\EndFor
\EndProcedure

\Procedure{RecursiveStep}{$T$, $i$}
	\If{$T$ contains only a single server $S$}
		\State Move all dirty vertices of $V_i$ into $S$
		\State \Return
	\EndIf
	\State Let $\root$ be the root of $T$ and denote its children $w_0$, $w_1$
	\State $A_{ij} \leftarrow V(w_j) \cap V_i$, $j = 0,1$
	\State $n_{ij} \leftarrow |A_{ij}|$ and suppose w.l.o.g.\ that $n_{i0} \geq n_{i1}$
	\State Mark all vertices in $A_{i1}$ dirty
	\State \textsc{RecursiveStep}($T(w_0)$, $i$)
\EndProcedure
\end{algorithmic}
\end{algorithm}

By overloading notation, we let \APPROXALGO denote the cost paid by \APPROXALGO.
Further, we let $\APPROXALGO_i$ denote the cost paid by \APPROXALGO to move all
vertices from $V_i$ to the same server $S$.

We now show that \APPROXALGO indeed yields a 2-approximate solution of the cost
of the optimal offline algorithm.

\begin{lemma}
\label{lem:offline-cost}
	$\APPROXALGO \leq 2 \cdot \OFF$.
\end{lemma}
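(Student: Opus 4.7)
My plan is to prove the lemma componentwise: I will show that for every ground truth component $V_i$, the cost $\APPROXALGO_i$ paid by $\APPROXALGO$ on the vertices of $V_i$ is at most $2\OFF_i$, where $\OFF_i$ is the cost that \OFF\ pays on the vertices of $V_i$. Summing over $i$ then yields $\APPROXALGO \le 2\OFF$, since every vertex move is attributable to a unique component.

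First I would characterize $\APPROXALGO_i$ precisely. Inspecting Algorithm~\ref{algo:offline}, each vertex of $V_i$ is moved at most once: it is marked dirty at the (unique) level of the recursion at which its subtree becomes the smaller one, and it is only physically relocated once the recursion hits a leaf $L_i^{\APPROXALGO}$, namely the leaf obtained by greedily descending into the child whose subtree currently contains more $V_i$-vertices. Crucially, since the outer loop of \APPROXALGO\ processes one component at a time and the components are disjoint, the positions of $V_i$'s vertices during the $i$-th iteration equal their initial positions; hence the quantities $n_{ij} = |V_i \cap \init(S(w_j))|$ driving the greedy descent are determined entirely by $\init$. I conclude $\APPROXALGO_i = \alpha\cdot|V_i \setminus \init(L_i^{\APPROXALGO})|$. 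On the other hand, $\OFF$ must produce a perfect partitioning, so it assigns $V_i$ to some server $S_i^{\OFF}$, giving the lower bound $\OFF_i \ge \alpha\cdot|V_i \setminus \init(S_i^{\OFF})|$; further, if $L_i^\star$ maximizes $|V_i \cap \init(S)|$ over all servers $S$, then $\OFF_i \ge \alpha\cdot|V_i \setminus \init(L_i^\star)|$.

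Next I would split into two cases according to whether $L_i^\star$ holds a strict majority of $V_i$. If $|V_i \cap \init(L_i^\star)| > |V_i|/2$, I claim the greedy descent of \APPROXALGO\ actually lands on $L_i^\star$: at every node $w$ on the root-to-$L_i^\star$ path, the subtree containing $L_i^\star$ holds more than $|V_i|/2$ vertices of $V_i$ while its sibling holds fewer than $|V_i|/2$, so the tie-free comparison $n_{i0} \ge n_{i1}$ always selects the correct child. Thus $L_i^{\APPROXALGO} = L_i^\star$ and $\APPROXALGO_i \le \OFF_i$. Otherwise $|V_i \cap \init(L_i^\star)| \le |V_i|/2$, which forces $\OFF_i \ge \alpha|V_i|/2$; since trivially $\APPROXALGO_i \le \alpha|V_i|$ (no vertex moves more than once), we obtain $\APPROXALGO_i \le 2\OFF_i$.

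Finally, summing $\APPROXALGO_i \le 2\OFF_i$ over $i$ completes the proof once I verify that $\OFF = \sum_i \OFF_i$, which holds because each individual vertex move by \OFF\ belongs to a unique component. The only subtle step is the greedy-reaches-the-majority-leaf claim in Case~1; this is the main place one must be careful, but it follows cleanly from the fact that during the processing of $V_i$ no $V_i$-vertex has yet moved, so the sizes guiding the recursion are the $\init$-based counts $n_{ij}$, and a leaf holding a strict majority of $V_i$ dominates its sibling subtree at every level of~$\T$.
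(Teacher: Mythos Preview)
Your proof is correct, but it follows a genuinely different route from the paper's.

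The paper also works componentwise, but instead of a majority case split it compares two root-to-leaf traversals of $\T$: the greedy traversal of \APPROXALGO\ and the traversal ending at \OFF's server $S_i^{\OFF}$. Letting $h$ be the first level at which the two traversals disagree, both have marked the same set of $W$ vertices dirty above that level; at the node $w$ of level $h-1$ with counts $n_{i0} \ge n_{i1}$, \APPROXALGO\ can mark at most all of $V_i \cap V(w)$ (that is, $n_{i0}+n_{i1}$ further vertices), whereas \OFF, having branched toward the smaller side, must move at least $n_{i0}$ further vertices. Hence $\APPROXALGO_i/\OFF_i \le (W+n_{i0}+n_{i1})/(W+n_{i0}) \le 2$.

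Your argument instead characterizes $\APPROXALGO_i = \alpha\,|V_i \setminus \init(L_i^{\APPROXALGO})|$ (correctly noting that the greedy descent sees only the \emph{initial} positions of $V_i$'s vertices), and splits on whether some leaf holds a strict majority of $V_i$. In the majority case, the greedy descent provably lands on that leaf, so \APPROXALGO\ is in fact optimal on $V_i$; in the no-majority case, $\OFF_i \ge \alpha|V_i|/2$ while $\APPROXALGO_i \le \alpha|V_i|$. Your approach is more elementary and yields the sharper qualitative observation that \APPROXALGO\ actually matches \OFF\ on any component with a majority server; the paper's diverging-traversals argument avoids the case distinction and handles all instances uniformly with a single inequality.
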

\begin{proof}
	Fix any $i \in \{0,\dots,\ell-1\}$. Let $\OFF_i$ denote the cost paid
	by \OFF to move the vertices from $V_i$ to the same server. We show that
	$\APPROXALGO_i \leq 2 \cdot \OFF_i$. This claim implies the lemma since
	\begin{align*}
		\APPROXALGO = \sum_i \APPROXALGO_i \leq 2 \sum_i \OFF_i = 2 \cdot \OFF.
	\end{align*}

	Observe that while $\APPROXALGO_i$ proceeds, it traverses $\T$ from root
	$\root$ to one of the leaves, and at each step, it increases the level of
	the current internal node by one. 

	Using the solution of $\OFF_i$, we can define a similar traversal of $\T$:
	Let $\root$ be the root of $\T$ and let $w_0$ and $w_1$ be its children.  As
	$\OFF_i$ must move all vertices from $V_i$ to the same server $S$, $\OFF_i$
	moves the vertices from $A_{ij} = V(w_j) \cap V_i$ to a server $S$ in
	$S(w_{1-j})$ for $j\in\{0,1\}$.  We call the moved vertices \emph{dirty}.
	After this move,  $\OFF_i$ still needs to process the vertices of $V_i$
	which were initially assigned to a server in $S(w_{1-j})$ but not to $S$.
	We can view this as processing $T(w_{j-1})$. Thus, $\OFF_i$ traverses $\T$
	until the final server $S$ is reached and marks a subset of $V_i$ dirty.

	The previous paragraphs define to two different traversals of $\T$ and two
	different sets of dirty vertices. Let $h$ be the smallest level where the
	two traversals picked different internal nodes in $\T$.

	Until level $h-1$, both vertices have marked the same $W$ vertices dirty.
	At levels $h$ and below, we obtain the following bounds. Let $w$ be the
	internal node at level $h-1$ that is traversed by both algorithms and let
	$w_0$, $w_1$ denote its children at level $h$.  Let $n_{ij} = |V(w_j) \cap
	V_i|$ be defined as in the definition of \APPROXALGO.  $\APPROXALGO_i$ marks
	at most $|V(w)| = n_{i0} + n_{i1}$ vertices from $V(w)$ as dirty.  Since the
	two traversals split at level $h$ and $\APPROXALGO_i$ moves $n_{i1} \leq
	n_{i0}$ vertices (by definition), $\OFF_i$ moves at least $n_{i0}$ vertices.

	Recall that for each algorithm, its sets of dirty vertices and its set of
	moved vertices are identical. Now the following computation proves the claim
	that $\APPROXALGO_i \leq 2 \cdot \OFF_i$:
	\begin{align*}
		\frac{\APPROXALGO_i}{\OFF_i}
		\leq \frac{\alpha(W + n_{i0} + n_{i1})}{\alpha(W + n_{i0})}
		\leq \frac{W + 2n_{i0}}{W + n_{i0}}
		\leq 2.
		& \qedhere
	\end{align*}
\end{proof}

\subsection{The Recursive Majority Voting Algorithm}
\label{sec:recursive-majority-voting}

We now describe an algorithm which works efficiently in the setting with $\ell$
servers whenever \OFF does not perform too many vertex moves.  The algorithm can
be viewed as a generalization of Algorithm~\ref{algo:majority-voting} to $\ell$
servers, by exploiting the previously defined bipartitioning scheme.

\subsubsection{The Algorithm}
The algorithm consists of two parts: A single \emph{global algorithm} and
multiple \emph{local algorithms}, one per internal node in $\T$. The global
algorithm maintains a recursive bipartitioning scheme (as defined in
Section~\ref{sec:bipartitioning}) and runs a local algorithm on each
bipartition. The local algorithms are used to ``reduce'' the setting with
multiple servers to the case with two servers.

We now describe the two parts in more detail and state the pseudocode in
Algorithm~\ref{algo:recursive-majority-voting}. We write $S_u$ to denote the
server which vertex $u$ is assigned to.

\paragraph{Global Algorithm.}
The global algorithm starts by computing the bipartition tree $\T$. On each
internal node $w$ of $\T$, the global algorithm instantiates a local algorithm
which we describe below.

Furthermore, the global algorithm iterates over all vertices and does the following
for each $v \in V$. The algorithm finds all internal nodes $w_i$ such that $v
\in V(w_i)$ and labels $v$ with $w_i$.  This labelling of the vertices only
takes into account the initial assignment of the vertices and will never be
changed throughout the running time of the algorithm.  For example, if the
vertices $u$ and $v$ in Figure~\ref{fig:bipartition-tree} are assigned to
servers $S_0$ and $S_3$ in the initial assignment, their labels will be
$\{w_1,w_2,w_4\}$ and $\{w_1,w_2,w_5\}$, respectively.

When the adversary provides an edge $(u,v)$, the global algorithm does the
following.  It locates the servers $S_u$ and $S_v$. If $S_u = S_v$, the
algorithm merges the components and continues with the next edge. If
$S_u \neq S_v$, the global algorithm finds the internal node $w$ in $\T$ which
is the lowest common ancestor of $S_u$ and $S_v$.  (For example, in
Figure~\ref{fig:bipartition-tree} the lowest common ancestor for $u$ and $v$ is
$w_2$.) Then the global algorithm gives the edge $(u,v)$ to the local
algorithm corresponding to $w$.

\paragraph{Local Algorithms.}
A local algorithm is run on an internal node $w$ of $\T$.  Let $w_0$ and $w_1$
denote the children of $w$ in $\T$.  Note that each local algorithm corresponds
to a bipartition step where the servers in $S(w)$ are partitioned into subsets
$S(w_0)$ and $S(w_1)$.

An instance of the local algorithm only receives edges $(u,v)$ from the global
algorithm when (1)~their endpoints are assigned to servers $S_u, S_v \in S(w)$
and (2)~$S_u$ and $S_v$ are in different sets of the bipartition, i.e., $S_u \in
S(w_j)$ and $S_v \in S(w_{1-j})$.

When the global algorithm provides an edge $(u,v)$ with the above properties,
the local algorithm locates $C_u, C_v$, $S_u$ and $S_v$. Assume
w.l.o.g.\ that $|C_u| \leq |C_v|$.  Then $C_u$ is moved to $S_v$ and
$C_u$ and $C_v$ are merged.\footnote{When $C_u$ changes its server, all local
algorithms corresponding to internal nodes $w$ with $S_u \in S(w)$ or
$S_v \in S(w)$, must be informed about this move. This can be done by
recomputing $V(w)$ for each internal node $w$.  Note that
this is just an internal operation of the data structure and does not incur
any cost to the algorithm.} As before, we call
this a \emph{small-to-large step}.

Finally, the local algorithm checks whether the new component $C_u \cup C_v$ has
size $n/\ell$ or it surpassed a power of $2$, i.e., it checks if $|C_u \cup C_v| =
n/\ell$ or there exists an $i \in \mathbb{N}$ s.t.\ $|C_u| < 2^i$, $|C_v| < 2^i$
and $|C_u \cup C_v| \geq 2^i$. If this is the case, the local algorithm triggers
a majority voting step for $C_u \cup C_v$ which we explain next.

\paragraph{Majority Voting Step.}
When a local algorithm triggers a majority voting step for a connected component
$C$, the algorithm does the following. Let $\root$ be the root of $\T$ and let $w_0$
and $w_1$ be the two child nodes of $\root$. For $j \in \{0,1\}$, let $n_j$ denote the number of
vertices in $C$ with label $w_j$. If $n_j \geq n_{1-j}$,
the algorithm recurses on $w_j$ in place of $w$; else, the algorithm recurses on
$w_{1-j}$ in place of $w$. The recursion continues until a leaf in the
bipartitioning tree is reached which corresponds to a server $S$. Then the
algorithm moves $C$ to $S$.

Note that the above majority voting procedure is very similar to what
\APPROXALGO does for a single ground truth component $V_i$.

\begin{algorithm}[t!]
\caption{The Recursive Majority Voting Algorithm}\label{algo:recursive-majority-voting}
\begin{algorithmic}[1]
\Input{A sequence of edges $\sigma=(e_1, \dots, e_r)$}
\Procedure{GlobalAlgorithm}{$e_1, \dots, e_r$}
	\State Create a bipartition tree $\T$			 \Comment{Initialization phase}
	\For{each internal node $w$ of $\T$}
		\State Instantiate \textsc{LocalAlgorithm}($w$)
	\EndFor
	\For{$v \in V$}
		\State Label $v$ with each internal node $w$ of $\T$ s.t.\ $v \in V(w)$
	\EndFor
	\For{$i = 1, \dots, r$}	\Comment{Processing of the edges}
		\State $(u,v) \leftarrow e_i$
		\If{$S_u = S_v$}
			\State Merge $C_u$ and $C_v$, \textbf{continue}
		\EndIf
		\State $w \leftarrow$ the lowest common ancestor of $S_u$ and $S_v$ in $\T$
		\State \textsc{LocalAlgorithm}($w$, $(u,v)$)
	\EndFor
\EndProcedure

\Procedure{LocalAlgorithm}{$w$, $(u,v)$}
	\State $w_0, w_1 \leftarrow$ the children of $w$ in $\T$
	\State Suppose w.l.o.g.\ that $|C_u| \leq |C_v|$
	\State Check if moving $C_u$ to $S_v$ triggers the stopping criterion
	\State Move $C_u$ to $S_v$ and merge $C_u$ and $C_v$
			\Comment{Small-to-large step}
	\If{$|C_u \cup C_v| = n/\ell$ \textbf{or}
			there exists an $i \in \mathbb{N}$
			s.t.\ $|C_u| < 2^i$, $|C_v| < 2^i$ and $|C_u \cup C_v| \geq 2^i$}
		\State \textsc{MajorityVotingStep}($C_u \cup C_v$)
	\EndIf
\EndProcedure

\Procedure{MajorityVotingStep}{$C$}
	\State $\root \leftarrow$ the root of $\T$
	\State\label{line:majority-voting-recursion} $w_0, w_1 \leftarrow$ the
		children of $\root$ in $\T$
	\State $n_j \leftarrow$ the number of vertices labeled with $w_j$ in $C$, $j=0,1$
	\If{$n_0 \geq n_1$}
		$\root \leftarrow w_0$
		\textbf{else} 
		$\root \leftarrow w_1$
	\EndIf
	\If{$S(r)$ contains only one server}
		\State Check if moving $C$ to $S(r)$ triggers the stopping criterion
		\State Assign $C$ the single server in $S(r)$
	\Else \, Go to Line~\ref{line:majority-voting-recursion}
	\EndIf	
\EndProcedure
\end{algorithmic}
\end{algorithm}

\paragraph{Stopping Criterion.}
To ensure that the algorithm does not exceed the augmentation of the
servers, we add a stopping criterion.

To define the stopping criterion, let $w$ be an internal node of $\T$ with
children $w_0$ and $w_1$. For $j \in \{0,1\}$, we call $w_j$
\emph{overloaded} if $V(w_j)$ contains at least
$\varepsilon n/(\ell \lceil \lg \ell \rceil)$ vertices with label
$w_{1-j}$.

Intuitively, the condition states that an internal
node $w_j$ is overloaded when its servers $S(w_j)$ obtained ``many'' vertices
which were initially assigned to the other side of the bipartition, $S(w_{1-j})$.

The \emph{stopping criterion} is checked before each component move (i.e.,
before each small-to-large step and before each majority voting step). It is
triggered if the component move would create an assignment in which there exists
an overloaded internal node $w$.  When the stopping criterion is triggered, the
global algorithm and all local algorithms stop and
Algorithm~\ref{algo:small-large-rebalance} is started instead (we show in
Section~\ref{sec:small-large-rebalance-lservers} that
Algorithm~\ref{algo:small-large-rebalance} also works for $\ell$ servers).

\subsubsection{Structural Properties}
\label{sec:structural-properties}
To obtain a better understanding of the algorithm, we first prove some
structural properties about it and defer its cost analysis to
Section~\ref{sec:recursive-majority-voting:analysis}. We consider the setting
where each server has capacity $(1+\varepsilon)n/\ell$ for $\varepsilon \in
(0,1/2)$.

In Subsections~\ref{sec:structural-properties}
and~\ref{sec:recursive-majority-voting:analysis}, we only analyze the cost
Algorithm~\ref{algo:recursive-majority-voting} without the cost of
Algorithm~\ref{algo:small-large-rebalance}.  We analyze the cost of
Algorithm~\ref{algo:small-large-rebalance} for $\ell$ servers in
Sections~\ref{sec:small-large-rebalance-lservers}
and~\ref{sec:proof-thm-lservers}.

We begin by showing that as long as the stopping criterion is not triggered, the
vertex assignment created by Algorithm~\ref{algo:recursive-majority-voting} is
close to the initial assignment.
\begin{lemma}
\label{lem:recursive-majority-voting:augmentation-light}
	Suppose the stopping criterion is not triggered. Then:
	\begin{enumerate}
	\item Each server contains at most $\varepsilon n/\ell$ vertices that were
			not initially assigned to it.
	\item Each server contains at least $(1-\varepsilon) n/\ell$ vertices that
			were initially assigned to it.
	\end{enumerate}
\end{lemma}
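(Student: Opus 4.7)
The plan is to exploit the fact that, since the stopping criterion is not triggered, for every internal node $w$ of $\T$ with children $w_0, w_1$ and every $j \in \{0,1\}$, the set $V(w_j)$ contains fewer than $\varepsilon n / (\ell \lceil \lg \ell \rceil)$ vertices whose label includes $w_{1-j}$. Since each server $S$ is a leaf of $\T$, it has at most $\lceil \lg \ell \rceil$ ancestors in $\T$, and summing the non-overload bound across these ancestors should yield exactly the desired $\varepsilon n / \ell$.

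For part~(1), I would set up the following charging. For each vertex $v$ currently on $S$ but with $v \notin \init(S)$, let $w$ be the lowest common ancestor in $\T$ of $S$ and $v$'s initial server $S'$, and let $w_j, w_{1-j}$ be the children of $w$ with $S \in \S(w_j)$ and $S' \in \S(w_{1-j})$. Then $v$'s label includes $w_{1-j}$ (because $v \in \init(S') \subseteq V(w_{1-j})$ initially), and $v$ currently lies in $V(w_j)$ (because it sits on $S \in \S(w_j)$). Charge $v$ to the ancestor $w$. The LCA is unique, so each bad vertex is charged to exactly one ancestor. Since $S$ has at most $\lceil \lg \ell \rceil$ ancestors, and the stopping criterion bounds the number of vertices charged to each one by $\varepsilon n/(\ell \lceil \lg \ell \rceil)$, the total is at most $\varepsilon n/\ell$.

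For part~(2), I would argue symmetrically: the number of vertices in $\init(S)$ that have \emph{left} $S$ is at most $\varepsilon n/\ell$. For each such vertex $v$, if $v$ is now on $S'' \neq S$, take the LCA $w$ of $S$ and $S''$ with children $w_j \ni S$ and $w_{1-j} \ni S''$; then $v$'s label includes $w_j$ (since $v \in \init(S) \subseteq V(w_j)$ initially) but $v$ now sits in $V(w_{1-j})$. So $v$ is counted by the non-overload inequality applied at $w$ on the side $w_{1-j}$. The same counting as before gives at most $\varepsilon n/\ell$ such vertices. Since the initial load of $S$ is exactly $n/\ell$, at least $(1-\varepsilon)n/\ell$ vertices of $\init(S)$ are still on $S$.

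The main subtlety is not in the charging itself but in bookkeeping: one must verify that each ``misplaced'' vertex is counted by exactly one ancestor of $S$ (guaranteed by uniqueness of LCAs), and that the two directions (vertices arriving at $S$ versus leaving $S$) correspond to bounding the two different sides $V(w_j)$ and $V(w_{1-j})$ at the same internal node $w$, which is precisely why the definition of overloaded is stated symmetrically over both children. Once this is set up, the proof reduces to the simple telescoping sum $\lceil \lg \ell \rceil \cdot \varepsilon n/(\ell \lceil \lg \ell \rceil) = \varepsilon n/\ell$.
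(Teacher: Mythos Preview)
Your proposal is correct and follows essentially the same approach as the paper's proof: both arguments iterate over the at most $\lceil \lg \ell \rceil$ internal ancestors of a fixed server $S$ and apply the non-overload bound at each one to collect at most $\varepsilon n/(\ell \lceil \lg \ell \rceil)$ misplaced vertices per ancestor. Your explicit use of the lowest common ancestor to charge each misplaced vertex to a unique ancestor is a cleaner way of phrasing why the sum does not overcount, but the underlying argument is identical to the paper's.
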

\begin{proof}
	Consider any server $S_j$. We show that since the stopping criterion is not
	triggered, $V(S_j)$ obtains at most
	$\varepsilon n / (\ell \lceil \lg \ell \rceil)$ vertices for each of the
	$\lceil \lg \ell \rceil$ subtrees in $\T$ containing $S_j$.
	
	As argued in Section~\ref{sec:bipartitioning}, there are at most
	$\lceil \lg \ell \rceil$ internal nodes $w$ of $\T$ such that $S_j \in S(w)$.
	Since the stopping criterion is not triggered, no internal node of $\T$ is
	overloaded.
	
	To prove Part~(1), consider an internal node $w$ of $\T$ with $S_j \in
	S(w)$. Let $w_0, w_1$ be the children of $w$ and suppose $S_j \in S(w_r)$.
	Observe that $V(S_j)$ can obtain at most
	$\varepsilon n / (\ell \lceil \lg \ell \rceil)$ vertices that were originally
	assigned to servers in $S(w_{1-r})$ (if it had received more vertices, then
	$w_{1-r}$ would be overloaded). As there are at most $\lceil \lg \ell \rceil$
	nodes $w$ with the above property, the number of vertices which were not
	initially assigned to $S_j$ is bounded by $\varepsilon n / \ell$.

	Now let us prove Part~(2). Consider an internal node $w$ of
	$\T$ with $S_j \in S(w)$. Let $w_0, w_1$ be the children of $w$ and suppose
	$S_j \in S(w_r)$. Now observe that the servers in $S(w_{1-r})$ can have
	obtained $\varepsilon n / (\ell \lceil \lg \ell \rceil)$ vertices that were
	originally assigned to $S_j$ (if they had received more vertices, then
	$w_{1-r}$ would be overloaded). As there are at most $\lceil \lg \ell \rceil$
	nodes $w$ with the above property, it follows that the number of vertices
	assigned to servers $\{S_0, \dots, S_{\ell-1}\} \setminus \{ S_{j} \}$ that
	were initially assigned to $S_j$ is $\varepsilon n / \ell$. Hence, $S_j$
	must contain at least $(1-\varepsilon)n/\ell$ vertices that were initially
	assigned to it.
\end{proof}

As a corollary of Lemma~\ref{lem:recursive-majority-voting:augmentation-light}
we obtain the following lemma.
\begin{lemma}
\label{lem:recursive-majority-voting:augmentation}
	(1) As long as the stopping criterion is not triggered, the load of each server
	is bounded by $(1+\varepsilon)n/\ell$, i.e.,
	Algorithm~\ref{algo:recursive-majority-voting} uses only $\varepsilon n/\ell$
	augmentation.

	(2) When the stopping criterion is triggered, the augmentation still does
	not exceed $\varepsilon n/\ell$.
\end{lemma}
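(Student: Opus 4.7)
My plan is to derive both parts of the lemma as quick corollaries of Lemma~\ref{lem:recursive-majority-voting:augmentation-light}, since the notion of ``overloaded'' is designed precisely so that a server's load decomposes into a native and a foreign part, each of which is already controlled by that lemma. No new combinatorial argument should be required; the real work was absorbed into the proof of Lemma~\ref{lem:recursive-majority-voting:augmentation-light}.

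For Part~(1), I would fix an arbitrary server $S_j$ and decompose its load into vertices that were initially assigned to $S_j$ and vertices that were not. The first quantity is trivially bounded by $n/\ell$ (the initial load). For the second quantity, I would apply Part~(1) of Lemma~\ref{lem:recursive-majority-voting:augmentation-light}, which gives a bound of $\varepsilon n/\ell$ as long as the stopping criterion has not been triggered. Adding the two yields a total load of at most $(1+\varepsilon)n/\ell$, i.e., the augmentation used is at most $\varepsilon n/\ell$, which is exactly the bound we want.

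For Part~(2), the key observation is a timing point about the algorithm: the stopping criterion is tested \emph{before} executing any small-to-large step or majority voting step, and if the prospective move would overload some internal node, the move is cancelled and Algorithm~\ref{algo:small-large-rebalance} takes over. Consequently, at the moment the stopping criterion fires, no overloading move has actually been carried out, so the current assignment is still one in which no internal node of $\T$ is overloaded. Thus the hypothesis of Lemma~\ref{lem:recursive-majority-voting:augmentation-light} is satisfied at that moment, and Part~(1) of the present lemma (already established in the previous paragraph) yields the desired load bound of $(1+\varepsilon)n/\ell$.

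I do not anticipate any real obstacle here; the only subtle point to state carefully is the ``check-before-move'' convention of the stopping criterion, since otherwise one might worry about a move that violates capacity being executed first and only then diagnosed. Once that is spelled out, both parts reduce to bookkeeping on top of Lemma~\ref{lem:recursive-majority-voting:augmentation-light}.
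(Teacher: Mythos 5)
Your proposal is correct and matches the paper's own proof: Part~(1) is read off from Part~(1) of Lemma~\ref{lem:recursive-majority-voting:augmentation-light} (your explicit native/foreign decomposition just spells out why that lemma gives the load bound), and Part~(2) rests on exactly the same check-before-move observation the paper uses. No gaps.
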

\begin{proof}
	Part~(1) of the lemma follows immediately from Part~(1) of
	Lemma~\ref{lem:recursive-majority-voting:augmentation-light}. Let us prove
	Part~(2): The stopping criterion is checked every time before a component is
	moved. Hence, at the time when the algorithm checks the stopping criterion,
	the algorithm did not exceed the augmentation bound due to Part~(1). If the
	algorithm triggers the stopping criterion, then the component was not yet
	moved and the augmentation is still the same as before.
\end{proof}

Define the \emph{final assignment} to be the assignment which is created by
Algorithm~\ref{algo:recursive-majority-voting} once it has seen all edges in
$G$.  We show that the final assignment of the algorithm provides a perfect
partitioning if the stopping criterion is not triggered.

\begin{lemma}
\label{lem:recursive-majority-voting:perfect-partitioning}
	If Algorithm~\ref{algo:recursive-majority-voting} stops and the stopping
	criterion is not triggered, then the final assignment is a perfect
	partitioning.
\end{lemma}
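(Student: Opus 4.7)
The plan is to combine three ingredients: (i) the algorithm never splits a connected component across servers, so whole ground truth components are moved atomically; (ii) once all edges of $\sigma$ have been revealed the connected components $C_i$ coincide with the ground truth components $V_0,\dots,V_{\ell-1}$, each of size $n/\ell$; and (iii) the load bounds from Lemma~\ref{lem:recursive-majority-voting:augmentation-light} pin each server's load into a narrow band that, together with the granularity $n/\ell$, forces each server to hold exactly one ground truth component.

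Concretely, I would first observe from the pseudocode that every relocation (small-to-large move and majority-voting step) moves an entire connected component $C$ to a single server. Hence at every point in the execution, and in particular at termination, the assignment respects connected components. Since the algorithm terminates only once all edges have been processed, the current connected components are exactly $V_0,\dots,V_{\ell-1}$, and every $V_i$ is assigned to a single server; equivalently, each $V(S_j)$ is a disjoint union of some of the $V_i$'s, and hence $|V(S_j)|$ is an integer multiple of $n/\ell$.

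Next, because the stopping criterion is never triggered, Lemma~\ref{lem:recursive-majority-voting:augmentation-light} applies, giving the two-sided bound
\begin{align*}
(1-\varepsilon)\tfrac{n}{\ell} \;\leq\; |V(S_j)| \;\leq\; (1+\varepsilon)\tfrac{n}{\ell}
\end{align*}
for every server $S_j$. Writing $|V(S_j)| = k_j \cdot n/\ell$ with $k_j \in \mathbb{Z}_{\geq 0}$, the hypothesis $\varepsilon \in (0,1/2)$ forces $1-\varepsilon > 1/2$ and $1+\varepsilon < 3/2$, so $k_j \in \{1\}$, i.e., each server holds exactly one ground truth component.

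Finally, a counting argument closes the proof: there are $\ell$ servers and $\ell$ ground truth components, and each server holds exactly one $V_i$. Since the $V_i$ partition $V$ and each appears on some server, the map $S_j \mapsto V(S_j)$ is a bijection onto $\{V_0,\dots,V_{\ell-1}\}$, giving $\{V(S_0),\dots,V(S_{\ell-1})\} = \{V_0,\dots,V_{\ell-1}\}$ as required. No step here is delicate; the only thing to be careful about is the interplay between the integrality of the $k_j$ and the choice $\varepsilon < 1/2$, which is precisely why the lemma's hypothesis restricts $\varepsilon$ to this range.
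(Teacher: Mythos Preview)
Your proof is correct and follows essentially the same approach as the paper: atomicity of connected components, termination giving components of size $n/\ell$, and the load bound from Lemma~\ref{lem:recursive-majority-voting:augmentation-light} combined with $\varepsilon<1/2$ to force one component per server. The only cosmetic difference is that you invoke both the upper and lower load bounds to pin $k_j=1$ directly, whereas the paper uses only the upper bound (load $<2n/\ell$ rules out two components) and then a pigeonhole count to rule out empty servers.
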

\begin{proof}
	By definition of the algorithm, vertices of the same connected component are
	always assigned to the same server. When the algorithm finishes, all edges
	of $G$ were revealed and each component has size $n/\ell$.  By
	Lemma~\ref{lem:recursive-majority-voting:augmentation-light}, the augmentation of
	each server is at most $\varepsilon n/\ell$. Since
	$\varepsilon < 1/2$, no server can have more than one component
	assigned. As each component is placed on a server, each
	component is placed alone on a server. This proves that the
	algorithm creates a perfect partitioning.
\end{proof}

Indeed, we show that the final assignment of
Algorithm~\ref{algo:recursive-majority-voting} is not only a perfect
partitioning, but it is the same assignment as the one created by \APPROXALGO
from Section~\ref{sec:offline}.
\begin{lemma}
\label{lem:recursive-majority-voting:same-as-offline}
	If Algorithm~\ref{algo:recursive-majority-voting} stops and the stopping
	criterion is not triggered, Algorithm~\ref{algo:recursive-majority-voting}
	and \APPROXALGO have the same final assignment.
\end{lemma}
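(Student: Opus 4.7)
The plan is to show that for each ground truth component $V_i$, both algorithms place all of $V_i$ on the same leaf of the bipartition tree $\T$. Both algorithms determine this leaf by a root-to-leaf traversal that, at each internal node, chooses the subtree containing the majority of $V_i$'s initial locations; since they use the same tie-breaking convention (recurse into $w_0$ when $n_{i0} \geq n_{i1}$), it suffices to show that their counts agree at every level.

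First I would observe that in Algorithm~\ref{algo:recursive-majority-voting}, once the component $C$ grows to have exactly $n/\ell$ vertices it must equal some $V_i$, and the trigger in the local algorithm (the condition $|C_u \cup C_v| = n/\ell$) invokes a single call to \textsc{MajorityVotingStep}($V_i$). By Lemma~\ref{lem:recursive-majority-voting:perfect-partitioning}, the resulting leaf server is the final location of $V_i$. On the \APPROXALGO side, the call \textsc{RecursiveStep}($\T$, $i$) performs an analogous traversal and, at the leaf, moves all of $V_i$ to a single server. So for both algorithms the final placement of $V_i$ is determined entirely by a top--down traversal of $\T$, and I must show that these traversals coincide.

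The key claim, proved by induction on the current recursion depth, is that at every internal node $w$ visited with children $w_0, w_1$, the two algorithms use identical counts $n_0 = n_{i0}$ and $n_1 = n_{i1}$. Two observations give this. (i)~In Algorithm~\ref{algo:recursive-majority-voting}, labels are fixed from the initial assignment and are never updated, so $n_j$ in the majority-voting step equals $|V_i \cap \init(S(w_j))|$. (ii)~When \APPROXALGO reaches the processing of $V_i$, the only vertices it has moved so far are dirty vertices of $V_0,\dots,V_{i-1}$; in particular, no vertex of $V_i$ has yet been touched. Consequently $V(w_j) \cap V_i = \init(S(w_j)) \cap V_i$, so $n_{ij} = |V_i \cap \init(S(w_j))|$ as well. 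Combining (i) and (ii), $n_j = n_{ij}$ at every level, and the two traversals agree, so both algorithms place $V_i$ on the same leaf server. Applying this to every $i \in \{0,\dots,\ell-1\}$ yields the identity of final assignments.

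The main obstacle is the apparent mismatch between the two count definitions: \APPROXALGO uses the \emph{current} assignment $V(w_j)$, while the majority-voting step uses \emph{fixed} initial labels. Resolving this requires carefully noting that \APPROXALGO processes the ground truth components one at a time and only moves vertices at leaves, so for the specific component $V_i$ being processed, the current and initial views of $V(w_j) \cap V_i$ coincide. Once this invariance is established, the inductive argument is routine and tie-breaking consistency finishes the proof.
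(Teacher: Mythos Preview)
Your approach differs meaningfully from the paper's. You argue that the final \textsc{MajorityVotingStep} on $V_i$ and \APPROXALGO's \textsc{RecursiveStep} perform the \emph{same} root-to-leaf traversal of $\T$: both use, at each internal node, the number of $V_i$-vertices whose initial server lies in each child subtree (your observations (i) and (ii) establishing this are correct), so with the same tie-breaking they land on the same leaf. The paper instead exploits the hypothesis that the stopping criterion was not triggered in a structural way: by Lemma~\ref{lem:recursive-majority-voting:augmentation-light}, every server $S_j$ still contains more than $n/(2\ell)$ vertices of $\init(S_j)$ in Algorithm~\ref{algo:recursive-majority-voting}'s final assignment; hence the component $V_i$ sitting on $S_j$ (via Lemma~\ref{lem:recursive-majority-voting:perfect-partitioning}) has a strict single-server majority of its initial labels on $S_j$, which forces \APPROXALGO's traversal to reach $S_j$ too. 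Your argument, by contrast, uses the non-triggering hypothesis only for termination.

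There is, however, a gap in your argument. You claim that when the component reaches size $n/\ell$, the test $|C_u \cup C_v| = n/\ell$ inside \textsc{LocalAlgorithm} fires and calls \textsc{MajorityVotingStep}($V_i$). But \textsc{LocalAlgorithm} is invoked only when $S_u \neq S_v$; if the merge that completes $V_i$ joins two components already on the same server, the global algorithm simply merges and continues, and no majority vote on the full $V_i$ ever occurs. (Concretely, $V_i$ may split as $A \cup B$ where separate earlier majority votes already placed both $A$ and $B$ on the same server; the final edge is then intra-server.) In that case your traversal comparison does not say where Algorithm~\ref{algo:recursive-majority-voting} leaves $V_i$. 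Closing this gap seems to require exactly the ingredient the paper supplies: because the stopping criterion was not triggered, whichever server currently holds $V_i$ must already carry a strict majority of $V_i$'s initial vertices, which is what pins down both algorithms' choice.
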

\begin{proof}
	By Part~(2) of Lemma~\ref{lem:recursive-majority-voting:augmentation-light},
	Algorithm~\ref{algo:recursive-majority-voting} moves at most
	$\varepsilon n/\ell$ vertices out of each server compared to the initial
	assignment. Hence, in the final assignment each server must still contain
	at least $(1-\varepsilon)n/\ell > n/(2\ell)$ vertices from its original assignment
	since $\varepsilon \in (0,1/2)$. Thus, in the final assignment
	each server contains more than half of the vertices that were originally
	assigned to it.

	Consider any server $S_j$ and let $\init(S_j)$ be the set of vertices
	initially assigned to $S_j$. Then there must exist
	a ground truth component $V_i$ with $|V_i \cap \init(S_j)| \geq n/(2\ell)$.
	We show that \APPROXALGO and
	Algorithm~\ref{algo:recursive-majority-voting} both assign this component $V_i$ to $S_j$. This
	proves the lemma since this claim holds for any $S_j$.
	
	First, consider \APPROXALGO. Note that at each step of the traversal of
	$\T$, the majority of the vertices in $V_i$ will vote for the internal node
	containing server $S_j$. Hence, \APPROXALGO will place $V_i$ on $S_j$.

	Second, consider Algorithm~\ref{algo:recursive-majority-voting}. When the
	algorithm stops, all edges were revealed and the connected components agree
	with the ground truth components. Now consider the component $C = V_i$. When
	the $C$ grows to size $|C| = n/\ell$, the algorithm performs a majority
	voting step (by definition of the algorithm). At this point, more than half
	of the vertices in $C$ were labeled with $S_j$ (because more than half of
	the vertices from $C = V_i$ were originally assigned to $S_j$).  Hence,
	Algorithm~\ref{algo:recursive-majority-voting} will also place $V_i$ on
	$S_j$.
\end{proof}

\subsubsection{Analysis}
\label{sec:recursive-majority-voting:analysis}

The rest of this subsection is devoted to proving the following proposition
about Algorithm~\ref{algo:recursive-majority-voting}.

\begin{proposition}
\label{prop:recursive-majority-voting}
	Suppose there are $\ell$ servers and each has capacity
	$(1+\varepsilon)n/\ell$
	for $\varepsilon \in (0,1/2)$, i.e., the augmentation is $\varepsilon n/\ell$.
	Algorithm~\ref{algo:recursive-majority-voting} has the following properties:
	\begin{enumerate}
		\item If the stopping criterion is not triggered, the algorithm creates
		a perfect partitioning, its cost is bounded by $O(\OFF \cdot \lg n)$ and
		at no point during its execution it uses more than $\varepsilon n/\ell$
		augmentation.
		\item If the stopping criterion is triggered, the cost of the algorithm
		is $O(\alpha n \lg n)$ plus the cost of
		Algorithm~\ref{algo:small-large-rebalance} and the cost of \OFF is at
		least $\Omega(\alpha \varepsilon n / (\ell \lg \ell))$.
	\end{enumerate}
\end{proposition}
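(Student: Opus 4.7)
The plan for Part~(1) is to handle the three claims separately. Perfect partitioning follows directly from Lemma~\ref{lem:recursive-majority-voting:perfect-partitioning}, and the augmentation bound $\varepsilon n/\ell$ comes from part~(1) of Lemma~\ref{lem:recursive-majority-voting:augmentation}. The main work is the cost bound $O(\OFF \cdot \lg n)$, which I attack by extending the two-server majority-voting analysis from Section~\ref{sec:majority-voting} to each internal node of the bipartition tree $\T$.

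The key observation for the cost bound is that every move of a component $C$---whether a small-to-large step or a majority-voting step---is a single server-to-server transfer that crosses the bipartition of $\T$ at exactly one internal node, namely the LCA in $\T$ of its source and target servers. So I will decompose the total cost as $\sum_w \mathrm{cost}(w)$, where $\mathrm{cost}(w)$ aggregates the cost of moves whose LCA bipartition is at $w$. For each $w$ with children $w_0, w_1$, I will view the dynamics restricted to moves crossing the bipartition at $w$ as a two-server instance on $S(w_0)$ versus $S(w_1)$, with a vertex's ``color'' defined by which subtree holds its initial server. Adapting Lemmas~\ref{lem:majority-voting:minority-color-wrong-server}--\ref{lem:majority-voting:cost}---in particular generalizing the ``mixed component'' argument so that a component drawn to the minority side at $w$ must contain $\Omega(|C|)$ vertices labelled with the opposite subtree---I obtain $\mathrm{cost}(w) = O(\alpha \Delta(w) \lg n)$, where $\Delta(w)$ is the number of vertices that \APPROXALGO places across the bipartition at $w$. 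Since each vertex that \APPROXALGO moves contributes to $\Delta(w)$ at exactly one $w$ (the LCA of its initial and final servers), $\sum_w \Delta(w) = \APPROXALGO/\alpha$. Summing and invoking Lemma~\ref{lem:offline-cost} yields $\sum_w \mathrm{cost}(w) = O(\APPROXALGO \lg n) = O(\OFF \cdot \lg n)$.

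For Part~(2), the universal upper bound of $O(\alpha n \lg n)$ on the algorithm's cost before the stopping criterion triggers follows from two doubling arguments: each vertex is moved at most $O(\lg n)$ times via small-to-large steps (its component size at least doubles each such time) and at most $O(\lg n)$ times via majority-voting steps (these are only triggered at doublings), so multiplying by $n$ vertices and $\alpha$ per move gives the bound. For the lower bound on \OFF, I exploit the fact that the trigger condition forces some child node $w_{1-r}$ to contain at least $\varepsilon n/(\ell \lceil \lg \ell \rceil)$ vertices labelled with the opposite subtree $w_r$. These vertices were drawn across the $w$-bipartition by genuine edges of $G$ pulling them into larger components on the $S(w_{1-r})$ side, so the ground-truth components containing them must have an imbalance at $w$ that forces \APPROXALGO to move $\Omega(\varepsilon n/(\ell \lg \ell))$ vertices across $w$. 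Hence $\APPROXALGO = \Omega(\alpha \varepsilon n/(\ell \lg \ell))$, and Lemma~\ref{lem:offline-cost} transfers the bound to \OFF.

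The main obstacle will be the level-by-level reduction to the two-server analysis in Part~(1). The difficulty is that the majority-voting step in Algorithm~\ref{algo:recursive-majority-voting} is global---it descends from the root of $\T$ rather than acting only at the level where it was triggered---so the reduction is not syntactic: the local dynamics at $w$ are perturbed by moves decided by labels outside $S(w)$. I will need a careful accounting that charges each such move to the single LCA level it crosses, together with a generalization of Lemma~\ref{lem:majority-voting:minority-color} establishing that any component placed on the minority side of $w$'s bipartition still carries an $\Omega(|C|)$ mass of vertices labelled with that side. A secondary subtlety in Part~(2) is justifying that the overloading really reflects a large \APPROXALGO cost rather than spurious motion; this should follow by tracing how small-to-large steps couple the initial labels to the connected (and hence ultimately ground-truth) components.
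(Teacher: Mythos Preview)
Your approach to Part~(1) differs substantially from the paper's, and the per-node accounting you propose runs into a real difficulty. The paper does \emph{not} decompose the cost by internal node of $\T$; instead it decomposes by ground truth component $V_i$ (Lemma~\ref{lem:recursive-majority-voting:total-cost}). The argument there is: call a component $C\subseteq V_i$ \emph{mixed} if, at the moment it makes its final move to $S_{f(V_i)}$, it contains at least one vertex not initially on $S_{f(V_i)}$. By Lemma~\ref{lem:recursive-majority-voting:minority-color} each mixed component has at least $|C|/8$ such foreign vertices; since Algorithm~\ref{algo:recursive-majority-voting} and \APPROXALGO produce the same final assignment (Lemma~\ref{lem:recursive-majority-voting:same-as-offline}), these foreign vertices are exactly the $\Delta_i=\APPROXALGO_i/\alpha$ vertices \APPROXALGO moves for $V_i$, so mixed components have total size at most $8\Delta_i$. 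Then the global per-vertex bound of $O(\lg n)$ moves (Lemma~\ref{lem:recursive-majority-voting:server-cost}) gives cost $O(\alpha\Delta_i\lg n)$ for $V_i$, and summing over $i$ with Lemma~\ref{lem:offline-cost} finishes. This route never needs to reason about which bipartition a given move crosses.

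The obstacle you flag in your own plan is, I think, more serious than you suggest. Consider a ground truth component $V_i$ with $S_{f(V_i)}\notin S(w)$: sub-components of $V_i$ can still shuffle between servers in $S(w_0)$ and $S(w_1)$ (via small-to-large steps between two such servers, or via majority votes that land inside $S(w)$), and every such move is charged to $\mathrm{cost}(w)$. But none of the involved vertices contribute to your $\Delta(w)$, since each has its initial--final LCA strictly above $w$. So $\mathrm{cost}(w)=O(\alpha\Delta(w)\lg n)$ is not established by the two-server mixed-component argument. A repaired charging in which a moved vertex contributes at every level below its initial--final LCA appears to cost an extra $\lg\ell$ factor. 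The paper's global-per-vertex bound sidesteps all of this.

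For Part~(2), the paper (Lemma~\ref{lem:recursive-majority-voting:stop}) lower-bounds \OFF directly rather than via \APPROXALGO: for each component $C$ on the overloaded side $S(w_0)$ carrying $w_1$-labelled vertices, it case-splits on whether \OFF's server for $C$ lies in $S(w_1)$ or not, and in the second case uses the doubling decomposition to exhibit at least $|C|/4$ vertices without label $w_1$ that \OFF must move. Your sketch asserts that overloading ``forces \APPROXALGO to move $\Omega(\varepsilon n/(\ell\lg\ell))$ vertices across $w$,'' but overloading by itself says nothing about the majority structure of the ground-truth components at $w$; exactly the same case analysis is required, and your proposal does not supply it.
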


We prove the proposition at the end of this section.  We start by proving a
sequence of lemmata and begin by reasoning about the cost paid by
Algorithm~\ref{algo:recursive-majority-voting}. As shown in
Lemma~\ref{lem:reduction} we only need to bound the moving cost paid by
Algorithm~\ref{algo:recursive-majority-voting} to bound its total cost.

The following lemma bounds the cost paid for any connected component $C$.
\begin{lemma}
\label{lem:recursive-majority-voting:server-cost}
	Let $C$ be a connected component. Then the cost (over the entire execution
	time of the algorithm) paid for moving the vertices
	in $C$ is $O(\alpha |C| \lg |C|)$. 
\end{lemma}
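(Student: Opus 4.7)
The plan is to mirror the proof of Lemma~\ref{lem:majority-voting:server-cost} from the two-server case, using a token-based accounting. For each vertex $u \in C$, I will place one token on $u$ every time the algorithm physically moves $u$ between servers. The goal is to show that $u$ receives only $O(\lg |C|)$ tokens, which (after summing over the $|C|$ vertices and multiplying by the per-move cost $\alpha$) yields the claimed bound $O(\alpha |C| \lg |C|)$.

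The moves of $u$ come from exactly two sources, inherited from the pseudocode of Algorithm~\ref{algo:recursive-majority-voting}: small-to-large steps performed inside some local algorithm, and majority voting steps. For the small-to-large steps, I will argue exactly as in the two-server case: whenever a local algorithm (at whatever internal node $w$ of $\T$) moves $u$'s connected component $C_u$, it does so because $|C_u| \leq |C_v|$ and then merges $C_u$ with $C_v$, so the size of the component containing $u$ at least doubles. Since the final size is at most $|C|$, this can happen at most $\lceil \lg |C| \rceil$ times, contributing $O(\lg |C|)$ tokens.

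For the majority voting steps, I will use the trigger condition in the pseudocode: a majority voting step on the component containing $u$ is invoked only when that component's size crosses a threshold of the form $2^i$, or reaches exactly $n/\ell$. Since the component size is monotonically nondecreasing over the run of the algorithm and is bounded by $|C|$, there are only $O(\lg |C|)$ such thresholds, so only $O(\lg |C|)$ majority voting steps ever involve $u$. The key point here, and the only place where the $\ell$-server recursion differs from the two-server case, is that a single majority voting step moves the whole component $C_u \cup C_v$ at most once: the recursion at \textsc{MajorityVotingStep} only descends through $\T$ to \emph{choose} a single destination server $S$ and then assigns the component to that server, which amounts to at most one physical move of $u$. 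Hence each majority voting step contributes at most one token to $u$.

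Combining the two sources yields $O(\lg |C|)$ tokens per vertex, hence $O(|C| \lg |C|)$ moves in total across all vertices of $C$, and the lemma follows after multiplying by the cost $\alpha$ per move. The main obstacle to watch out for is precisely the one I flagged above, namely ruling out that the tree descent inside a single majority voting step counts as many moves of $C$; a careful reading of the pseudocode shows that $C$ is only actually reassigned once per invocation of \textsc{MajorityVotingStep}, which is what makes the same $O(\lg|C|)$ bound survive the generalization from two to $\ell$ servers.
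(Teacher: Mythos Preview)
Your proof is correct and follows essentially the same token-accounting argument as the paper: bound the small-to-large moves by the doubling of component size and the majority-voting moves by the $O(\lg |C|)$ threshold crossings. Your explicit observation that a single invocation of \textsc{MajorityVotingStep} performs at most one physical move is a helpful clarification that the paper leaves implicit, but otherwise the arguments coincide.
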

\begin{proof}
	We can use the same accounting argument as in the proof of
	Lemma~\ref{lem:majority-voting:server-cost}. That is, we assign a token to
	a vertex $v$ whenever it is moved. Now, whenever the component $C$
	containing $v$ is moved due to a small-to-large step, the size of $C$
	doubles. This can only happen $O(\lg |C|)$ times. Furthermore, there are only
	$O(\lg |C|)$ majority voting steps involving $u$: Each majority voting step
	is triggered because $|C| = n/\ell$ or because $|C|$ surpassed a power of $2$;
	the first event can happen only once and the second event can happen at most
	$O(\lg |C|)$ times. Hence, $v$ will never accumulate more than $O(\lg |C|)$
	tokens.  Since the above arguments apply for each $v \in C$, the total cost
	paid for moving the vertices in $C$ is bounded by $O(|C| \lg |C|)$.
\end{proof}

Let $f \colon V \to \{0,\dots,\ell-1\}$ be the function which maps each vertex
to its server in the final assignment by
Algorithm~\ref{algo:recursive-majority-voting}. That is, when
Algorithm~\ref{algo:recursive-majority-voting} processed all edges,
each $v$ is assigned to $S_{f(v)}$. For a connected component $C$, set $f(C) =
f(u)$ for $u \in V$.  Note that $f(C)$ is well-defined since all
vertices of $C$ are assigned to the same $S_{f(C)}$ when the
algorithm terminates.

In the following proofs, we will write $\hw{w}{C}$ to denote the number of
vertices in a connected component $C$ which are labeled with $w$. We further
write $\bhw{w}{C}$ to denote the number of vertices in $C$ which are \emph{not}
labeled with $w$, i.e., $\bhw{w}{C} = |C| - \hw{w}{C}$.

Lemma~\ref{lem:recursive-majority-voting:colorful-components} shows that
whenever a component $C$ is assigned to a server which is not its final server,
it must contain relatively many vertices which were not initially assigned to
its final server $S_{f(C)}$.

\begin{lemma}
\label{lem:recursive-majority-voting:colorful-components}
	Consider any point in the execution of the algorithm at which a connected
	component $C$ is assigned to server $S \neq S_{f(C)}$. Let $w$ be the lowest
	common ancestor of $S$ and $S_{f(C)}$ in $\T$ and denote the children of $w$
	by $w_0$ and $w_1$.
	
	If $S_{f(C)} \in S(w_j)$ for $j \in \{0,1\}$, then:
	\begin{enumerate}
		\item $C$ contains at least $|C|/4$ vertices which do not have label
		$w_j$, i.e., $\bhw{w_j}{C} \geq |C|/4$.
		\item $C$ contains at least $\bhw{w_j}{C}$ vertices which were not initially
		assigned to $S_{f(C)}$.
	\end{enumerate}
\end{lemma}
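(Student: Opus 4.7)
The plan is to adapt the 2-server analogue (Lemma~\ref{lem:majority-voting:minority-color-wrong-server}) using a doubling-decomposition argument applied at the bipartition node $w$.

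Part~(2) will be immediate from the definition of the labels: any vertex $v \in C$ not labeled $w_j$ was initially assigned to a server outside $S(w_j)$, and since $S_{f(C)} \in S(w_j)$, such a vertex was in particular not initially assigned to $S_{f(C)}$. Hence there are at least $\bhw{w_j}{C}$ vertices in $C$ that were not initially at $S_{f(C)}$.

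For Part~(1), I would first dispose of the trivial case $|C| = 1$: a singleton component has never been moved, so its current server $S$ coincides with the vertex's initial server. Since $S \in S(w_{1-j})$, that vertex is labeled $w_{1-j}$, giving $\bhw{w_j}{C} = 1 = |C| \geq |C|/4$. For $|C| \geq 2$, I would construct a doubling decomposition $(C_1, \ldots, C_k)$ of $C$ analogous to Lemma~\ref{lem:doubling-decomposition-exists}, walking backwards through the mergers that produced $C$ and stopping when the combined size first crossed $2^t$ (where $2^t \leq |C| < 2^{t+1}$). The crucial property is that after $C_1 \cup C_2$ was merged, none of the subsequent mergers that produce $C$ trigger a new majority vote (because the component size stays inside $[2^t, 2^{t+1})$), so the current server $S$ of $C$ is exactly the leaf of $\T$ picked by the majority-voting traversal on $C_1 \cup C_2$.

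Because $S \in S(w_{1-j})$ and the traversal starts at the root of $\T$, it must pass through $w$ and, at $w$, choose the child $w_{1-j}$. By the tie-breaking rule in the pseudocode this yields
\[
\hw{w_{1-j}}{C_1 \cup C_2} \;\geq\; \hw{w_j}{C_1 \cup C_2},
\]
and combining with the trivial inequality $\hw{w_j}{C_1 \cup C_2} + \hw{w_{1-j}}{C_1 \cup C_2} \leq |C_1 \cup C_2|$ we obtain $\hw{w_j}{C_1 \cup C_2} \leq |C_1 \cup C_2|/2$, i.e., $\bhw{w_j}{C_1 \cup C_2} \geq |C_1 \cup C_2|/2$. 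Since $|C_1 \cup C_2| \geq 2^t \geq |C|/2$ and $\bhw{w_j}{\cdot}$ is monotone under set inclusion, we conclude $\bhw{w_j}{C} \geq \bhw{w_j}{C_1 \cup C_2} \geq |C|/4$.

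The main technical obstacle is carrying out the doubling-decomposition construction in the $\ell$-server setting, where a merger can be either \emph{global} (both endpoints already on the same server, no move and no vote check) or \emph{local} (a small-to-large move followed by a possible vote trigger). In particular, when the backward walk first reaches a combined size of at least $2^t$, the responsible merger may be a global one, so no vote is directly triggered at that moment; one must then argue (by walking further back if necessary) that the current server $S$ is still the result of a vote on some sub-component of size at least $|C|/2$. Carefully handling this interplay between global mergers and the local vote-trigger rule is the delicate part of the proof.
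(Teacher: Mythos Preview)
Your proposal follows essentially the same route as the paper: for Part~(2) you give exactly the paper's one-line observation that any vertex not labeled $w_j$ was initially outside $S(w_j)\ni S_{f(C)}$; for Part~(1) you take a doubling decomposition $(C_1,\dots,C_k)$ of $C$, use that the majority-vote on $C_1\cup C_2$ must have chosen a leaf in $S(w_{1-j})$ (hence $\hw{w_j}{C_1\cup C_2}\le |C_1\cup C_2|/2$), and combine $|C_1\cup C_2|\ge |C|/2$ with monotonicity of $\bhw{w_j}{\cdot}$. This is precisely the paper's argument.

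Where you differ is in being \emph{more} careful than the paper. You explicitly separate the singleton base case, and you flag that in Algorithm~\ref{algo:recursive-majority-voting} a merge with $S_u=S_v$ (what you call a ``global'' merge) bypasses \textsc{LocalAlgorithm} entirely and hence performs no vote check---so the merge that forms $C_1\cup C_2$ need not trigger a vote. The paper's proof simply asserts ``after $C_1$ and $C_2$ were merged, Algorithm~\ref{algo:recursive-majority-voting} performed a majority voting step,'' invoking Lemma~\ref{lem:doubling-decomposition-exists} as if it carried over verbatim; it does not address the same-server case. Your instinct that this is the delicate point is correct. One clean way to close the gap (sharper than ``walking further back'') is to argue by induction on $|C|$: if the $C_1$--$C_2$ merge is local, the paper's argument gives $\bhw{w_j}{C_1\cup C_2}\ge |C_1\cup C_2|/2\ge |C|/4$; if it is global, then both $C_1$ and $C_2$ sit on $S\neq S_{f(C)}$ just before the merge, so the inductive hypothesis applied to each yields $\bhw{w_j}{C_1}+\bhw{w_j}{C_2}\ge (|C_1|+|C_2|)/4$, and one then accounts for the remaining $C_3,\dots,C_k$ similarly (each $C_i$ that was on $S$ before its merge contributes $|C_i|/4$ by induction; each $C_i$ moved to $S$ from elsewhere can be handled via its own last vote). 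Getting the constant to come out as $1/4$ rather than, say, $1/8$ takes a little bookkeeping, which is exactly the ``delicate part'' you anticipated.
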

\begin{proof}
	To prove Part~(1), 
	consider a doubling decomposition $(C_1, \dots, C_k)$ of $C$ (see
	Definition~\ref{def:double-decomposition}); the decomposition exists by
	Lemma~\ref{lem:doubling-decomposition-exists} which also applies in the
	$\ell$ server setting.  After $C_1$ and $C_2$ were merged,
	Algorithm~\ref{algo:recursive-majority-voting}
	performed a majority voting step and placed $C_1 \cup C_2$ in a server $S
	\in S(w_{1-j})$. Thus, $\hw{w_j}{C_1 \cup C_2} \leq |C_1 \cup C_2|/2$
	(otherwise, the majority voting step would have chosen a server in
	$S(w_j)$).  Since
	$|C_1 \cup C_2| \geq |C|/2$ and $C_1 \cup C_2 \subseteq C$,
	\begin{align*}
		\bhw{w_j}{C}
		&\geq \bhw{w_j}{C_1 \cup C_2}
		= |C_1 \cup C_2| - \hw{w_j}{C_1 \cup C_2} \\
		&\geq |C_1 \cup C_2| - |C_1 \cup C_2|/2
		= |C_1 \cup C_2|/2 \geq |C|/4.
	\end{align*}

	For Part~(2) note that each vertex which was initially assigned to
	$S_{f(C)}$ has label $w_j$ (because $S_{f(C)} \in S(w_j)$ by assumption).
\end{proof}

In the following, we show that the cost paid by the algorithm is $O(\OFF \cdot \lg n)$
when the stopping criterion is not triggered. We start by showing that when a
component is moved for the last time, it contains a large number of vertices
which did not originate from the server it is assigned to.

\begin{lemma}
\label{lem:recursive-majority-voting:minority-color}
	Let $C$ be a component which is moved to server $S_{f(C)}$ and suppose the
	vertices of $C$ are never reassigned after this move.\footnote{Note that
		when a small-to-large step is performed, two components are merged due
		to the corresponding edge insertion. In this case, the component $C$ in
		the lemma is the component which is being moved (i.e., before merging).}
	Then $C$ contains at least $|C|/8$ vertices which were not assigned to
	$S_{f(C)}$ in the initial assignment.
\end{lemma}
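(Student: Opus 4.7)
The plan is to mirror the two-server proof of Lemma~\ref{lem:majority-voting:minority-color}, splitting the analysis based on which kind of move placed $C$ onto $S_{f(C)}$. Since Algorithm~\ref{algo:recursive-majority-voting} performs only two kinds of moves --- small-to-large steps and majority voting steps --- I will treat these cases separately and in both reduce the bound to an application of Lemma~\ref{lem:recursive-majority-voting:colorful-components}.

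In the small-to-large case, $C$ equals the smaller component $C_u$ that is moved from some server $S_u$ to $S_v = S_{f(C)}$. Since $S_u \neq S_v$, I will apply Lemma~\ref{lem:recursive-majority-voting:colorful-components} to $C$ immediately before the move: letting $w$ be the lowest common ancestor of $S_u$ and $S_v$ in $\T$ and $S_{f(C)} \in S(w_j)$, Part~(1) gives $\bhw{w_j}{C} \geq |C|/4$, and Part~(2) upgrades this to at least $|C|/4 \geq |C|/8$ vertices of $C$ which were not initially assigned to $S_{f(C)}$.

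The majority-voting case is the substantive one. Here the move is triggered by an edge $(u,v)$ whose insertion causes the formation of $C = C_u \cup C_v$ with $|C_u| \leq |C_v|$; the small-to-large step first brings $C$ onto $S_v$, and the subsequent vote then relocates $C$ to $S_{f(C)} \neq S_v$. For $|C| \geq 8$ I will apply Lemma~\ref{lem:recursive-majority-voting:colorful-components} to $C_v$ at the moment just before the merger. The hypothesis that $C$'s vertices are never reassigned after the current move implies that the vertices of $C_v$ end up on $S_{f(C)}$, so $f(C_v) = S_{f(C)} \neq S_v$ and the lemma applies to $C_v$. This yields at least $|C_v|/4$ vertices of $C_v$ not initially assigned to $S_{f(C)}$, and the bound $|C_v| \geq |C|/2$ (since $|C_u| \leq |C_v|$) gives the required $|C|/8$ as $C_v \subseteq C$. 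For $|C| \leq 7$ it suffices to produce a single such vertex, and I will do this via the $\ell$-server analog of Lemma~\ref{lem:majority-voting:same-color}: by induction on component size, if every vertex of $C$ were initially assigned to $S_{f(C)}$ then all subcomponents occurring along the merge history would already sit on $S_{f(C)}$, every small-to-large step involving them would be a no-op, and every triggered majority vote would pick $S_{f(C)}$ by unanimity at every level of $\T$; hence $C$ would never have been moved, contradicting the assumption that the present move is nontrivial.

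The main obstacle will be the majority-voting case with $|C| \geq 8$: I must take care to apply Lemma~\ref{lem:recursive-majority-voting:colorful-components} to $C_v$ rather than $C$, and the licence to do so comes from identifying $f(C_v)$ with $S_{f(C)}$, which in turn uses the hypothesis that $C$'s vertices are never reassigned after the move under consideration. The second new ingredient is the same-initial-server statement underlying the small-$|C|$ subcase; it is not stated explicitly for $\ell$ servers in the excerpt, but it follows from essentially the same induction as its two-server counterpart, now using that unanimous labels along every ancestor in $\T$ force every recursive majority-voting step to pick the unique initial server.
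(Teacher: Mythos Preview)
Your proposal is correct and follows essentially the same approach as the paper: split on small-to-large versus majority-voting moves, and in each case invoke Lemma~\ref{lem:recursive-majority-voting:colorful-components} (on $C$ itself in the first case, on $C_v$ in the second) to extract the $|C|/4$ and $|C_v|/4 \geq |C|/8$ bounds. The one difference is that you additionally carve out a $|C|\leq 7$ subcase in the majority-voting branch, mirroring the two-server Lemma~\ref{lem:majority-voting:minority-color}; the paper's proof of Lemma~\ref{lem:recursive-majority-voting:minority-color} does not do this, because Lemma~\ref{lem:recursive-majority-voting:colorful-components} is stated without the $|C|\geq 4$ hypothesis that its two-server counterpart (Lemma~\ref{lem:majority-voting:minority-color-wrong-server}) carried, and so the paper applies it to $C_v$ uniformly regardless of size. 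Your extra subcase is harmless and arguably more careful about the degenerate situation where $C_v$ is too small for a doubling decomposition, but it is not needed for the paper's version of the argument.
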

\begin{proof}
	Note that $C$ is moved due to one of two reasons: Either
	because of a small-to-large step or because of a majority voting step. We
	distinguish between these cases.

	In case of a small-to-large step, $C$ is assigned to a server $S \neq
	S_{f(C)}$ before the move.
	Lemma~\ref{lem:recursive-majority-voting:colorful-components} implies
	that $C$ contains at least $|C|/4$ vertices which were not originally
	assigned to $S_{f(C)}$.

	Now suppose that $C$ is moved due to a majority voting step. Let $(u,v)$ be
	the last edge which was inserted and which triggered the majority voting
	step for $C$. Then Algorithm~\ref{algo:recursive-majority-voting} previously
	merged components $C_u$ and $C_v$; suppose w.l.o.g.\ that $C_u$ was moved to
	$C_v$ and $|C_u| \leq |C_v|$. Prior to the majority voting step, $C$ is
	assigned to the same server $S \neq S_{f(C)}$ that $C_v$ was assigned to
	before $(u,v)$ was inserted. Hence, we can apply
	Lemma~\ref{lem:recursive-majority-voting:colorful-components} to $C_v$ and
	obtain that $C_v$ contains at least $|C_v|/4$ vertices which were not
	initially assigned to $S_{f(C)}$. Thus, the number of vertices
	in $C$ which do not originate from $S_{f(C)}$ is at least
	\begin{align*}
		|C_v|/4 \geq 2|C_v|/8 \geq |C_u \cup C_v| / 8 = |C|/8.
		&\qedhere
	\end{align*}
\end{proof}

The next lemma considers the cost paid by
Algorithm~\ref{algo:recursive-majority-voting} when the stopping criterion is
not triggered.
\begin{lemma}
\label{lem:recursive-majority-voting:total-cost}
	Suppose there are $\ell$ servers and each has capacity
	$(1+\varepsilon)n/\ell$
	for $\varepsilon \in (0,1/2)$, i.e., the augmentation is $\varepsilon n/\ell$.
	If the stopping criterion is not triggered and
	Algorithm~\ref{algo:recursive-majority-voting} stops, then the cost paid by
	the algorithm is $O(\OFF \cdot \lg n)$.
\end{lemma}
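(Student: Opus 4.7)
The plan is to generalize the accounting from the two-server proof of Lemma~\ref{lem:majority-voting:cost}, using \APPROXALGO as a proxy for \OFF. By Lemma~\ref{lem:reduction} it suffices to bound the moving cost, and I will do so separately for each ground truth component $V_i$, then sum. Fix $V_i$ and let $S$ be the server on which the algorithm finally places $V_i$ (so $S$ has index $f(v)$ for any $v \in V_i$). Call a connected component $C \subseteq V_i$ a \emph{terminal component for $V_i$} if at some time $T$ during the execution (allowing $T=0$ when $C$ is never moved) the component $C$ sits on $S$ and no vertex of $C$ leaves $S$ after $T$. For each $v \in V_i$, let $C(v)$ be the component containing $v$ immediately after $v$'s last move (or $\{v\}$ if $v$ is never moved); each $C(v)$ is a terminal component, and the distinct $C(v)$'s form a partition of $V_i$, because if $v$ lay in two terminal components with different activation times $T_1 < T_2$, its last move would already be at time $T_2$, contradicting that $T_1$ activates a terminal component containing $v$.

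Classify each terminal component $C$ as \emph{pure} if every vertex of $C$ was initially on $S$ and \emph{mixed} otherwise. I first claim that pure terminal components are never moved, so they contribute no moving cost. Indeed, if a pure $C$ were ever located on some server $S' \neq S$, let $w$ be the lowest common ancestor of $S$ and $S'$ in $\T$ and $w_j$ the child of $w$ with $S \in S(w_j)$; Lemma~\ref{lem:recursive-majority-voting:colorful-components} then forces $\bhw{w_j}{C} \geq |C|/4 > 0$, but every vertex of $C$, being initially on $S \in S(w_j)$, is labeled $w_j$, so $\bhw{w_j}{C} = 0$---a contradiction.

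For mixed terminal components, Lemma~\ref{lem:recursive-majority-voting:minority-color} applied to the last move that brings $C$ to $S$ yields at least $|C|/8$ vertices of $C$ that were not initially assigned to $S$. Since the mixed terminal components for $V_i$ are pairwise vertex-disjoint, these minority vertices are disjoint across different $C$'s. By Lemma~\ref{lem:recursive-majority-voting:same-as-offline}, the final assignment of the algorithm coincides with that of \APPROXALGO, so each such vertex is moved by \APPROXALGO to $S$ while handling $V_i$. Therefore $\sum_{C \text{ mixed}} |C|/8$ is at most the number of vertices that \APPROXALGO moves for $V_i$, giving $\sum_{C \text{ mixed}} |C| \leq 8 \APPROXALGO_i / \alpha$. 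Lemma~\ref{lem:recursive-majority-voting:server-cost} bounds the algorithm's moving cost on the vertices of each terminal $C$ by $O(\alpha |C| \lg n)$, so summing over the mixed terminal components for $V_i$ gives $O(\lg n) \cdot \APPROXALGO_i$. Summing over $i$ and invoking Lemma~\ref{lem:offline-cost} finishes the proof with total cost $O(\lg n) \cdot \APPROXALGO = O(\OFF \cdot \lg n)$.

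The main obstacle I anticipate is defending the partition property of the terminal-component family and, equivalently, the disjointness of the minority-color vertices across mixed $C$'s: the accounting hinges on summing $|C|/8$ into a single disjoint subset of \APPROXALGO-moved vertices, so one must argue carefully that the definition via the last move of $v$ truly yields pairwise disjoint $C(v)$'s and that the hypothesis ``never reassigned after this move'' required by Lemma~\ref{lem:recursive-majority-voting:minority-color} is satisfied by construction for every mixed terminal component.
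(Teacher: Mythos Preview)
Your proof is correct and follows essentially the same approach as the paper's: fix a ground truth component $V_i$, partition it into terminal components via last-move times, separate pure from mixed, use Lemma~\ref{lem:recursive-majority-voting:minority-color} to bound the total size of mixed components by $8\,\APPROXALGO_i/\alpha$, apply Lemma~\ref{lem:recursive-majority-voting:server-cost} per component, and sum over $i$ with Lemma~\ref{lem:offline-cost}. The only noteworthy difference is your handling of pure components: the paper invokes Lemma~\ref{lem:majority-voting:same-color} (asserting it still applies for $\ell$ servers) to conclude single-origin components are never moved, whereas you derive the same conclusion from Lemma~\ref{lem:recursive-majority-voting:colorful-components} by contradiction---both routes are valid and equally short.
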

\begin{proof}
	Fix some $i \in \{0,\dots,\ell-1\}$. Recall that $\APPROXALGO_i$ denotes the
	cost paid by \APPROXALGO to move the vertices from $V_i$ to the server
	$S_{f(V_i)}$. We show that for $V_i$,
	Algorithm~\ref{algo:recursive-majority-voting} pays $O(\APPROXALGO_i \lg n)$.
	The lemma follows from this claim and Lemma~\ref{lem:offline-cost}, since
	the total cost paid by Algorithm~\ref{algo:recursive-majority-voting} is
	bounded by
	\begin{align*}
		\sum_i O(\APPROXALGO_i \cdot \lg n) = O(\APPROXALGO \cdot  \lg n) = O(\OFF \cdot \lg n).
	\end{align*}

	Consider any ground truth component $V_i$ and let $\Delta$ denote the number
	of vertices $\APPROXALGO_i$ moves to server $S_{f(C)}$. Note that as
	$\APPROXALGO_i$ moves $\Delta$ vertices into $S_{f(C)}$, we get
	$\APPROXALGO_i = \alpha \Delta$. 

	Consider time $T$ of the execution of the algorithm where the following
	happens. A component $C$ is reassigned to $S_{f(C)}$ and $C$ has the
	following properties: (1) $C$ is a subset of $V_i$ and (2) the vertices in
	$C$ never leave server $S_{f(C)}$ after time $T$. Since each vertex of $V_i$
	is assigned to $S_{f(C)}$ when the algorithm terminates, each vertex of
	$V_i$ is contained in a component with the above properties (when a vertex
	or component is never moved, we set $T=0$). A component $C$ with the above
	properties is a \emph{mixed} component if $C$ contains at least one vertex
	which was not initially assigned to $S_{f(C)}$.  Note that when a mixed
	component $C$ is reassigned to $S_{f(C)}$, $C$ contains at least one vertex
	which was not initially assigned to $S_{f(C)}$ and, hence, $C$ must be moved
	from a server $S_y$, $y \neq f(C)$, to $S_{f(C)}$.

	We bound the cost for mixed components. Let $X$ be the set of
	all mixed components of $V_i$.  Recall that
	Algorithm~\ref{algo:recursive-majority-voting} and \APPROXALGO create the
	same final assignment
	(Lemma~\ref{lem:recursive-majority-voting:same-as-offline}). Hence,
	Algorithm~\ref{algo:recursive-majority-voting} moves the same $\Delta$
	vertices from $V_i$ into $S_{f(V_i)}$ as \APPROXALGO.
	Lemma~\ref{lem:recursive-majority-voting:minority-color} implies that for
	each $C \in X$ at least $|C|/8$ vertices from $C$ are part of the $\Delta$
	vertices moved by \APPROXALGO. Thus, the union of all $C \in X$ contains at
	most $8\Delta$ vertices.

	By Lemma~\ref{lem:recursive-majority-voting:server-cost},
	Algorithm~\ref{algo:recursive-majority-voting} pays at most $O(\alpha |C| \lg |C|)$
	for each $C \in X$ over the entire execution.  Thus, its total cost is
	bounded by
	\begin{align*}
		\sum_{C \in X} O(\alpha |C| \lg |C|)
		\leq O(\alpha \Delta \lg n)
		= O(\ALG_i \cdot \lg n).
	\end{align*}

	Consider the vertices of $V_i$ which are not in mixed
	components.  These vertices must have been part of components in which all
	vertices were originally assigned to $S_{f(V_i)}$.  By
	Lemma~\ref{lem:majority-voting:same-color} (which still applies in the $\ell$
	server setting), these vertices were never moved. Thus, they do not
	incur any cost to the algorithm.
\end{proof}

Next, we show that when the stopping criterion is triggered, the recursive
majority voting algorithm pays $O(n \lg n)$ and cost of the solution by \OFF is
$\Omega(\alpha \varepsilon n / (\ell \lg \ell))$.
\begin{lemma}
\label{lem:recursive-majority-voting:stop}
	When the stopping criterion is triggered, (1) the cost paid by
	Algorithm~\ref{algo:recursive-majority-voting} is $O(\alpha n \lg n)$ and
	(2) the cost paid by $\OFF$ is $\Omega(\alpha \varepsilon n / (\ell \lg \ell))$.
\end{lemma}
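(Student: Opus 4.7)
Part (1) is immediate from Lemma~\ref{lem:recursive-majority-voting:server-cost}: summing $O(\alpha |C| \lg |C|) \le O(\alpha |C| \lg n)$ over all connected components $C$ and using $\sum_C |C| = n$ yields $O(\alpha n \lg n)$.

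For Part (2), the plan is to bound $\OFF$ from below via $\APPROXALGO$, using $\OFF \ge \APPROXALGO/2$ from Lemma~\ref{lem:offline-cost}, and to show $\APPROXALGO = \Omega(\alpha N)$ where $N := \varepsilon n / (\ell \lceil \lg \ell \rceil)$ is the overload threshold. Let $(w, w_j)$ be the bipartition that triggered the stopping criterion. I will work in the hypothetical post-move state, in which at least $N$ vertices labeled $w_{1-j}$ reside in connected components on $S(w_j)$; since $\OFF$ depends only on the problem instance and not on whether the triggering move is actually performed, this is legitimate. For each ground truth component $V$, write $a_V^r := |V \cap V(w_r)|$ evaluated on the initial assignment, for $r \in \{0,1\}$.

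The structural core is the claim that every connected component $C$ on $S(w_j)$ in the post-move state satisfies $\hw{w_j}{C} \ge |C|/8$, equivalently $\hw{w_{1-j}}{C} \le 7\,\hw{w_j}{C}$. For a component already residing on $S(w_j)$ before the triggering move, take its doubling decomposition from Lemma~\ref{lem:doubling-decomposition-exists}: the merge of $C_1$ and $C_2$ triggered a majority-voting step, and all later merges are small-to-large steps that do not move the aggregate off its current server. Since $C$ is on $S(w_j) \subseteq S(w)$, that vote's recursion passed through $w$ and chose the child $w_j$, so $\hw{w_j}{C_1 \cup C_2} \ge |C_1 \cup C_2|/2 \ge |C|/4$, giving $\hw{w_j}{C} \ge |C|/4$. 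The single component just moved by the triggering step is handled separately: for a small-to-large triggering move the target $C_v$ already satisfies $\hw{w_j}{C_v} \ge |C_v|/4$ with $|C_v| \ge |C_u \cup C_v|/2$, yielding the weaker $|C|/8$ bound; for a majority-voting triggering move the vote itself chose $S(w_j)$, giving $|C|/2$. Singleton components labeled $w_{1-j}$ cannot sit on $S(w_j)$, because every move is part of a merge.

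For each ground truth component $V$ let $\mu_V$ denote the number of $V$-vertices labeled $w_{1-j}$ that lie in components on $S(w_j)$, so $\sum_V \mu_V \ge N$. Summing the structural bound over the components of $V$ sitting on $S(w_j)$ gives $\nu_V \ge \mu_V / 7$, where $\nu_V$ counts $V$-vertices labeled $w_j$ on $S(w_j)$; combined with $a_V^j \ge \nu_V$ and $a_V^{1-j} \ge \mu_V$ this yields $\min(a_V^j, a_V^{1-j}) \ge \mu_V / 7$. Finally, I lower-bound $\APPROXALGO$'s cost per such $V$: if its destination for $V$ lies in $S(w)$ then the recursion reaches $w$ and marks at least $\min(a_V^j, a_V^{1-j}) \ge \mu_V/7$ vertices of $V$ dirty there; otherwise, at the ancestor of $w$ where the recursion leaves $S(w)$, the full set $V \cap V(w)$ of size $a_V^0 + a_V^1 \ge \mu_V$ becomes dirty. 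Either way $\APPROXALGO$ pays at least $\alpha \mu_V / 7$ for $V$, and summing over $V$ yields $\APPROXALGO \ge \alpha N / 7$, hence $\OFF \ge \alpha N / 14 = \Omega(\alpha \varepsilon n / (\ell \lg \ell))$. The main obstacle is the structural claim, which requires tracking the last majority-vote event in each component's history and separately handling the just-moved component of the triggering step; everything else reduces to summation.
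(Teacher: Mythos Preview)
Your proof of Part~(1) matches the paper's exactly.

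For Part~(2), your argument is correct but takes a genuinely different route. The paper bounds $\OFF$ \emph{directly}: it lets $X$ be the set of connected components currently on the overloaded side $S(w_0)$ that contain at least one $w_1$-labeled vertex, and for each $C\in X$ does a case split on where $\OFF$ places $C$. If $\OFF$'s target server lies outside $S(w_1)$, then $\OFF$ must move all $\hw{w_1}{C}$ of $C$'s $w_1$-labeled vertices; if it lies inside $S(w_1)$, then $\OFF$ must move all $\bhw{w_1}{C}$ of $C$'s $w_0$-labeled vertices, and the doubling-decomposition argument gives $\bhw{w_1}{C}\ge |C|/4\ge \hw{w_1}{C}/4$. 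Summing over $C\in X$ yields $\OFF=\Omega(\alpha N)$ in one stroke, with no reference to $\APPROXALGO$.

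Your detour through $\APPROXALGO$ and Lemma~\ref{lem:offline-cost} is valid and modular, but it forces you to aggregate per ground-truth component (the $\mu_V,\nu_V,a_V^r$ bookkeeping) rather than per connected component, which is heavier than necessary. The paper's per-component case split is shorter and avoids the factor-of-two loss from Lemma~\ref{lem:offline-cost}. On the other hand, you are more careful than the paper on two points it glosses over: you state explicitly that you work in the hypothetical post-move assignment (legitimate since $\OFF$ depends only on the instance), and you separately handle the component created by the triggering move itself, which is why your structural constant is $1/8$ rather than the paper's $1/4$. Both arguments rest on the same structural core: any component sitting on $S(w_j)$ has at least a constant fraction of $w_j$-labeled vertices, proved via the last majority vote in its doubling decomposition.
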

\begin{proof}
	Let $Y$ denote the set of all connected components. Part~(1) follows from
	Lemma~\ref{lem:recursive-majority-voting:server-cost} since the total cost
	paid by Algorithm~\ref{algo:recursive-majority-voting} is
	\begin{align*}
		\sum_{C \in Y} O(\alpha |C| \lg |C|)
		\leq \sum_{C \in Y} O(\alpha |C| \lg n)
		= O(\alpha n \lg n).
	\end{align*}

	Now we prove Part~(2). Let $w$ be an internal node of $\T$ with
	children $w_0, w_1$ and suppose w.l.o.g.\ that $w_0$ is overloaded. Since
	the stopping criterion is triggered, $V(w_0)$ contains at least
	$\varepsilon n / (\ell \lg \ell)$ vertices with label $w_1$.

	Let $X$ be the set of all connected components $C$ with the following
	properties: $C$ is assigned to a server in $S(w_0)$ at the time at which the
	stopping criterion is triggered and $C$ contains at least one vertex which
	is labeled with $w_1$.
	
	To show that \OFF performs $\Omega(\varepsilon n / (\ell \lg \ell))$
	vertex moves, we prove that \OFF performs $\Omega(\hw{w_1}{C})$ vertex moves
	for each $C \in X$.  Part~(2) of the lemma follows since the components in
	$X$ contain at least $\varepsilon n /(\ell \lg \ell)$ vertices with label
	$w_1$ and thus
	\begin{align*}
		\OFF
		\geq \sum_{C \in X} \Omega(\alpha \hw{w_1}{C})
		= \Omega(\alpha \varepsilon n / (\ell \lg \ell)).
	\end{align*}
	
	We prove that \OFF moves at least $\Omega(\hw{w_1}{C})$ vertices for
	each $C \in X$ by distinguishing two cases for $C \in X$.
	We define $g$ as the function which maps $C \in X$ to the server it is
	assigned to in the solution of \OFF, i.e., \OFF assigns $C \in X$ to server
	$S_{g(C)}$.

	\emph{Case~1}: $S_{g(C)} \not\in S(w_1)$, i.e., in the final assignment
	of \OFF, the vertices in $C$ are assigned to $S_{g(C)} \not\in S(w_1)$.
	Then \OFF must perform at least $\hw{w_1}{C}$ vertex moves because it
	must move all $w_1$-labeled vertices of $C$ from their initial server
	in $S(w_1)$ to $S_{g(C)} \not\in S(w_1)$.

	\emph{Case~2}: $S_{g(C)} \in S(w_1)$, i.e., in the final solution by
	\OFF, the vertices in $C$ are assigned to a server $S_{g(C)} \in
	S(w_1)$. We show that $C$ contains at least $|C|/4$ vertices without label
	$w_1$. This implies the claim since \OFF must move at least
	$\bhw{w_1}{C} \geq |C|/4$ vertices from servers not in $S(w_1)$ to
	$S_{g(C)} \in S(w_1)$.

	Consider a doubling decomposition $(C_1, \dots, C_k)$ of $C$ (which exists
	by Lemma~\ref{lem:doubling-decomposition-exists}). After $C_1$
	and $C_2$ were merged, the algorithm performed a majority voting step and
	placed $C_1 \cup C_2$ in a server in $S(w_0)$. Thus, $\hw{w_1}{C_1 \cup C_2}
	\leq |C_1 \cup C_2|/2$ (otherwise, the majority voting step would place
	$C_1 \cup C_2$ in a server in $S(w_1)$). Hence,
		$\bhw{w_1}{C_1 \cup C_2}
		= |C_1 \cup C_2| - \hw{w_1}{C_1 \cup C_2}
		\geq |C_1 \cup C_2|/2$.
	Since $|C_1 \cup C_2| \geq |C|/2$, we get $\bhw{w_1}{C} \geq |C|/4$.
\end{proof}

\begin{proof}[Proof of Proposition~\ref{prop:recursive-majority-voting}.]
	The first statement of the proposition is implied by
	Lemmas~\ref{lem:recursive-majority-voting:perfect-partitioning} (perfect
	partitioning),~\ref{lem:recursive-majority-voting:total-cost} (total cost)
	and~\ref{lem:recursive-majority-voting:augmentation} (small augmentation).
	The second statement is proved in
	Lemma~\ref{lem:recursive-majority-voting:stop} (guarantees when stopping
			criterion is triggered).
\end{proof}

\subsection{Small--Large--Rebalance Algorithm for Many Servers}
\label{sec:small-large-rebalance-lservers}

To obtain an efficient algorithm in cases where \OFF moves many vertices, we
reuse the Algorithm~\ref{algo:small-large-rebalance} from
Section~\ref{sec:efficient-small-large-rebalance}.  Note that
Algorithm~\ref{algo:small-large-rebalance} also works with $\ell$ servers
because it did not use the fact that there are only two servers.
In the setting with $\ell$ servers, we obtain the following result.

\begin{proposition}
\label{prop:small-large-rebalance-l-servers}
	Suppose that all servers have capacity $(1+\varepsilon)n/\ell$ for
	$\varepsilon > 0$, i.e., the augmentation is $\varepsilon n / \ell$.  Then the
	cost paid by the more efficient version of
	Algorithm~\ref{algo:small-large-rebalance} is
	$O(\alpha n \lg n + (\OFF \cdot \ell \lg n) / \varepsilon)$.
\end{proposition}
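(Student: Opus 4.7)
The plan is to replay the two-server argument from Proposition~\ref{prop:small-large-rebalance} with the bookkeeping adjusted for $\ell$ servers, keeping the move-count $\le O(n \lg n + (\OFF \cdot \ell \lg n)/(\alpha \varepsilon))$ so that when multiplied by $\alpha$ we land exactly at the stated bound. The cost is decomposed into two parts: cost of small-to-large steps (Line~\ref{line:small-large-rebalance:small-to-large}) and cost of the cheap rebalancing steps (Line~\ref{line:rebalance}); these will be bounded separately.

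First I would handle the small-to-large steps. The token argument of Lemma~\ref{lem:small-large-only} is oblivious to the number of servers: whenever a vertex $v$ is in the strictly smaller component that is moved, the component containing $v$ at least doubles, so $v$ can collect at most $O(\lg n)$ tokens across the whole execution. Summing over all $n$ vertices gives $O(n \lg n)$ small-to-large moves, contributing cost $O(\alpha n \lg n)$.

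Second I would bound the number of rebalancing steps. After the most recent rebalancing, the assignment is perfectly balanced, so for any server to exceed capacity $(1+\varepsilon)n/\ell$ the small-to-large steps must have pushed at least $\varepsilon n / \ell$ vertices onto that server. Since the total number of small-to-large moves is $O(n \lg n)$, the total number of rebalancings is at most
\begin{align*}
O\!\left(\frac{n \lg n}{\varepsilon n / \ell}\right) = O\!\left(\frac{\ell \lg n}{\varepsilon}\right).
\end{align*}

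Finally I would bound the cost of each individual cheap rebalancing via the same three-step bookkeeping as in the proof of Proposition~\ref{prop:small-large-rebalance}: conceptually roll back the small-to-large steps since the last rebalancing, undo the last rebalancing to return to the initial assignment, and then move directly to the new cheap assignment. The rollbacks only blow up the already-bounded quantities by a constant factor (each small-to-large move is rolled back at most once, each rebalancing is rolled back at most once), so it suffices to bound the cost of moving from the initial assignment to a cheap assignment. Here lies the main obstacle, and it is resolved by the following observation: at any moment during the execution, every current connected component is a subset of some ground truth component $V_i$, hence \OFF's final perfect partitioning $\{V(S_0),\dots,V(S_{\ell-1})\} = \{V_0,\dots,V_{\ell-1}\}$ is itself a perfectly balanced assignment respecting the current connected components. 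Consequently this assignment is a feasible candidate in the minimization that defines a cheap assignment, so the number of vertex moves from the initial assignment to any cheap assignment is at most $\OFF/\alpha$, giving cost $O(\OFF)$ per rebalancing. Multiplying by the $O((\ell \lg n)/\varepsilon)$ rebalancings and adding the $O(\alpha n \lg n)$ from small-to-large steps yields the claimed total $O(\alpha n \lg n + (\OFF \cdot \ell \lg n)/\varepsilon)$.
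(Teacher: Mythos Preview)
Your proposal is correct and follows essentially the same route as the paper's own proof: the paper also invokes the token argument for the $O(n\lg n)$ small-to-large moves, bounds the number of rebalancings by $O((\ell \lg n)/\varepsilon)$ via the $\varepsilon n/\ell$ augmentation threshold, and then refers back to Proposition~\ref{prop:small-large-rebalance} for the fact that each cheap rebalancing costs $O(\OFF)$. Your added justification that \OFF's final perfect partitioning is a feasible candidate in the minimization (because every current component is contained in some ground-truth $V_i$) is exactly the point implicit in the paper's appeal to the earlier proposition.
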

\begin{proof}
	The proof of the lemma is almost the same as the proof of
	Proposition~\ref{prop:small-large-rebalance}. The only difference 
	is that we
	need to bound the number of rebalance operations differently.
	
	The number of vertex moves performed by the algorithm which always moves the
	smaller connected component to the server of the larger connected component
	is $O(n \lg n)$ and, hence, it incurs cost $O(\alpha n \lg n)$. Now, whenever
	a server exceeds its capacity, the algorithm must have moved at least
	$\varepsilon n/\ell$ vertices.  This can only happen $O(\ell \lg n/ \varepsilon)$
	times. By the same arguments as in the proof of
	Proposition~\ref{prop:small-large-rebalance}, each rebalancing operations
	costs $O(\OFF)$. Hence, the cost for all rebalancing steps is bounded by
	$O(\OFF \cdot \ell \lg n / \varepsilon)$.
\end{proof}

We should point out that as in Lemma~\ref{lem:small-large-rebalance}, we could
also do the repartitioning step of Algorithm~\ref{algo:small-large-rebalance} by
taking \emph{any} perfectly balanced assignment respecting the connected
components. In the analysis this would incur $\Theta(n)$ vertex moves for each
such step and, hence, yield an algorithm with $O((n \ell \lg n) / \varepsilon)$
vertex moves in total. However, unlike in the two-server case, finding a
perfectly balanced assignment respecting the connected components is an \NP-hard
problem. Nonetheless, the problem can be solved approximately in polynomial time
at the cost of a constant factor in the competitive ratio. We discuss this in
further detail in Section~\ref{sec:rebalancing:l-servers}.

\subsection{Bringing It All Together: Theorem~\ref{thm:lservers}}
\label{sec:proof-thm-lservers}

\begin{proof}[Proof of Theorem~\ref{thm:lservers}.]
Consider the algorithm which first runs
Algorithm~\ref{algo:recursive-majority-voting} until the stopping criterion is
triggered and then switches to the Algorithm~\ref{algo:small-large-rebalance}
from Section~\ref{sec:small-large-rebalance-lservers}.

If the stopping criterion of the Algorithm~\ref{algo:recursive-majority-voting}
is not triggered, then by Proposition~\ref{prop:recursive-majority-voting} the
cost of the algorithm is $O(\OFF \cdot \lg n)$. Thus, it is $O(\lg n)$-competitive.

If the stopping criterion is triggered, then
Algorithm~\ref{algo:recursive-majority-voting} pays $O(\alpha n \lg n)$ by
Proposition~\ref{prop:recursive-majority-voting}
and the cost of \OFF is $\Omega(\varepsilon n / (\ell \lg \ell))$. Furthermore,
the cost of Algorithm~\ref{algo:small-large-rebalance} is
$O(\alpha n \lg n + (\OFF \cdot \ell \lg n) / \varepsilon)$ by
Proposition~\ref{prop:small-large-rebalance-l-servers}. Hence, we obtain the
following competitive ratio:
\begin{align*}
	&\frac{O(\alpha n \lg n + (\OFF \cdot \ell \lg n) / \varepsilon)}{\OFF}
	= \frac{O(\alpha n \lg n)}{\OFF} + O\left(\frac{\ell \lg n}{\varepsilon}\right) \\
	&\leq O\left(\frac{\alpha n \lg n \cdot \ell \lg \ell}{\alpha \varepsilon n}\right) + O\left(\frac{\ell \lg n}{\varepsilon}\right)
	=  O\left(\frac{\ell \lg n \lg \ell}{\varepsilon}\right).
	\qedhere
\end{align*}
\end{proof}

\section{Distributed and Fast Algorithms}
\label{sec:distributed-fast}

In this section we show how the algorithms from Section~\ref{sec:l-servers} can
be implemented in a distributed setting (Section~\ref{sec:distributed}) and how
they need to be modified to work in polynomial time at the cost of a slightly
worse competitive ratio (Section~\ref{sec:fast}).

We should point out that even though we discuss the distributed and polynomial
time versions of the algorithms separately, they can easily be combined to
obtain a distributed algorithm with polynomial computation time.

\subsection{Distributed Algorithm}
\label{sec:distributed}
While in Section~\ref{sec:l-servers} we presented algorithms in a centralized
model of computation, we now show how
Algorithms~\ref{algo:small-large-rebalance}
and~\ref{algo:recursive-majority-voting} can be implemented in a distributed
model of computation. For realistic parameter settings, the network traffic
caused by our distributed algorithms does not increase (asymptotically) compared
to the traffic caused by moving around the vertices between the servers.

In our distributed model of computation we assume that all servers have access
to: (1) the number of servers $\ell$, (2) the ID of the root server $S_0$, (3) a
shared clock, and (4) all-to-all communication.

When computing the network traffic, we will asymptotically count the number of
messages sent by the algorithms and we further assume that each message contains
$\Theta(\lg n)$ bits. For the sake of simplicity we assume that moving a vertex
from one server to another incurs cost $\alpha = \Theta(\lg n)$.\footnote{Note
	that this is a realistic assumption since in order to move a vertex, a
	server must send the ID of a vertex to another server. Sending the ID of the
	vertex requires $\Theta(\lg n)$ bits.
}
Because of this simplifying assumption we do not have to distinguish between the
number of messages sent by the algorithm and the number of messages used for
moving algorithms.

In this distributed model of computation, we obtain the following main result
for the distributed versions of the algorithms.

\begin{theorem}
\label{thm:distributed}
	Consider a system with $\ell$ servers each of capacity $(1+\varepsilon)n/\ell$
	(i.e., augmentation $\varepsilon n/\ell$) for $\varepsilon \in (0,1/2)$.
	Let $M$ be the number of vertex moves performed by \OFF.

	Then there exists a \emph{distributed}
	$O((\ell \lg n \lg \ell) / \varepsilon)$-competitive algorithm which sends
	\begin{enumerate}
	\item $O(M \lg n)$ messages if $M = O(\varepsilon n / (\ell \log \ell))$,
	\item $O((\ell^2 \lg n) / \varepsilon + n \lg n + (\OFF \cdot \ell \lg n)/\varepsilon)$
		messages if $M = \Omega(\varepsilon n / (\ell \lg \ell))$. 
	\end{enumerate}
	In particular, if $\ell = O(\sqrt{\varepsilon n})$, then the
	algorithm's communication cost does not exceed its cost for moving vertices.
\end{theorem}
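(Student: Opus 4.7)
The plan is to take each step of Algorithms~\ref{algo:small-large-rebalance} and~\ref{algo:recursive-majority-voting} and exhibit a distributed implementation whose message count is proportional to the vertex moves it performs, plus a modest overhead for tracking the stopping criterion and for coordinating rebalancing. Since the model charges $\alpha = \Theta(\lg n)$ per move and every message carries $\Theta(\lg n)$ bits, a single vertex move already corresponds to $\Theta(1)$ messages; so if bookkeeping messages are bounded by the number of moves (plus the extra additive terms), the theorem's communication bounds fall out of Propositions~\ref{prop:recursive-majority-voting} and~\ref{prop:small-large-rebalance-l-servers}.

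First I would equip every server with a local copy of the bipartition tree $\T$ (which depends only on $\ell$) and with a directory slot for each vertex it initially held, recording that vertex's current owner. A lookup ``which server holds $u$?'' then costs one message to $u$'s initial owner. When an edge $(u,v)$ arrives at some server, two such lookups identify $S_u, S_v$; the server holding the smaller of $C_u, C_v$ ships that component together with the directory updates to the other, in $O(|C_u|)$ messages. Crucially, because our algorithm always keeps every component on a single server, a majority-voting step needs no communication to compute: the host already knows every label of $C$, picks the winning leaf of $\T$ locally, and moves $C$ in $|C|$ messages. Hence the small-to-large and majority-voting portions of Algorithm~\ref{algo:recursive-majority-voting} contribute only a constant factor over their centralized move costs.

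The stopping criterion is the first place extra traffic is needed. For each of the $\ell-1$ internal nodes of $\T$ I would designate one server in its subtree as a monitor. Whenever a vertex $v$ is moved, both endpoints send $O(\lg \ell)$ short notifications to the monitors of the ancestors of $v$'s initial leaf, so each monitor maintains the ``foreign-vertex'' count on each side of its bipartition online. When a monitor detects overload, it broadcasts a stop signal in $O(\ell)$ messages and the algorithm switches to Algorithm~\ref{algo:small-large-rebalance}. Cheap rebalancing is implemented via gather/compute/disseminate: servers upload succinct descriptions of their components and initial owners to a coordinator, who computes the cheap balanced assignment locally and broadcasts the move plan; this costs $O(n)$ messages in addition to the $O(n)$ actual moves. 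By Proposition~\ref{prop:small-large-rebalance-l-servers} there are $O(\ell \lg n / \varepsilon)$ rebalancing events, each incurring $O(\ell)$ coordination messages for setup and broadcast, producing the $O(\ell^2 \lg n / \varepsilon)$ additive term.

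Finally I would combine cases. In Case~1, $M = O(\varepsilon n / (\ell \lg \ell))$ so the stopping criterion never triggers by Proposition~\ref{prop:recursive-majority-voting}; the total cost is $O(\OFF \cdot \lg n)$ and, absorbing the per-move coordination, gives $O(M \lg n)$ messages. In Case~2, the cost of the switched-to Algorithm~\ref{algo:small-large-rebalance} from Proposition~\ref{prop:small-large-rebalance-l-servers} plus the $O(\alpha n \lg n)$ already spent in the recursive majority voting phase matches the stated bound, once the $O(\ell^2 \lg n / \varepsilon)$ monitoring overhead is added. For the last sentence of the theorem, when $\ell = O(\sqrt{\varepsilon n})$ one has $\ell^2 / \varepsilon \leq n$, so $\ell^2 \lg n / \varepsilon \leq n \lg n$; moreover, triggering the stopping criterion forces $\OFF = \Omega(\alpha \varepsilon n / (\ell \lg \ell))$ by Lemma~\ref{lem:recursive-majority-voting:stop}, which lets the $\OFF \cdot \ell \lg n / \varepsilon$ term absorb the $n \lg n$ and $\ell^2 \lg n / \varepsilon$ pieces and certify that communication cost does not asymptotically exceed moving cost. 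The main obstacle I anticipate is keeping the monitor state consistent and tight under concurrent moves without inflating the per-move overhead by more than a constant factor, and arranging the gather/disseminate for rebalancing to fit in $O(n)$ messages rather than the naive $O(n \lg n)$.
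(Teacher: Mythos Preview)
Your high-level plan matches the paper's exactly: implement each step of Algorithms~\ref{algo:small-large-rebalance} and~\ref{algo:recursive-majority-voting} distributedly, charge bookkeeping to vertex moves, and combine via the two cases of Proposition~\ref{prop:recursive-majority-voting}. The treatment of majority voting (compute the target server locally from stored initial-server labels) and the final case split are both the same as the paper's.

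There is, however, one genuine gap in your rebalancing protocol. You describe each rebuild as a full gather (``servers upload succinct descriptions of their components and initial owners'') costing $O(n)$ messages, and then claim only an $O(\ell)$ per-rebuild overhead to obtain the $O(\ell^2 \lg n/\varepsilon)$ additive term. These two claims are inconsistent: an $O(n)$ gather over $O((\ell \lg n)/\varepsilon)$ rebuilds yields $O((n\ell \lg n)/\varepsilon)$ total overhead, which exceeds the Case~2 bound whenever $\ell<n$. The paper avoids this by an \emph{incremental} replay: at each rebuild the servers send to $S_0$ only those edges that caused component merges \emph{since the previous rebuild}, together with timestamps; $S_0$ accumulates these and re-simulates the entire run locally to compute the cheap assignment. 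Because there are only $n-\ell$ merging edges in the whole run, the total edge traffic over all rebuilds is $O(n)$, and the $O(\ell^2 \lg n/\varepsilon)$ term arises solely from the $O(\ell)$ ``nothing new'' reports per rebuild. Your flagged concern about fitting gather/disseminate into $O(n)$ messages is on the right track, but the target must be $O(n)$ \emph{in aggregate}, not per rebuild.

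Two smaller points. First, your monitor scheme as written sends $O(\lg\ell)$ notifications \emph{per vertex} moved; this inflates the Case~1 bound to $O(M\lg n\lg\ell)$ rather than $O(M\lg n)$. The paper instead has the sender compute, for the $O(\lg\ell)$ ancestors of the destination $S'$, aggregate counts of how many vertices in $C$ carry each sibling label, and ships these in $O(\min\{|C|,\lg\ell\})$ messages so that $S'$ can check overloading locally; this keeps the check at $O(|C|)$ per component move. Second, in the ``in particular'' clause your absorption argument is slightly off: with $M=\Omega(\varepsilon n/(\ell\lg\ell))$ one only gets $M\ell\lg n/\varepsilon=\Omega(n\lg n/\lg\ell)$, which does not dominate $n\lg n$. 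The paper's argument is simpler: when $\ell=O(\sqrt{\varepsilon n})$ the overhead $\ell^2\lg n/\varepsilon$ is at most $n\lg n$, and $n\lg n$ is already part of the move count of Algorithm~\ref{algo:small-large-rebalance} (Proposition~\ref{prop:small-large-rebalance-l-servers}), so messages $=O(\text{moves})$ directly.
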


We show for Algorithm~\ref{algo:recursive-majority-voting}
(Section~\ref{sec:distributed:recursive-majority-voting}) and for
Algorithm~\ref{algo:small-large-rebalance}
(Section~\ref{sec:distributed:small-large-rebalance}) individually how they can
be implemented distributedly. After that we prove Theorem~\ref{thm:distributed}
in Section~\ref{sec:distributed:proof}.

\subsubsection{Making Algorithm~\ref{algo:recursive-majority-voting} Distributed}
\label{sec:distributed:recursive-majority-voting}
We start by considering the distributed implementation of
Algorithm~\ref{algo:recursive-majority-voting} and obtain the following result. 

\begin{lemma}
\label{lem:recursive-majority-voting:distributed}
	Algorithm~\ref{algo:recursive-majority-voting} can be implemented in a
	distributed model of computation such that the guarantees from
	Proposition~\ref{prop:recursive-majority-voting} still hold. Furthermore, if
	\OFF performs $M$ vertex moves, then we additionally have the following two
	properties:
	\begin{enumerate}
		\item If the stopping criterion is not triggered and the algorithm
		terminates, then the algorithm sent $O(M \lg n)$ messages.
		\item If the stopping criterion was triggered, the algorithm sent
		$O(n \lg n)$ messages.
	\end{enumerate}
\end{lemma}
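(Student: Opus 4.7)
The plan is to give a faithful distributed implementation of Algorithm~\ref{algo:recursive-majority-voting} and then bound its total number of messages by the number of vertex moves, after which Proposition~\ref{prop:recursive-majority-voting} yields both claims. Because the bipartition tree $\T$ depends only on $\ell$ and the known root server $S_0$, every server constructs its own copy of $\T$ with no communication. Each vertex's label set is determined solely by its \emph{initial} server, so it can be encoded in $O(\lg n)$ bits (the initial-server ID) and travels with the vertex whenever it is moved; labels are therefore never queried remotely. For each connected component $C$ the server currently holding $C$ (its ``home server'') stores $C$'s vertex list, its size, and the counts $\hw{w}{C}$ for the $O(\lg \ell)$ internal nodes $w$ relevant to $C$, so that merges of components on the same server and the local-algorithm decisions require no communication.

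When the adversary reveals an edge $(u,v)$, I assume the event is delivered to $S_u$; if $S_u = S_v$ the whole edge (including a possible component merge) is handled locally at zero message cost. Otherwise $S_u$ contacts $S_v$ with $O(1)$ messages to exchange $|C_u|$ and $|C_v|$ and identify the lowest common ancestor $w$ of $S_u,S_v$ in $\T$, computable from either side's local copy of $\T$. The small-to-large step then ships the smaller component (say $C_u$) from $S_u$ to $S_v$ using $O(|C_u|)$ messages, each carrying one vertex identifier together with its initial-server ID. A majority-voting step, when triggered, is decided locally at the home server from its stored label counts and moves the merged component along the chosen root-to-leaf path in $\T$ at cost $O(|C_u \cup C_v|)$ messages. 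Since no operation deviates from the centralized specification, all guarantees in Proposition~\ref{prop:recursive-majority-voting} are preserved.

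For the stopping criterion I designate $S_0$ as a central monitor. After each vertex move, the initiating server sends $S_0$ a single $O(\lg n)$-bit summary (the vertex and its source and destination servers). From this, $S_0$ locally updates the $O(\lg \ell)$ counters $\hw{w_{1-j}}{V(w_j)}$ along the relevant root-to-leaf paths in $\T$, checks the overload condition, and broadcasts a ``switch'' signal (using $\ell$ messages, once) only when the criterion fires; from then on every server runs Algorithm~\ref{algo:small-large-rebalance}. Hence the per-move overhead for stopping-criterion bookkeeping is $O(1)$, and the single broadcast contributes at most $O(\ell) = O(n)$ messages.

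It remains to count. Cross-server edges require $O(1)$ coordination messages each, but every cross-server edge forces at least one vertex move (the smaller component is shipped), so the number of cross-server edges is at most the number of vertex moves. Adding the $O(1)$-per-move counter updates and the $O(|C_u|)$ messages for each actual move, the total message count is $O(\text{number of vertex moves performed by the algorithm})$. By Proposition~\ref{prop:recursive-majority-voting}, if the stopping criterion is not triggered the algorithm's cost is $O(\OFF \cdot \lg n) = O(\alpha M \lg n)$, giving $O(M \lg n)$ vertex moves and claim~(1); if it is triggered the cost is $O(\alpha n \lg n)$, giving $O(n \lg n)$ vertex moves and, together with the one-off $O(\ell) \le O(n)$ broadcast, claim~(2). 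The main obstacle I anticipate is the stopping-criterion bookkeeping: updating all $O(\lg \ell)$ affected internal-node counters in place at distinct servers would cost a $\lg \ell$ factor per move and break the target bound; centralizing these counters at $S_0$ is what avoids this, so I must confirm that this aggregation is consistent with the model's all-to-all communication and the $\Theta(\lg n)$-bit message size convention.
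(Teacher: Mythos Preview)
Your overall approach is the same as the paper's: show that every step of Algorithm~\ref{algo:recursive-majority-voting} can be simulated with a number of messages proportional to the vertex moves it causes, and then read off the two bounds from Proposition~\ref{prop:recursive-majority-voting}. The one implementation difference is the stopping-criterion check. The paper has the \emph{receiving} server $S'$ perform the check locally, \emph{before} the move, by having the sender $S$ transmit either the per-vertex initial-server IDs (if $|C| \le \lceil \lg \ell \rceil$) or the aggregated label counts along the $O(\lg \ell)$ ancestors of $S'$ (otherwise); either way this is $O(|C|)$ messages, charged to the impending move of $C$. You instead centralize all overload counters at $S_0$, which is a reasonable alternative given that the overload condition concerns $V(w_j)$ across all servers in $S(w_j)$.

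There is, however, one genuine gap in your scheme: you report moves to $S_0$ \emph{after} they are made, whereas Algorithm~\ref{algo:recursive-majority-voting} specifies that the stopping criterion be checked \emph{before} each component move. Lemma~\ref{lem:recursive-majority-voting:augmentation}(2), and hence the guarantees of Proposition~\ref{prop:recursive-majority-voting} that you invoke, depend on the triggering move never being executed, so that the $(1+\varepsilon)n/\ell$ capacity bound still holds at the moment of the switch to Algorithm~\ref{algo:small-large-rebalance}. With a post-hoc check, the move causing the overload has already been made and a server's load may exceed its capacity, so the proposition you appeal to no longer applies as stated. The fix is straightforward and preserves your message count: before shipping $C$, send $C$'s aggregated label counts (at most $O(\lg \ell) \le O(|C|)$ messages) to $S_0$ and wait for a go/no-go; only then perform the move.
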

\begin{proof}
	We start by presenting the necessary modifications to the algorithm and
	analyze the number of sent messages at the end of the proof.

	Let us start by observing that each server can maintain a local
	representation of the bipartition tree $\T$: Since the number of servers
	$\ell$ is known to all servers and $\T$ does not depend on any other
	quantity, each server can compute $\T$ locally. Next, the data structure
	stores for each vertex its ID (requiring $O(\lg n)$ bits) and the ID $j$ of
	the server $S_j$ it was initially assigned to (requiring $O(\lg \ell)$
	bits). Thus, the data structure uses $O(\lg n)$ bits of storage for each
	vertex. In other words, it takes $O(1)$ messages to move a vertex between
	different servers.

	Next, we provide the modifications for checking the stopping criterion,
	small-to-large steps and for majority voting steps.

	Before the algorithm moves a component $C$ from server $S$ to server $S'$,
	$S$ and $S'$ need to check whether the move would trigger the stopping
	criterion.  To do so, $S$ and $S'$ do the following. First, $S$ asks $S'$
	for its ID using $O(1)$ messages. Second, $S$ distinguishes between two
	cases: (1)~$C$ contains at most $\lceil \lg \ell \rceil$ vertices. Then for
	each vertex $v \in C$, $S$ sends a message to $S'$ containing the ID of the
	server $v$ was initially assigned to. This requires $O(|C|)$ messages.
	(2)~$C$ contains more than $\lceil \lg \ell \rceil$ vertices. Then $S$
	locally computes all internal nodes $w$ of the bipartition tree $\T$ which
	contain $S'$ as a leaf. For each such node $w$, let $\bar w$ be the sibling
	of $w$ in $\T$.  Now for each $w$, $S$ computes the number of vertices in
	$C$ which were initially assigned to a server in $S(\bar w)$. Then $S$ sends
	these values to $S'$ using $O(\lg \ell)$ messages. Note that in both cases
	the algorithm does not send more than $O(|C|)$ messages and these messages
	can be charged to the moving cost of $C$ (which requires $\Omega(|C|)$
	messages) which happens after the checking of the stopping criterion.
	Third, $S'$ receives the messages from $S$ and checks locally whether
	receiving $C$ would trigger the stopping criterion. If the stopping
	criterion is not triggered, $S'$ tells $S$ to start moving $C$. If the
	stopping criterion is triggered, $S$ sends a message to the root server
	$S_0$ about this event.  Then $S_0$ informs all other servers about
	switching to Algorithm~\ref{algo:small-large-rebalance}. This requires
	$O(\ell) = O(n \lg n)$ messages.

	Now suppose the algorithm performs a small-to-large step and the stopping
	criterion was previously checked and not triggered. In this case, no
	modifications are necessary: The component $C$ can just be sent from one
	server to the other at the cost of $O(|C|)$ messages (since each vertex in
	$C$ can be sent using $O(1)$ messages).

	Now suppose a server $S$ needs to perform a majority voting step for a
	component $C$. First, observe that $S$ can locally decide whether a majority
	voting step is necessary for $C$ since it must only check the size of $C$.
	Second, when a majority voting step is necessary, $S$ can locally compute
	which server $S'$ will be the recipient of $C$: For each vertex $v \in C$,
	$S$ knows which server $v$ was initially assigned to. Hence, for each $v$,
	$S$ can compute the labels of $v$ w.r.t.\ the bipartitioning scheme from
	Section~\ref{sec:bipartitioning} locally. Since $S$ also knows $\T$, $S$ can
	compute to which server $S'$ the component $C$ should be moved to.  These
	operations do not require any communication between the servers.

	To conclude the proof of the lemma, observe that the distributed algorithm
	performs exactly as many vertex moves as the centralized algorithm.  Hence,
	the guarantees from Proposition~\ref{prop:recursive-majority-voting} still
	hold. Next, we analyze the number of messages sent by the algorithm. A
	small-to-large step moving a component $C$ requires $O(|C|)$ messages.
	Checking the stopping criterion before moving a component $C$ requires
	another $O(|C|)$ messages. Checking whether a majority voting step is
	necessary requires no communication at all. Hence, the number of messages
	used by the algorithm is linear in its number of vertex moves. Thus,
	Proposition~\ref{prop:recursive-majority-voting} implies the two additional
	properties which are claimed in the statement of the lemma.
\end{proof}

\subsubsection{Making Algorithm~\ref{algo:small-large-rebalance} Distributed}
\label{sec:distributed:small-large-rebalance}
For the distributed implementation of Algorithm~\ref{algo:small-large-rebalance}
we obtain the following result.

\begin{lemma}
\label{lem:small-large-rebalance:distributed}
	Algorithm~\ref{algo:small-large-rebalance} can be implemented in a
	distributed model of computation such that the guarantees from
	Proposition~\ref{prop:small-large-rebalance-l-servers} still hold.
	Furthermore, if \OFF performs $M$ vertex moves, then the algorithm sends at
	most $O((\ell^2 \lg n)/\varepsilon + n \lg n + (M \ell \lg n)/\varepsilon)$
	messages.
\end{lemma}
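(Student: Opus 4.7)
The plan is to reuse the data structures from Lemma~\ref{lem:recursive-majority-voting:distributed} and to let the designated root server $S_0$ maintain a global view of the component structure incrementally, so that rebalancing can be orchestrated with little coordination. Each server locally stores the IDs of its currently assigned vertices together with their initial server IDs (using $O(\lg n)$ bits per vertex, i.e.\ $O(1)$ messages per vertex move). For the small-to-large step, when an edge $(u,v)$ arrives at $S_u$ and $S_v$, the two servers exchange component sizes using $O(1)$ messages, and if their components reside on different servers, the smaller component is shipped over using $O(|C|)$ messages; the destination server can check locally whether receiving $C$ would push its load above $(1+\varepsilon)n/\ell$. By Lemma~\ref{lem:small-large-only}, the total traffic for these steps is $O(n \lg n)$ messages.

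The key additional ingredient is that $S_0$ maintains, throughout the execution, (i) a union-find structure over the vertices reflecting the currently known components, and (ii) a map from each component to the server that currently holds it. Whenever a merge is performed, or whenever a small-to-large step migrates a component between servers, the server initiating the operation sends a single-message update to $S_0$. The number of merges is at most $n-\ell$ and the number of small-to-large component moves is $O(n \lg n)$, so these notifications add only $O(n \lg n)$ messages in total, which is absorbed into the target bound. Since the initial assignment is global knowledge (or can be broadcast once in $O(n)$ messages at startup), $S_0$ therefore has enough information to compute a cheap perfectly balanced assignment respecting components locally, without any additional data gathering at rebalance time.

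When a server $S_i$ detects that some incoming move would violate its capacity, it sends a single ``trigger rebalancing'' message to $S_0$. $S_0$ locally computes a cheap assignment (using unlimited computation as allowed in the online model), then broadcasts to each of the $\ell$ servers the list of which of its local components must be migrated and to which destination. This costs $O(\ell)$ messages per rebalancing round. The actual vertex shipments are carried out pairwise between source and destination servers at $O(1)$ messages per vertex. By Proposition~\ref{prop:small-large-rebalance-l-servers}, the number of rebalancing rounds is $O((\ell \lg n)/\varepsilon)$ and the total number of vertices moved across all rebalancings is $O(n \lg n + (M \ell \lg n)/\varepsilon)$. Summing over all sources of traffic we obtain
\begin{align*}
	O(n \lg n) + O\!\left(\frac{\ell \lg n}{\varepsilon} \cdot \ell\right) + O\!\left(n \lg n + \frac{M \ell \lg n}{\varepsilon}\right)
	= O\!\left(\frac{\ell^2 \lg n}{\varepsilon} + n \lg n + \frac{M \ell \lg n}{\varepsilon}\right),
\end{align*}
which is the bound claimed by the lemma; the vertex-move count is identical to the centralized algorithm, so the guarantees of Proposition~\ref{prop:small-large-rebalance-l-servers} carry over unchanged.

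The main obstacle is the rebalancing step: a naive implementation that gathers all component information at $S_0$ on demand would pay $\Omega(n)$ messages per rebalancing and hence $\Omega((n\ell \lg n)/\varepsilon)$ in total, strictly worse than the target. The incremental-maintenance idea above sidesteps this by amortizing the bookkeeping cost into the $O(n \lg n)$ term that the algorithm already pays for small-to-large traffic, leaving only $O(\ell)$ coordination messages per rebalancing round.
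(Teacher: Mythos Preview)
Your proof is correct and reaches the same bound, but it takes a different route from the paper in how the root server $S_0$ learns the global state needed to orchestrate a rebalancing. You maintain the component structure and the component-to-server map at $S_0$ \emph{eagerly}: every merge and every small-to-large migration sends a one-message notification to $S_0$, so that when a rebalancing is triggered $S_0$ can compute a cheap assignment immediately without any further data collection. The paper instead does this \emph{lazily}: only when a rebuild is triggered does $S_0$ request, from every server, the merge-causing edges (with timestamps) inserted since the last rebuild, and then resimulates the whole execution locally to reconstruct the current component placement. Both schemes avoid the naive $\Omega(n)$-per-rebuild cost: yours amortizes the bookkeeping into the $O(n\lg n)$ small-to-large traffic, while the paper exploits that there are only $O(n)$ merge-edges in total across the whole execution and pays an extra $O(\ell)$ per rebuild for servers that report ``no new edges,'' which yields exactly the $O((\ell^2\lg n)/\varepsilon)$ term. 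Your approach is arguably cleaner (no timestamps, no resimulation), at the price of a small constant overhead on every merge.

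Two minor imprecisions worth tightening: the number of small-to-large \emph{component} moves is actually $O(n)$ (one per merge), not $O(n\lg n)$; and the claim that broadcasting the migration instructions costs $O(\ell)$ messages per round undercounts the case where many components must move, though the missing part is bounded by the number of vertices moved and hence already absorbed into your third term. Neither affects the final bound.
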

\begin{proof}
	We start by stating which modifications need to be made to make
	Algorithm~\ref{algo:small-large-rebalance} distributed.

	First, suppose that Algorithm~\ref{algo:small-large-rebalance} performs a
	small-to-large step moving a component $C$ and that this move does not make
	any server exceed its capacity. In this case, no modifications are necessary
	and the number of messages sent is $O(|C|)$ as we have seen in the proof of
	Lemma~\ref{lem:recursive-majority-voting:distributed}.

	Second, suppose that a small-to-large step wants to move component $C$ to
	server $S$ which would cause $S$ to exceed its capacity. Then the algorithm
	performs the following operations:
	\begin{enumerate}
		\item\label{step:rebuild-required}
		$S$ informs the root server $S_0$ that a rebuild is required.
		\item\label{step:send-edges}
		$S_0$ asks all $\ell$ servers to send the edges that were inserted
		\emph{and caused the merge of two connected components} since the last
		rebuild. The servers send of all these edges together with the
		timestamps when they were inserted.
		\item\label{step:simulate}
		$S_0$ locally simulates the whole system from the beginning and obtains knowledge
		about all connected components and which servers they are assigned to.
		\item\label{step:move-components}
		$S_0$ tells all other servers $S_j$ which components need to be moved
		and all servers perform the necessary moves.
	\end{enumerate}

	Since the distributed algorithm performs exactly the same vertex moves as
	the centralized algorithm, the distributed algorithm is correct and provides
	the same guarantees as provided in
	Proposition~\ref{prop:small-large-rebalance-l-servers}. We only need to
	analyze how many messages the algorithm sends. To do so, we analyze each
	step separately.

	Every time Step~\ref{step:rebuild-required} is performed, it requires $O(1)$
	messages. As there are $O((\ell \lg n)/\varepsilon)$ rebuilds in total,
	Step~\ref{step:rebuild-required} sends $O((\ell \lg n)/\varepsilon)$
	messages in total.

	To bound the number of messages sent in Step~\ref{step:send-edges}, recall
	that in total there are only $O(n)$ edges which merge connected components.
	Hence, sending these edges requires $O(n)$ messages. Furthermore, when a
	server did not obtain an edge merging two connected components between two
	rebuilds, it can inform $S_0$ about this in $O(1)$ messages. As this can be
	the case for at most $\ell$ servers and since there are
	$O((\ell \lg n)/\varepsilon)$ rebuilds, at most
	$O((\ell^2 \lg n)/\varepsilon)$ messages are sent when servers did not
	receive new edges.

	In Step~\ref{step:simulate}, $S_0$ locally simulates the system. This does
	not incur any network traffic.

	Now consider Step~\ref{step:move-components}. During a rebuild, the number
	of components which the algorithm needs to reassign is trivially bounded by
	the number of vertex moves performed during the rebuild.  Thus,
	Proposition~\ref{prop:small-large-rebalance-l-servers} implies that only
	$O(n \lg n + (\OFF \cdot \ell \lg n)/\varepsilon)$ messages are required for all
	invocations of Step~\ref{step:move-components}.

	In total, we obtain that the algorithm sends at most
	$O((\ell^2 \lg n)/\varepsilon) + n \lg n + (M \ell \lg n)/\varepsilon)$
	messages, where $M$ is the number of vertices moved by \OFF.
\end{proof}

\subsubsection{Proof of Theorem~\ref{thm:distributed}}
\label{sec:distributed:proof}
	To prove the claim about the competitive ratio of the algorithm observe that
	the distributed algorithm performs exactly the same vertex moves as the
	centralized algorithm. Hence, the cost paid by both algorithms is the same
	and the distributed algorithm has the same competitive ratio as the
	centralized algorithm in Theorem~\ref{thm:lservers}. 
	
	The claim about the number of messages sent by the algorithm follows from
	Lemma~\ref{lem:recursive-majority-voting:distributed} and
	Lemma~\ref{lem:small-large-rebalance:distributed} and summing over the
	number of messages.
	
	To prove the last claim of the theorem, we distinguish two cases. If the
	stopping criterion was not triggered, then the claim holds by
	Lemma~\ref{lem:recursive-majority-voting:distributed}. If the stopping
	criterion was triggered, then if $\ell = O(\sqrt{\varepsilon n})$, we obtain
	that the total number of messages is
	\begin{align*}
		O((\ell^2 \lg n)/\varepsilon) + n \lg n + (M \ell \lg n)/\varepsilon)
		&= O( (\varepsilon n \lg n)/\varepsilon + n \lg n + (M \ell \lg n)/\varepsilon ) \\
		&= O( n \lg n + (M \ell \lg n)/\varepsilon ),
	\end{align*}
	which is exactly the number of vertices moved by
	Algorithm~\ref{algo:small-large-rebalance}.
\qed

\subsection{Fast Algorithms}
\label{sec:fast}

In this section, we discuss the computational challenges when computing
perfectly balanced assignments. These computational problems occur when
Algorithm~\ref{algo:small-large-rebalance} performs rebalancing steps (see
Section~\ref{sec:small-large-rebalance} and
Section~\ref{sec:small-large-rebalance-lservers}). So far, we were only
concerned with algorithms which try to minimize the vertex moves while using
potentially exponential running time. We now consider polynomial time
algorithms. The only step where our algorithms might use exponential time is
during rebalancing. Thus we show next how to perform the rebalancing operations
in polynomial time. In the case of $\ell > 2$ servers, our polynomial time
algorithms perform slightly more vertex moves than the exponential time
algorithms.

We discuss the two server case which can be solved optimally in polynomial time
in Section~\ref{sec:rebalancing:two-servers}. In
Section~\ref{sec:rebalancing:l-servers}, we argue that in the general case with
$\ell > 2$ servers this problem is \NP-hard. We resolve this issue in
Section~\ref{sec:rebalancing:l-servers-approx} by computing approximately
balanced assignments in polynomial time.

\subsubsection{Computing Perfectly Balanced Assignments for Two Servers}
\label{sec:rebalancing:two-servers}
We consider computing a perfectly balanced assignment respecting the connected
components for two servers.  Specifically, we provide a dynamic program which
can find such an assignment in polynomial time.

The dynamic program works as follows. Suppose $C_1, \dots, C_q$ are the
connected components assigned to the two servers. Now let $k_i = |C_i|$ for
$i=1,\dots,q$. We create a set $\S$ consisting of integers with the following
property: Each number $s \in \S$ corresponds to a set of connected components
$\C$ such that $|\bigcup_{C \in \C} C| = s$. That is, whenever $s \in \S$, there
exists a set of connected components which together contain $s$ vertices.  For
each $s\in\S$, the algorithm maintains a set of connected components explicitly.
We denote the components corresponding to value $s\in\S$ by $\comp(S)$.  

At the beginning of a rebalancing step, the algorithm sets $\S = \{0\}$. The
connected component corresponding to value $0$ is simply the empty set of
vertices, i.e., $\comp(0) = \emptyset$.  For $i = 1,\dots,q$ the algorithm does
the following. Iterate over all $s \in \S$ and over all components and add
$s + k_i$ to $\S$ if $s + k_i \not\in \S$ and $C_i \not\in \comp(S)$.
Whenever a new value $s + k_i$ is added to $\S$, set
$\comp(s + k_i) = \comp(s) \cup \{C_i\}$.

As soon as the value $n/2$ is added to $\S$, the dynamic program stops and
assigns all vertices in $\comp(n/2)$ to the left server and all remaining
vertices to the right server.

The correctness of the above algorithm is clear by construction. We only need to
show that it finishes in polynomial time.

Note that the above dynamic program runs in time $O(q |\S|)$. Now observe that $q$
is bounded by $n$ since there are at most $n$ connected components. Furthermore,
for each subset $\C \subseteq \{C_1, \dots, C_q\}$, we have that
$\sum_{C\in\C} |C| \leq n$ (because the components in $\C$ cannot contain more
than $n$ vertices). Thus, $|\S| \leq n+1$ because each value $s\in\S$
corresponds to a subset of components $\C \subseteq \{C_1,\dots,C_q\}$ and each
value $s \in \{0,\dots,n\}$ is only added once to $\S$.  Hence, the algorithm
runs in time $O(q |\S|) = O(n^2)$.

\subsubsection{Computing Perfectly Balanced Assignments for Many Servers}
\label{sec:rebalancing:l-servers}
We consider computing a perfectly balanced assignment respecting the connected
components for $\ell$ servers.

Let $C_1, \dots, C_q$ be the connected components which are assigned to the
$\ell$ servers. To find a perfectly balanced assignment respecting the connected
components, we need to find a partition of the set $\S = \{|C_1|,\dots,|C_q|\}$
into $\ell$ subsets $\S_1, \dots, \S_\ell$ such that for each subset $\S_i$ we
have that $\sum_{s \in \S_i} s = n/\ell$.

Unfortunately, the above problem is known to be \NP-complete, see, e.g.,
the result about multi-processor scheduling in Garey and
Johnson~\cite{garey79computers}. However, since we prove our results in
the online model of computation, which allows unlimited computational power, the
algorithm can solve this \NP-complete problem.
We note that this problem has also been
studied in practice, see, e.g., Schreiber et al.~\cite{ethan18optimal} and
references therein.

See Section~\ref{sec:rebalancing:l-servers-approx} for how this problem can be
solved approximately at the cost of a constant in the competitive ratio of the
algorithm.

\subsubsection{Computing Approximately Balanced Assignments for Many Servers}
\label{sec:rebalancing:l-servers-approx}

Previously we have we seen that \emph{perfectly} balanced assignments for $\ell$
servers cannot be computed in polynomial time unless $\P = \NP$
(Section~\ref{sec:rebalancing:l-servers}). Thus, we now consider computing
\emph{approximately} balanced assignments for $\ell$ servers which is sufficient
for our purpose: Let $\varepsilon' > 0$ be a constant.  An assignment is
\emph{$(1+\varepsilon')$-approximately balanced} if each server has load at most
$(1+\varepsilon') n / \ell$. Using this definition, we obtain the following
result.

\begin{proposition}
\label{prop:approx-balanced-assignment}
	Let $\varepsilon > \varepsilon' > 0$ be constants and suppose each server
	has capacity $(1+\varepsilon)n/\ell$. Then a
	$(1+\varepsilon')$-approximately balanced assignment for $\ell$ servers can
	be computed in polynomial time.
\end{proposition}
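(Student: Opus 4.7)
The plan is to reduce the problem to an instance of makespan minimization on $\ell$ identical machines and then invoke a known polynomial-time approximation scheme (PTAS). Let $C_1,\dots,C_q$ be the connected components that currently need to be assigned, with sizes $k_i = |C_i|$. Treat each $C_i$ as an indivisible ``job'' of processing time $k_i$, and treat the $\ell$ servers as $\ell$ identical machines. Any schedule induces an assignment that respects the connected components (each component goes entirely to one server), and the makespan of the schedule equals the maximum load over the $\ell$ servers. Thus it suffices to compute in polynomial time a schedule whose makespan is at most $(1+\varepsilon')\,n/\ell$.

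The first step is to observe that the optimal makespan $\OFF_{\text{make}}$ of this scheduling instance equals $n/\ell$. Indeed, each component $C_i$ is contained in some ground truth component $V_j$ (because the connected components of the currently revealed edges always refine the ground truth components), and each $V_j$ has size exactly $n/\ell$. Hence the assignment that puts every $C_i$ on the server corresponding to the ground truth component containing it is a perfectly balanced schedule with makespan $n/\ell$. Since every job has size at most $n/\ell$, the optimum cannot be smaller.

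The second step is to invoke the classical PTAS of Hochbaum and Shmoys for scheduling on identical machines, which, for any fixed constant $\varepsilon' > 0$, computes in time polynomial in $q$ (and hence in $n$) a schedule of makespan at most $(1+\varepsilon')\cdot\OFF_{\text{make}} = (1+\varepsilon')\,n/\ell$. Translating back, this yields in polynomial time an assignment of the connected components to the $\ell$ servers such that every server has load at most $(1+\varepsilon')\,n/\ell$, i.e., a $(1+\varepsilon')$-approximately balanced assignment. Since $\varepsilon > \varepsilon'$, this assignment is valid with respect to the capacity $(1+\varepsilon)n/\ell$ of the servers.

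There is no real obstacle here beyond citing the PTAS correctly and verifying that our rebalancing problem really is makespan on identical machines; the mildly subtle point is justifying that the optimum makespan is $n/\ell$, which relies on the invariant (used throughout the paper) that the current connected components refine the ground truth partition. In Section~\ref{sec:small-large-rebalance-lservers}, this can then be plugged in wherever a ``perfectly balanced assignment respecting the connected components'' was previously used; since $\varepsilon'$ is constant, the extra slack only affects the competitive ratio by a constant factor.
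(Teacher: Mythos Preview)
Your proposal is correct and follows essentially the same approach as the paper: reduce to makespan minimization on identical machines, observe that the ground-truth partition witnesses an optimal makespan of $n/\ell$, and apply the Hochbaum--Shmoys PTAS. One minor remark: your sentence ``Since every job has size at most $n/\ell$, the optimum cannot be smaller'' is not the right justification for the lower bound (that would come from the averaging argument $\sum_i k_i = n$), but the lower bound is in fact unnecessary---only $\OFF_{\text{make}} \le n/\ell$ is needed for the PTAS to yield load at most $(1+\varepsilon')n/\ell$.
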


Using the proposition (which we prove at the end of the subsection), we obtain a
polynomial time algorithm with a slightly worse competitive ratio than that of
Theorem~\ref{thm:lservers}.
\begin{theorem}
\label{thm:poly-time-algo}
	Given a system with $\ell$ servers each of capacity $(1+\varepsilon)n/\ell$,
	for constant $\varepsilon \in (0,1/2)$, then there exists an
	$O((\ell^2 \lg n \lg \ell) / \varepsilon^2)$-competitive algorithm which
	runs in polynomial time.
\end{theorem}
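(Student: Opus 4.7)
The plan is to reuse the meta-algorithm from the proof of Theorem~\ref{thm:lservers}, namely: first run Algorithm~\ref{algo:recursive-majority-voting} until its stopping criterion is triggered, then switch to Algorithm~\ref{algo:small-large-rebalance}. The only non-polynomial step in this combined procedure is the rebalancing operation inside Algorithm~\ref{algo:small-large-rebalance}. We replace it by the polynomial-time procedure from Proposition~\ref{prop:approx-balanced-assignment} invoked with, say, $\varepsilon' = \varepsilon/2$, which produces a $(1+\varepsilon/2)$-approximately balanced assignment respecting the connected components in polynomial time. Note that Algorithm~\ref{algo:recursive-majority-voting} itself can already be implemented in polynomial time (no rebalancing is performed there), so only the behavior of the second phase needs reanalysis.

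If the stopping criterion of Algorithm~\ref{algo:recursive-majority-voting} is never triggered, Proposition~\ref{prop:recursive-majority-voting} gives competitive ratio $O(\lg n)$, which is subsumed by the bound claimed in the theorem. Otherwise, Proposition~\ref{prop:recursive-majority-voting} yields that $\OFF = \Omega(\alpha \varepsilon n / (\ell \lg \ell))$ and that the cost paid in the first phase is $O(\alpha n \lg n)$. What remains is to bound the cost of the second (rebalancing) phase under the approximate rebalancing rule.

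For this phase I would imitate the argument of Proposition~\ref{prop:small-large-rebalance-l-servers}, but with two modifications. First, between two consecutive rebalances each server starts with load at most $(1+\varepsilon/2)n/\ell$ and is allowed to grow to $(1+\varepsilon)n/\ell$, so a rebalance is triggered only after at least $\varepsilon n/(2\ell)$ small-to-large moves since the last rebalance; combined with the $O(n \lg n)$ total budget for small-to-large moves (as in Lemma~\ref{lem:small-large-only}), this gives at most $O((\ell \lg n)/\varepsilon)$ rebalances. Second, since we no longer compute a \emph{cheap} rebalancing, we can only bound the moves of each rebalance by the trivial $\Theta(n)$. Hence the total number of moves in the second phase is $O(n \ell \lg n / \varepsilon)$, and the total algorithmic cost is
\begin{align*}
    \ALG = O(\alpha n \lg n) + O(\alpha n \ell \lg n / \varepsilon) = O(\alpha n \ell \lg n / \varepsilon).
\end{align*}
Dividing by the lower bound on $\OFF$ yields a competitive ratio of
\begin{align*}
    \frac{\ALG}{\OFF} = O\!\left( \frac{\alpha n \ell \lg n / \varepsilon}{\alpha \varepsilon n / (\ell \lg \ell)} \right) = O\!\left( \frac{\ell^2 \lg n \lg \ell}{\varepsilon^2} \right).
\end{align*}

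The main obstacle is purely bookkeeping: one must check that after an approximate rebalancing to load $(1+\varepsilon/2)n/\ell$ each server still has $\varepsilon n/(2\ell)$ slack, so that the number of rebalancing steps grows by only a constant factor (and hence is still $O((\ell \lg n)/\varepsilon)$) and no capacity constraint is ever violated. The reduction from communication cost to moving cost (Lemma~\ref{lem:reduction}) applies unchanged because the combined algorithm still keeps each connected component on a single server at all times.
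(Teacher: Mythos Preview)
Your proposal is correct and follows essentially the same approach as the paper's own proof: run Algorithm~\ref{algo:recursive-majority-voting} first, then switch to Algorithm~\ref{algo:small-large-rebalance} with the rebalancing step replaced by the polynomial-time $(1+\varepsilon/2)$-approximate rebalancing of Proposition~\ref{prop:approx-balanced-assignment}, bound each rebalance by the trivial $O(n)$ moves, count at most $O((\ell \lg n)/\varepsilon)$ rebalances via the $\varepsilon n/(2\ell)$ slack argument, and divide the resulting $O(\alpha n \ell \lg n/\varepsilon)$ total cost by the $\Omega(\alpha \varepsilon n/(\ell \lg \ell))$ lower bound on $\OFF$. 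Your exposition is in fact slightly more explicit than the paper's in handling the non-triggered case and in flagging the slack bookkeeping.
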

\begin{proof}
	First, observe that Algorithm~\ref{algo:recursive-majority-voting} runs in
	polynomial time. Thus, the result of
	Proposition~\ref{prop:recursive-majority-voting} also holds for polynomial
	time algorithms.

	Second, consider a modification of
	Algorithm~\ref{algo:small-large-rebalance} where at each rebalancing step we
	compute a $(1+\varepsilon')$-approximately balanced assignment for
	$\varepsilon' = \varepsilon / 2$. Such an approximately balanced assignment
	can be computed in polynomial time due to
	Proposition~\ref{prop:approx-balanced-assignment}. Thus, the modified
	algorithm runs in polynomial time.

	Observe that now all steps of the resulting algorithm can be computed in
	polynomial time.  It is left to bound the competitive ratio of the modified
	algorithm.
	
	We start by bounding the cost paid by the modified version of
	Algorithm~\ref{algo:small-large-rebalance}.
	Note that each approximate rebalancing step incurs cost at most $O(\alpha n)$;
	recall that $\alpha$ denotes the cost for moving a vertex to a different
	server.  Now we bound the number of approximate rebalancing steps.  Recall
	from Lemma~\ref{lem:small-large-only} that the number of vertex moves due to
	small-to-large steps is at most $O(n \lg n)$. Now whenever a new
	approximately balanced assignment is computed, the small-to-large steps must
	have moved at least $\Omega( (\varepsilon - \varepsilon') n/\ell)$ vertices
	to exceed the capacity of one of the servers. Thus, the total number of
	approximate rebalancing operations is bounded by
	$O((\ell \lg n) / (\varepsilon - \varepsilon'))$ and, hence, the total cost
	of Algorithm~\ref{algo:small-large-rebalance} with approximate rebalancing
	steps is bounded by
	$O((\alpha n \ell \lg n) / (\varepsilon - \varepsilon'))$.

	Altogether, we obtain the following competitive ratio by following the steps
	from the proof of Theorem~\ref{thm:lservers}
	(Section~\ref{sec:proof-thm-lservers}):
	\begin{align*}
		\frac{O(\alpha n \lg n + (\alpha n \ell \lg n) / (\varepsilon - \varepsilon'))}{\OFF}
		= O\left( \frac{ \alpha n \lg n + (\alpha n \ell \lg n) / (\varepsilon - \varepsilon') }
				{ \alpha \varepsilon n / (\ell \lg \ell) } \right)
		=  O\left(\frac{\ell^2 \lg n \lg \ell}{\varepsilon^2}\right),
	\end{align*}
	where in the last step we used that $\varepsilon - \varepsilon' = \varepsilon/2$.
\end{proof}

To prove Proposition~\ref{prop:approx-balanced-assignment}, we consider the
makespan minimization problem in which there are $k$ jobs with processing times
$p_1, \dots, p_k$ which must be assigned to $\ell$ identical machines. Given an
assignment of the jobs to the machines, the maximum running time time of any
machine is called the \emph{makespan}. The goal is to find an assignment of the
jobs to the machines which minimizes the makespan.

The makespan minimization problem is known to be \NP-hard but Hochbaum and
Shmoys~\cite{hochbaum87using} presented a polynomial time approximation scheme
(PTAS).
\begin{lemma}[Hochbaum and Shmoys~\cite{hochbaum87using}]
\label{lem:makespan}
	Let $\varepsilon' > 0$ be a constant. Then there exists an algorithm which
	computes a $(1+\varepsilon')$-approximate solution for the makespan
	minimization problem in polynomial time.
\end{lemma}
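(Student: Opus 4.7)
The plan is to follow the Hochbaum--Shmoys dual approximation framework. I would first observe that the optimal makespan $T^*$ lies in the interval $[L, 2L]$, where $L = \max\{\max_i p_i, (\sum_i p_i)/\ell\}$, and search (with geometric steps of factor $1+\varepsilon'$) over guesses $T \in [L, 2L]$. For each guess $T$, the core subroutine must either output a schedule of makespan at most $(1+\varepsilon')T$, or certify that no schedule of makespan $\le T$ exists. The smallest $T$ on which the subroutine succeeds yields a $(1+\varepsilon')$-approximate makespan, and the number of guesses is $O(\log_{1+\varepsilon'} 2) = O(1/\varepsilon')$, which is polynomial for fixed $\varepsilon'$.

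For the subroutine, I would separate the jobs by size relative to $T$: call a job \emph{large} if $p_i \geq \varepsilon' T$ and \emph{small} otherwise. Any feasible schedule of makespan $T$ places at most $1/\varepsilon'$ large jobs on each machine. To keep the large-job information manageable, I would round each large $p_i$ down to the nearest multiple of $(\varepsilon')^2 T$, leaving only $k = O(1/(\varepsilon')^2)$ distinct rounded sizes. A single machine's load then corresponds to a vector in $\mathbb{Z}_{\geq 0}^k$ with entries summing to at most $1/\varepsilon'$, so there are only $(1/\varepsilon')^{O(1/\varepsilon')}$ distinct \emph{configurations} per machine. I would then solve a dynamic program whose state is the number of remaining large jobs of each rounded size; at each step the DP ``uses'' one configuration on one machine and decrements the appropriate counts. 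In polynomial time, the DP either packs all large jobs into $\ell$ machines of capacity $T$ or certifies infeasibility.

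If the DP succeeds, I would extend its output schedule by inserting the small jobs greedily via list scheduling into the already-loaded machines. Because each small job has size $< \varepsilon' T$, the first time any machine's load exceeds $T$ its load is still below $(1+\varepsilon')T$; and if even after all small jobs are placed some machine's load would exceed $(1+\varepsilon')T$, a straightforward averaging argument shows $\sum_i p_i > \ell T$, contradicting feasibility at makespan $T$, so we legitimately declare $T$ infeasible. The main technical obstacle is calibrating the rounding: the discretization $(\varepsilon')^2 T$ is chosen precisely so that (i) per machine only $1/\varepsilon'$ large jobs appear, each losing at most $(\varepsilon')^2 T$ to rounding, for a total additive loss of $\varepsilon' T$ per machine, and (ii) the number of configurations, and hence the DP table size, remains a function of $\varepsilon'$ alone rather than of $n$. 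Combining these bounds shows that every approximation error collapses into the single $(1+\varepsilon')$ factor, giving the claimed PTAS.
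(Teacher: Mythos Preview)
The paper does not prove this lemma at all: it is stated as a citation of the classical PTAS of Hochbaum and Shmoys~\cite{hochbaum87using} and used as a black box in the proof of Proposition~\ref{prop:approx-balanced-assignment}. So there is no ``paper's own proof'' to compare your attempt against.

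That said, your sketch is a faithful outline of the original Hochbaum--Shmoys dual-approximation argument and is essentially correct. Two minor calibration points are worth tightening if you ever write this out in full: (i)~the rounding error on the large jobs already consumes an additive $\varepsilon' T$ per machine, and the greedy insertion of small jobs contributes another additive $\varepsilon' T$, so as written you get makespan $(1+2\varepsilon')T$ rather than $(1+\varepsilon')T$; this is harmless since you can start with $\varepsilon'/2$. (ii)~The DP state space is $n^{O(1/(\varepsilon')^2)}$ (counts of each rounded size), not a function of $\varepsilon'$ alone; only the number of \emph{configurations} per machine is independent of $n$. Both are standard and do not affect the conclusion that the running time is polynomial for each fixed $\varepsilon'$.
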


Using the result from the lemma we can prove
Proposition~\ref{prop:approx-balanced-assignment}.
\begin{proof}[Proof of Proposition~\ref{prop:approx-balanced-assignment}]
	Suppose the system currently contains connected components $C_1,\dots,C_k$.
	We consider these connected components as the jobs of the makespan
	minimization problem with processing times $p_i = |C_i|$ for $i=1,\dots,k$.
	The machines correspond to the $\ell$ servers.

	Note that the optimal solution for the instance of the makespan minimization
	problem is $n/\ell$: Since we have made the assumption that in the final
	assignment all servers have load exactly $n/\ell$, there must exist a
	perfectly balanced assignment from the components $C_i$ to the servers
	$S_i$. In other words, there exists an assignment of the jobs to the
	machines such that each machine has running time $n/\ell$ and, hence, the
	optimal makespan is $n/\ell$.

	By running the algorithm from Lemma~\ref{lem:makespan}, we obtain a
	$(1+\varepsilon')$-approximate solution for the makespan minimization
	problem. Since the optimal solution for this problem is $n/\ell$, each 
	machine has load at most $(1+\varepsilon')n/\ell$ in the solution returned
	by the algorithm from Lemma~\ref{lem:makespan}. Assigning the components
	$C_i$ to the servers in exactly the same way as the corresponding jobs are
	assigned to the corresponding machines, we obtain a
	$(1+\varepsilon')$-approximately balanced assignment in polynomial time.
\end{proof}

\section{Lower Bounds}
\label{sec:lowerbounds}

To study the optimality of our algorithms, we derive bounds on the competitive
ratios which can be achieved by \emph{any} deterministic online algorithm. 

The following theorem provides a lower bound of $\Omega(1/\varepsilon + \lg n)$.
The lower bound has the following two main consequences: (1) If an algorithm is
only allowed to use constant augmentation (i.e., servers of capacity $n/\ell + O(1)$),
then the lower bound implies that any algorithm must have a competitive ratio of
$\Omega(n)$.\footnote{To obtain servers of capacity $n/\ell + O(1)$, we must set
	$\varepsilon = O(1)/n$.}
(2) The lower bound holds even in the setting in which there are only two
servers. Thus, the algorithm from Section~\ref{sec:two-servers} for the two
server setting is close to optimal (up to a $O(\min\{ 1/\varepsilon, \lg n\})$ factor) and the
generalized algorithm from Section~\ref{sec:l-servers} is optimal up to a
$O(\ell \lg \ell \min\{1/\varepsilon, \lg n\})$ factor.

\begin{theorem}
\label{thm:det-lower-bound}
	Suppose there are two servers of capacity $(1+\varepsilon)n/2$ for
	$\varepsilon \leq 0.98$. Then any deterministic online algorithm
	must have a competitive ratio of $\Omega(1/\varepsilon + \lg n)$.
\end{theorem}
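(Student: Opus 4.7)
The bound $\Omega(1/\varepsilon + \lg n)$ is the maximum of two separate lower bounds, $\Omega(\lg n)$ and $\Omega(1/\varepsilon)$, so I would prove these via two distinct adversarial constructions and combine them. In each case the adversary is adaptive against an arbitrary deterministic online algorithm and carefully engineers the ground truth partition so that $\OFF$ (as characterized by Lemma~\ref{lem:cost-opt-2server}) is small while $\ALG$ is large.

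For the $\Omega(\lg n)$ piece I would use a balanced merge-tree adversary. Starting from a perfectly balanced initial assignment, the adversary reveals edges organized into $\lg(n/2)$ merge phases: at phase $i$ there are $\Theta(n/2^i)$ connected components of size $2^i$, with roughly half on each server, and the adversary pairs them across the two servers so that in each pair a vertex on server~$0$ gets merged with a vertex on server~$1$. Any deterministic algorithm must move at least one of the two halves in each such pair, contributing $\Omega(\alpha 2^i)$ per pair and $\Omega(\alpha n)$ per phase, for a total of $\Omega(\alpha n \lg n)$. The adversary then chooses the final ground truth components $V_0, V_1$ so that the initial assignment disagrees with them on only $\Theta(n)$ vertices, which by Lemma~\ref{lem:cost-opt-2server} gives $\OFF = \Theta(\alpha n)$ and hence competitive ratio $\Omega(\lg n)$.

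For the $\Omega(1/\varepsilon)$ piece I would use an augmentation-pressure adversary. The adversary reveals edges that repeatedly force the algorithm to consolidate merged components across the two servers. Each consolidation eventually drives one of the servers against its capacity bound $(1+\varepsilon)n/2$, so that the algorithm must shift $\Theta(\varepsilon n)$ vertices to the other side; the adversary then continues with edges whose required consolidation pulls vertices back in the opposite direction. By iterating $\Theta(1/\varepsilon)$ such back-and-forth rounds, the total algorithmic cost reaches $\Theta(\alpha n)$. The ground truth $V_0, V_1$ is fixed so that the offline solution needs only a single net shift of $\Theta(\varepsilon n)$ vertices, giving $\OFF = \Theta(\alpha \varepsilon n)$ by Lemma~\ref{lem:cost-opt-2server} and competitive ratio $\Omega(1/\varepsilon)$.

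The main obstacle in both constructions is balancing adaptivity with consistency: the adversary may inspect the algorithm's state before each revelation, but the full edge sequence must be realizable as the edge set of a graph whose two connected components are exactly the claimed $V_0, V_1$ of size $n/2$. The technical heart of the proof is to design the adaptive revelation rule so that it (i) always converges to a valid pair of ground truth components of the right sizes, (ii) keeps those components within $\Theta(n)$ vertices (respectively $\Theta(\varepsilon n)$ vertices) of the initial assignment so that $\OFF$ stays small, and (iii) still forces the claimed lower bound on $\ALG$ regardless of the algorithm's internal decisions. I expect the phase-by-phase cost bookkeeping, together with the consistency argument for the adaptively chosen ground truth, to be the main technical effort.
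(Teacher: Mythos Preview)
Your two-lemma decomposition and your $\Omega(\lg n)$ merge-phase adversary match the paper's approach closely. Two issues deserve attention.

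First, both of your constructions silently assume that whenever an edge $(u,v)$ lands with $u$ and $v$ on different servers, the algorithm \emph{must} move a component. But the algorithm could simply pay the unit communication cost and leave the assignment unchanged; nothing in the model forces collocation. The paper handles this with a preliminary reduction (its Lemma~\ref{lem:connected-components}): any edge sequence $\sigma$ can be padded by repeating each edge $\Theta(\alpha n^3)$ times, so that an algorithm which ever leaves a component split pays $\Omega(\alpha n^3)$ and hence has competitive ratio $\Omega(n^2)$. Only after this reduction are you entitled to charge $\Omega(\alpha 2^i)$ per cross-server merge in the $\lg n$ argument. You should state this step explicitly.

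Second, your $\Omega(1/\varepsilon)$ plan---iterate $\Theta(1/\varepsilon)$ ``back-and-forth'' rounds, each shifting $\Theta(\varepsilon n)$ vertices---is vaguer than the paper's and runs squarely into the consistency obstacle you yourself flag: it is not clear how to realize those oscillations with a single fixed pair $V_0,V_1$ whose offline cost is only $\Theta(\alpha\varepsilon n)$. The paper sidesteps this with a different, more rigid construction. Setting $K=\varepsilon n/2$, the left server initially holds $q=n/(2(K+1))$ components $C_1,\dots,C_q$ each of size $K{+}1$; the right server holds one component $C$ of size $K{+}1$ and one large component $C'$ occupying the rest. The adversary connects $C$ to $C_1$; since neither server has $K{+}1$ free slots, the algorithm is \emph{forced} to swap some $C_i$ to the right. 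The adversary then connects the growing left component to whichever $C_i$ now sits on the right, forcing another swap, and so on. Every vertex that ends on the left has visited the right once, so $\ALG=\Omega(\alpha n)$, while $\OFF$ just moves the single correct $C_i$ at cost $\alpha(K{+}1)=\Theta(\alpha\varepsilon n)$. The ground truth here is essentially fixed from the start (only the identity of the one migrating $C_i$ is chosen adaptively), which dissolves the consistency problem rather than confronting it.
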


To prove the theorem, we show in Section~\ref{sec:connected-components}
that there exist input sequences such that \emph{either} an algorithm always assigns
vertices of the same connected component to the same server \emph{or} it has
prohibitively high cost. Using this fact,
we prove our concrete lower bounds in Section~\ref{sec:concrete-lower-bounds}.

\subsection{Assigning Connected Components to Servers}
\label{sec:connected-components}

In this subsection, we give an important reduction 
which will be useful to derive the lower
bounds in the next subsection (Section~\ref{sec:concrete-lower-bounds}).  
This reduction lets us assume that every competitive algorithm will always assign
vertices of the same connected component to the same server.
   
More concretely, we show that every sequence of edges $\sigma$ can be manipulated
to a new edge sequence $\sigma'$ such that: (1) $\sigma$ reveals the same edges
as $\sigma'$ and (2) on input $\sigma'$, every algorithm \emph{either} moves the
vertices of the same connected components to the same server, \emph{or} has
prohibitively high cost and, hence, cannot be competitive.

We first prove the following technical lemma.

\begin{lemma}
\label{lem:connected-components}
	Consider a sequence $\sigma$ which reveals the edges $\emptyset \neq E^* \subseteq E$.
	Let $C_1, \dots, C_q$ be the connected components induced by $E^*$.

	Then for each initial assignment there exists an input sequence $\sigma'$
	consisting only of edges in $E^*$ such that either (1) at some point during
	the input sequence the algorithm assigns all vertices from each $C_i$ to the
	same $S_j$ or (2) the cost of the algorithm is at least
	$\Omega(\alpha n^3)$.
\end{lemma}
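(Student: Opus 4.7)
The plan is to construct $\sigma'$ adaptively against the deterministic algorithm~$\mathcal{A}$, forcing it either to reach condition~(1) quickly or to incur one unit of communication cost per request. First, fix a spanning forest $F \subseteq E^*$ of the graph $(V, E^*)$, consisting of one spanning tree per connected component $C_i$, so $|F| \le n - 1$. The key observation is the following equivalence: the current assignment satisfies condition~(1) --- every $C_i$ lies entirely on a single server --- if and only if no edge of $F$ is \emph{split}, i.e., no edge of $F$ has its endpoints on different servers. The forward direction is immediate, and the reverse follows by transitivity along each spanning tree.

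Now build $\sigma'$ as follows. Set $N := c \alpha n^3$ for a suitably small constant $c > 0$. For $t = 1, \ldots, N$: since $\mathcal{A}$ is deterministic and the initial assignment is given, we can simulate $\mathcal{A}$'s state after processing $\sigma'[1..t-1]$. If no edge of $F$ is split under this state, halt --- condition~(1) is witnessed at this moment. Otherwise pick any split edge $e \in F$, append $e$ to $\sigma'$, and let $\mathcal{A}$ process it (possibly reassigning vertices afterward).

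If the construction halts early, condition~(1) holds. Otherwise $\sigma'$ has length $N$, and at the instant $\mathcal{A}$ processes each appended edge, that edge is split by construction, so by the model's definition $\mathcal{A}$ is charged one unit of communication cost for the request --- regardless of the moves it may perform after the request. Summing over the $N$ requests, $\mathcal{A}$'s total cost is at least $N = \Omega(\alpha n^3)$, which gives condition~(2); all appended edges lie in $F \subseteq E^*$ as required. The only subtlety to be careful about is that communication cost is assessed against the assignment \emph{at the moment of the request}: any remedial moves $\mathcal{A}$ makes after seeing a split edge can lower the cost of future requests but cannot undo the unit already charged, which is exactly what makes the adaptive strategy a win--win for the adversary.
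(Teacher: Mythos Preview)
Your proof is correct. Both your argument and the paper's reach the same conclusion, but they differ in how $\sigma'$ is built. The paper's construction is \emph{oblivious}: it simply enumerates all edges of $E^*$ as one block and concatenates that block $\lceil \alpha n^3 \rceil$ times, then argues that if condition~(1) never holds, each block must incur cost at least~$1$ (either a communication charge on some split edge in the block, or a move the algorithm made beforehand to unsplit it). Your construction is \emph{adaptive}: you simulate $\mathcal{A}$ step by step and always hand it an edge that is currently split, guaranteeing one unit of communication cost per request outright. Your version is a little cleaner in that it avoids the per-block accounting, and the spanning-forest characterisation of condition~(1) is a nice touch, though strictly speaking any currently split edge of $E^*$ would serve. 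The paper's oblivious sequence, on the other hand, has the minor advantage that the same $\sigma'$ works irrespective of the algorithm's internal choices, which could matter against randomised algorithms; for the deterministic setting of the lemma both routes are equally valid.
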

\begin{proof}
	We will construct an input sequence $\sigma'$ provided by the adversary such
	that either Property~(1) or Property~(2) must hold.

	Consider an arbitrary initial assignment and pick the ground truth
	components $V_i$ such that they do not coincide with the initial assignment
	of the vertices to the servers, i.e., $V_i \neq \init(S_j)$ for all $i, j$.
	Let $E^* = \{e_1', \dots, e_t' \}$ be the edges revealed by the adversary
	and suppose that $E^*$ contains at least one edge $(u,v)$ such that $u$ and
	$v$ are assigned to different servers in the initial assignment.

	Now consider the input sequence $\sigma' = (e_1, \dots, e_r)$ with
	$r = \lceil \alpha n^3 t \rceil$ which consists of the edges
	$(e_1', \dots, e_t')$ in $E^*$ concatenated $\lceil \alpha n^3 \rceil$
	times.

	Suppose that while running the algorithm there always exists a $C_i$ such
	that not all vertices from $C_i$ are assigned to the same server $S_j$,
	i.e., Claim~(1) does not apply. We show that then Claim~(2) must apply.

	Consider the state of the algorithm prior to a single subsequence containing
	the edges $(e_1', \dots, e_t')$. By assumption at least one edge $e_i'$ must
	be between two vertices from different servers. Now the algorithm must
	either pay $1$ for communication along this edge or it must move one of the
	edge's endpoints at the cost of $\alpha$ to avoid paying for communication
	along this edge. Thus, the algorithm must pay at least $\Omega(1)$ for the
	subsequence $(e_1', \dots, e_t')$.

	As there are $\lceil \alpha n^3 \rceil$ such subsequences, the algorithm
	must pay at least $\Omega(\alpha n^3)$ in total.
\end{proof}

As we will see, the lemma essentially allows us to assume that 
every algorithm which obtains an edge between vertices on different
clusters, must move their connected components to the same cluster.
That is, given an input sequence $\sigma$, in our lower bound proof,
we can employ Lemma~\ref{lem:connected-components} to obtain an input 
sequence $\sigma'$ which
does not reveal any additional edges and which forces every algorithm to have
Property~(1) or Property~(2).

Now observe that if an algorithm has Property~(2), since the cost of \OFF
are always bounded by $O(\alpha n)$ (\OFF moves each vertex at most once), the
algorithm cannot be competitive: the competitive ratio must 
be at least $\Omega(n^2)$, much higher than the competitive ratios derived 
in this paper.
Hence, in the following we can assume that every algorithm with a competitive
ratio better than $\Omega(n^2)$ must satisfy Property~(1) of
Lemma~\ref{lem:connected-components}.

\subsection{Lower Bound Proofs}
\label{sec:concrete-lower-bounds}
In this subsection, we prove Theorem~\ref{thm:det-lower-bound} by proving two
different lower bounds: The first lower bound asserts a competitive ratio of
$\Omega(1/\varepsilon)$ and the second lower bounds asserts a competitive ratio
of $\Omega(\lg n)$.

In the lower bound constructions we heavily exploit that we provide hard
instances against \emph{deterministic} algorithms, i.e., we will rely on the
fact that at each point in time the adversary knows exactly which assignment the
online algorithm created.

Furthermore, we assume that after each edge which was provided by the adversary,
the algorithm creates an assignment such that all vertices of the same
connected component are assigned to the same server. This assumption is
admissible by the discussion in Section~\ref{sec:connected-components}.

We start by proving the lower bound of $\Omega(1/\varepsilon)$. 

\begin{lemma}
	Consider the setting with two servers which both have capacity
	$(1+\varepsilon)n/2$ for $\varepsilon > 0$.
	
	Then for each deterministic online algorithm \ALG there exists an input
	sequence $\sigma$ such that the cost of \ALG is $\Omega(\alpha n)$ and the
	cost paid by \OFF is $O(\alpha \varepsilon n)$. Thus, the competitive ratio
	of every online algorithm is $\Omega(1/\varepsilon)$.
\end{lemma}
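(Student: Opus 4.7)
The plan is, for each fixed deterministic online algorithm $\ALG$, to construct the input sequence $\sigma$ adversarially so that $\ALG$ is forced to pay $\Omega(\alpha n)$ while the ground-truth partition chosen at the end supports $\OFF = O(\alpha \varepsilon n)$. First I would fix, as the intended ground truth, sets $V_0 = (\init(S_0)\setminus X)\cup Y$ and $V_1 = (\init(S_1)\setminus Y)\cup X$ with $|X|=|Y|=\Theta(\varepsilon n)$; Lemma~\ref{lem:cost-opt-2server} then immediately bounds $\OFF = 2\alpha\Delta = O(\alpha \varepsilon n)$ regardless of how the rest of $\sigma$ plays out, so the adversary has this bound for free.

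The edges of $\sigma$ would be revealed in $\Theta(1/\varepsilon)$ phases, each extracting cost $\Theta(\alpha \varepsilon n)$ from $\ALG$. In each phase I would first use ``filler'' cross edges lying inside $V_1$ (or $V_0$) to force $\ALG$ to migrate $\Theta(\varepsilon n)$ vertices to one server, saturating its augmentation to $(1+\varepsilon)n/2$. Then I would reveal a carefully chosen ``trigger'' edge within the same $V_i$ whose endpoints sit on opposite servers in such a way that merging them while respecting capacity is only possible by shipping an entire component of size $\Theta(\varepsilon n)$ off the now-full server; since $\ALG$ keeps connected components together (we may assume this without loss of generality by Section~\ref{sec:connected-components}), the big migration is unavoidable and costs $\Theta(\alpha \varepsilon n)$. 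Summing across $\Theta(1/\varepsilon)$ phases yields $\ALG = \Omega(\alpha n)$ and hence a competitive ratio $\Omega(1/\varepsilon)$.

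The main technical obstacles are: (i) showing that each phase's filler and trigger edges can be chosen consistently with the fixed $V_0, V_1$, so that the $\Theta(\varepsilon n)$ swap budget is never exceeded even though the same swap vertices may be shuffled through several phases; (ii) scheduling the phases so that both server capacities are respected simultaneously throughout, in particular so that after a trigger move the receiving server still has enough free space to absorb the migrating component; and (iii) exploiting the determinism of $\ALG$ to adaptively tailor the filler/trigger choices so that the expensive move is unavoidable against any $\ALG$, not merely a greedy one. I expect (iii) to be the main difficulty: the adversary must simulate $\ALG$'s state and always select an edge that routes $\ALG$ into the capacity trap, which requires showing that no matter how $\ALG$ has distributed its components across servers in earlier phases, there is always a cross edge consistent with $V_0, V_1$ whose revelation forces the desired costly migration.
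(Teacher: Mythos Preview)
Your outline and the paper's proof share the same high-level arithmetic --- $\Theta(1/\varepsilon)$ forced migrations of size $\Theta(\varepsilon n)$ each --- but the paper's construction is considerably simpler and, crucially, sidesteps exactly the obstacle you flag as (iii). The paper does \emph{not} fix the ground truth in advance. Instead it pre-builds (via free intra-server edges) $q=\Theta(1/\varepsilon)$ blocks $C_1,\dots,C_q$ of size $K+1$ on the left and one block $C$ of size $K+1$ plus one large block $C'$ on the right, so that both servers sit at load exactly $n/2$. A single cross edge $(C_1,C)$ then forces a swap of two size-$(K+1)$ blocks (neither side has room to absorb $K+1$ extra vertices), and whichever $C_i$ the algorithm ships right, the adversary immediately connects the growing left component to \emph{that} $C_i$, forcing another swap. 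After $q-1$ such steps the last surviving $C_i$ on the right is declared (together with $C'$) to be $V_1$. The ground truth is thus chosen \emph{after} observing all of $\ALG$'s swap decisions; this is what makes the expensive move unavoidable at every step without any case analysis.

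Your decision to fix $V_0,V_1$ (hence $X,Y$) before revealing edges is precisely what creates difficulty (iii), and it is a genuine gap rather than a technicality. Once $X$ is committed, every edge you reveal must stay inside the fixed $V_0$ or $V_1$; but the algorithm's swap choice at each step is determined only by the edges seen so far, and if $\ALG$ happens to ship the block you designated as $X$ to the right early, you cannot legally reveal the ``punishing'' edge connecting it back to the growing component (that edge would cross $V_0$ and $V_1$). The filler/trigger machinery does not rescue this: with $|X|=|Y|=\Theta(\varepsilon n)$ fixed, the supply of cross edges consistent with your partition is governed by those swap sets, and once they are absorbed into their correct side the adversary has no further leverage unless $\ALG$ voluntarily moves them back --- which you cannot force. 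The paper's adaptive choice of which $C_i$ becomes $X$ is exactly the missing ingredient; dropping the ``fix $V_0,V_1$ first'' constraint and letting the ground truth emerge from $\ALG$'s last swap both eliminates (i)--(iii) and removes the need for filler edges altogether, since the servers stay at load $n/2$ throughout.
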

\begin{proof}
	Choose an arbitrary initial assignment of $n$ vertices to the $\ell$
	servers.  Let $K = \varepsilon n / 2$ denote the allowed augmentation of the
	servers. The initial assignment is as follows. In the left server, there are
	$q = n / (2 (K+1))$ connected components $C_1, \dots, C_q$ of size $K+1$.  On
	the right server, we build one connected component of size $K+1$ denoted $C$
	and one large connected component of size $n-K-1$ denoted $C'$. First, the
	adversary provides all edges of these connected components at no cost to the
	algorithm.

	Then the adversary inserts an edge from a vertex in $C_1$ to a vertex in $C$.
	Since $C_1$ has size $K+1$ and the right server currently has $n/2$ vertices,
	the algorithm cannot move $C_1$ to the right server. For the same reason, the
	algorithm cannot move $C$ to the left server either. Thus, the algorithm's
	only option to bring $C_1$ and $C$ to the same server is to replace $C$ with
	some $C_i$ at the cost of $2 \alpha (K+1)$.

	We will refer to the merged connected component of size $2(K+1)$ as $D$.
	Note that $D$ must be on the left server. Now let $C_i$ be the connected
	component of size $K+1$ on the right server. The adversary adds an edge from a
	vertex in $D$ to a vertex in $C_i$. By the same reasoning as before, the
	algorithm must now pick some $C_j$, $j \neq i$, of size $K+1$ from the left
	server and swap it with $C_i$. This costs another $2 \alpha (K+1)$.

	The adversary continues the previous procedure until only a $C_i$ of size
	$K+1$ is left on the left server and then she connects $C_i$ and $C'$. This
	gives the final partitioning of the vertices.
	
	We observe that each vertex which is 
	on the left server at the very end, has
	been on the right server exactly once 
	during the execution of the algorithm.
	Thus, the costs paid by the algorithm must be $\Omega(\alpha n)$.

	Note that \OFF pays exactly $\alpha( K+1 )$ because it can determine
	beforehand which $C_i$ must be moved to the right server and only move that
	connected component. Before, we have seen
	that any deterministic algorithm must pay at least $\Omega(\alpha n)$. Thus,
	the competitive ratio is $\Omega(n / K)$.
\end{proof}

Next, we prove the $\Omega(\lg n)$ lower bound for the competitive ratio of
deterministic algorithms.
\begin{lemma}
	Consider the setting with two servers which both have capacity
	$(1+\varepsilon)n/2$ for $\varepsilon \leq 0.98$.
	
	Then for each deterministic online algorithm \ALG there exists an input
	sequence $\sigma$ such that the cost paid by \ALG is $\Omega(\alpha n \lg n)$
	and the cost paid by \OFF is $O(\alpha n)$. Thus, the competitive ratio of every
	deterministic online algorithm is $\Omega(\lg n)$.
\end{lemma}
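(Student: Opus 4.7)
The plan is an adversarial doubling construction that forces $\Theta(\lg n)$ ``stages,'' each of which costs the algorithm $\Omega(\alpha n)$ in forced moves, while the ground truth will turn out to admit an offline solution of cost $O(\alpha n)$. Start from a perfectly balanced initial assignment of $n/2$ vertices per server with no edges revealed. By the reduction of Section~\ref{sec:connected-components} it suffices to treat algorithms that always co-locate every connected component on a single server. The adversary maintains the \emph{doubling invariant}: at the start of stage $k = 0, 1, 2, \ldots$, every current connected component has size exactly $2^k$, so there are $n/2^k$ of them. Because each server has capacity at most $(1+\varepsilon)n/2 \leq 0.99 n$, each server holds between $(1-\varepsilon)n/2$ and $(1+\varepsilon)n/2$ vertices, hence between $(1-\varepsilon) n/2^{k+1}$ and $(1+\varepsilon) n/2^{k+1}$ components of size $2^k$.

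At stage $k$ the adversary, who knows \ALG's current state, takes a maximum \emph{cross-server} matching among the current components and reveals, for each matched pair $(C_u, C_v)$, one edge between a vertex of $C_u$ and a vertex of $C_v$. By the connected-component assumption, the algorithm must collocate $C_u$ and $C_v$, and since the two components lie on different servers this forces at least $\min(|C_u|, |C_v|) = 2^k$ vertex moves. The remaining components (if any) the adversary pairs up within the same server and merges for free by inserting an edge between two vertices that are already collocated; this preserves the doubling invariant into stage $k+1$. The number of cross-server pairs is $\min(L_k, R_k) \geq (1-\varepsilon) n/2^{k+1}$, so stage $k$ contributes
\begin{align*}
	\alpha \cdot (1-\varepsilon)\tfrac{n}{2^{k+1}} \cdot 2^k = \tfrac{(1-\varepsilon)\alpha n}{2} = \Omega(\alpha n)
\end{align*}
to \ALG, as long as $(1-\varepsilon) n/2^{k+1} \geq 1$. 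For $\varepsilon \leq 0.98$ this holds until $k = \lg n - O(1)$, giving $\Omega(\lg n)$ productive stages and total cost $\ALG = \Omega(\alpha n \lg n)$.

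After the last productive stage, continue the (now cheap) merging until exactly two components of size $n/2$ remain; since $(1+\varepsilon)n/2 < n$ they must occupy different servers. Declare these two components to be $V_0$ and $V_1$ and let $E$ be exactly the edges revealed so far, which span both $V_i$ by construction, so the sequence is a valid instance of our model. Lemma~\ref{lem:cost-opt-2server} then yields $\OFF = 2\alpha\Delta \leq \alpha n/2 = O(\alpha n)$, and dividing proves the claimed competitive ratio $\Omega(\lg n)$.

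The main obstacle is the interaction between the capacity constraint and the adversary's need for a large cross-server matching at every stage. The capacity bound $(1+\varepsilon)n/2$ is what guarantees both servers retain $\Omega(n/2^{k+1})$ components; this is the only place we use the hypothesis $\varepsilon \leq 0.98$ (we just need a constant fraction of balance on each server), and it is precisely what gives us $\lg n - O(1)$ productive stages rather than fewer. A secondary subtlety is preservation of the doubling invariant across stages, but this is immediate because the adversary only ever merges pairs of equal current size, whether those pairs are cross-server (costly to \ALG) or same-server (free).
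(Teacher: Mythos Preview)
Your argument is correct and follows essentially the same doubling-stage strategy as the paper: maintain the invariant that all components have size $2^k$ at the start of stage $k$, merge cross-server pairs to force $\Omega(\alpha n)$ cost per stage, and run for $\Theta(\lg n)$ stages before invoking Lemma~\ref{lem:cost-opt-2server} for $\OFF = O(\alpha n)$.

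One point to tighten: you fix the entire cross-server matching at the \emph{beginning} of a stage and then assert that each matched edge ``forces at least $2^k$ vertex moves since the two components lie on different servers.'' After the first such edge the algorithm may shuffle other components, so a later pair $(A_i,B_i)$ in your matching could already be collocated when its edge arrives, and that particular edge then forces nothing. The stage bound still holds, but the right justification is a final-state argument: at the end of the stage each merged pair $A_i\cup B_i$ sits on one server while one of $A_i,B_i$ started on the other, so at least $2^k$ moves are attributable to each of the $L\geq (1-\varepsilon)n/2^{k+1}$ disjoint pairs. The paper avoids this wrinkle by having the adversary choose each cross-server (``expensive'') edge adaptively one at a time based on the \emph{current} configuration, so every such edge genuinely costs $\geq \alpha\,2^i$ at the moment it is revealed; it then just counts how many such edges remain available, namely at least $(n/2^i-(1+\varepsilon)n/2^{i+1})/2=\Omega(n/2^i)$.
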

\begin{proof}
	Choose an arbitrary initial assignment of $n$ vertices to the $\ell$
	servers. Since we want to prove a lower bound, we can assume that $n$ is a
	power of 2. Thus, suppose that $n = 2^a$ for $a \geq 1000$.
	
	In our hard instance, we are creating a sequence of edge insertions which
	proceeds in $\Theta(\lg n)$ rounds. When round $i$ starts, all connected
	components have size $2^i$ induced by the previously provided edges, and
	when round $i$ finishes, all connected components have size $2^{i+1}$. We
	show that \ALG pays $\Omega(\alpha n)$ in each round. This
	implies the claimed cost of $\Omega(\alpha n \lg n)$ for \ALG.  The cost for
	\OFF follows immediately from Lemma~\ref{lem:cost-opt-2server} which states
	that \OFF never pays more than $O(n)$ when there are only two servers.

	When \ALG starts and no edge was provided by the adversary, all connected
	components have size $1 = 2^0$, i.e., the connected components are isolated
	vertices.

	Now suppose round $i = 0, \dots, \lg n$ starts. By induction, all connected
	components have size $2^i$. We now define a sequence of edge insertions for
	round $i$ which forces \ALG to pay $\Omega(\alpha n)$ and after which all
	connected components have size $2^{i+1}$.

	Let $z$ denote the current number of connected components of size $2^i$.
	When round $i$ starts, there are exactly $z = n/2^i = 2^{a-i}$ connected
	components of size $2^i$ each.  Recall that each server has capacity
	$(1+\varepsilon)n/2$.  Thus, at most
	\begin{align*}
		y_i = (1+\varepsilon)n/2^{i+1} \leq 1.98 \cdot 2^{a-i-1}
	\end{align*}
	connected components of size $2^i$ can be assigned to each server.

	Now suppose there exists an edge $(u,v)$ such that $C_u$ and $C_v$ are of
	size $2^i$ and they are assigned to different servers; we call such an edge
	\emph{expensive}. When the adversary inserts an expensive edge, \ALG must
	pay $\Omega(\alpha 2^i)$ for moving $C_u$ or $C_v$ to a different server.

	The strategy of the adversary is to insert expensive edges as long as they
	exist. Once no expensive edges exist anymore, the adversary connects all
	remaining components of size $2^i$ arbitrarily until all components have
	size $2^{i+1}$.

	Note that expensive edges exist as long as $z > y_i$ (because when this
	inequality is satisfied, not all connected components of size $2^i$ can be
	assigned to the same server). Furthermore, observe that when the adversary
	inserts an expensive edge, $z$ decreases by 2.
	
	Now we prove a lower bound on the number of expensive edges $p$. By the
	previous arguments, $p$ must be large enough such that:
	\begin{align*}
		z = 2^{a-i} - 2p \leq y_i.
	\end{align*}
	Solving this inequality for $p$, we obtain
	\begin{align*}
		p
		&\geq 2^{a-i-1} - 1.98 \cdot 2^{a-i-2} \\
		&= 2^{a-i-1} (1 - 0.99) \\
		&= 0.01 \cdot 2^{a-i-1}.
	\end{align*}

	We conclude that that adversary can perform $\Omega(2^{a-i})$ expensive edge
	insertions.  Since for each of these edge insertions, \ALG must pay
	$\Omega(\alpha 2^i)$, we obtain that the cost paid by \ALG in round $i$ is
	\begin{align*}
		\Omega(\alpha \cdot 2^{a-i} \cdot 2^i)
		= \Omega(\alpha \cdot 2^{a-2})
		= \Omega(\alpha n).
		& \qedhere
	\end{align*}
\end{proof}

\section{Sample Applications: A Distributed Union Find Algorithm and Online
	$k$-Way Partitioning}
\label{sec:applications}

In this section we provide two sample applications for our model and our
algorithms. First, we show that our results can be used to solve a distributed
union find problem and we give an example where a union find data structure is
used in practice. Second, we show that our algorithms imply competitive
algorithms for an online version of the $k$-way partitioning problem.

\subsection{Distributed Union Find}
\label{sec:union-find}
Recall that in the \emph{static} union find problem, there are $n$ elements
from a universe $\U$ and initially there are $n$ sets containing one element
each. The data structure supports two operations: $\union(u,v)$ and $\find(u)$.
Given two elements $u, v \in \U$, the operation $\union(u,v)$ merges the sets
containing $u$ and $v$. The operation $\find(u)$ returns the set containing $u$.

In the distributed setting we consider, elements are stored across $\ell$
servers. Each server has enough capacity to store $(1+\varepsilon) n / \ell$
elements and we have the natural constraint that elements from the same set must
always be stored on the same server (in order to maintain locality for elements
from the same set). We consider a setting in which all sets have size $n/\ell$
when the algorithm finishes.

Note that if the sets of $u, v \in \U$ are stored on different servers when the
operation $\union(u,v)$ is performed, one of the sets containing $u$ or $v$ must
be moved to a different server. The goal of an algorithm is to minimize the
moving cost caused by $\union$-operations. 

When analyzing the moving cost, we will compare with an optimal offline
algorithm which knows in advance which $\union$-operations will be performed.
Thus, the optimal algorithm can move from the initial assignment to the final
assignment at the minimum possible cost. For our analysis we will compute the
competitive ratio between an online algorithm solving the above problem and the
optimal offline algorithm (as also detailed in Section~\ref{sec:preliminaries}).

Using the algorithms from Sections~\ref{sec:l-servers}
and~\ref{sec:distributed}, we obtain the following result.
\begin{theorem}
\label{thm:union-find}
	Consider a system with $\ell$ servers each of capacity
	$(1+\varepsilon)n/\ell$ for $\varepsilon \in (0,1/2)$.	Then there exists a
	\emph{distributed} $O((\ell \lg n \lg \ell) / \varepsilon)$-competitive
	algorithm for the distributed union find problem. Moreover, for $\ell =
	O(\sqrt{\varepsilon n})$ servers, the algorithm's communication cost does
	not exceed its cost for moving vertices.
\end{theorem}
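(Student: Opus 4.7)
The plan is to reduce the distributed union find problem to the workload embedding model analyzed in Sections~\ref{sec:l-servers} and~\ref{sec:distributed}. Specifically, I would interpret each operation $\union(u,v)$ as the revelation of an edge $(u,v)$ by the adversary. Under this mapping, the sets maintained by the union find data structure at any point in time coincide with the connected components induced by the union operations performed so far, and the assumption that every final set has size $n/\ell$ translates exactly into the assumption that the revealed edges induce ground truth components $V_0,\dots,V_{\ell-1}$ of size $n/\ell$ each.

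Next I would verify that the distributed algorithms developed in Sections~\ref{sec:l-servers} and~\ref{sec:distributed} already enforce the hard locality constraint of union find, namely that all elements of the same set reside on the same server: by construction, both Algorithm~\ref{algo:recursive-majority-voting} and Algorithm~\ref{algo:small-large-rebalance} always keep the vertices of a connected component together on one server, so whenever $\union(u,v)$ merges two sets on different servers, the merged component is relocated immediately. Hence running the combined algorithm from the proof of Theorem~\ref{thm:lservers}, together with the distributed implementations from Lemma~\ref{lem:recursive-majority-voting:distributed} and Lemma~\ref{lem:small-large-rebalance:distributed}, yields a valid execution for the distributed union find problem.

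It then remains to compare costs. The only cost relevant for union find is the item-moving cost, whereas the model of Section~\ref{sec:preliminaries} also charges communication cost. However, Lemma~\ref{lem:reduction} bounds the communication cost of any component-respecting algorithm by its moving cost, so the total embedding-model cost of our algorithm is at most twice its union find cost, and any bound on its competitive ratio in the embedding model transfers, up to a constant, to the union find setting. Conversely, an offline union find algorithm is at least as expensive as $\OFF$ in our model: $\OFF$ knows all edges in advance and, by the characterization used throughout the analysis, can pre-compute and execute the minimum-cost sequence of moves leading to the required final partitioning, which is also what the union find optimum must produce. Plugging these relationships into Theorem~\ref{thm:lservers} gives the $O((\ell \lg n \lg \ell)/\varepsilon)$ competitive ratio, and the second claim about message complexity for $\ell = O(\sqrt{\varepsilon n})$ follows directly from Theorem~\ref{thm:distributed}.

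The main delicate point, and the one I would write out most carefully, is the offline cost comparison: one must argue that an offline union find algorithm constrained to respect locality at all times is never cheaper than $\OFF$ in the more permissive embedding model. This holds because $\OFF$ enjoys strictly more freedom (it may temporarily violate locality, paying communication cost for mismatched edges) while still being obligated by the problem definition to output a perfect partitioning; consequently the offline union find optimum is an upper-bounded instance of $\OFF$, and the online-to-offline ratios established for our algorithms apply unchanged.
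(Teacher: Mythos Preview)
Your proposal is correct and follows essentially the same reduction as the paper: identify union-find elements with vertices, $\union(u,v)$ operations with edge insertions, invoke the collocation property of the algorithms together with Lemma~\ref{lem:reduction}, and conclude from Theorem~\ref{thm:distributed}. Your additional paragraph on the offline cost comparison (arguing that the constrained union-find optimum is at least as expensive as $\OFF$) is more careful than the paper, which simply asserts that the bounds carry over; but the underlying argument is the same.
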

\begin{proof}
	The theorem follows immediately from Theorem~\ref{thm:distributed} by the
	following reduction from the model in Section~\ref{sec:preliminaries}. We
	identify vertices in the model from Section~\ref{sec:preliminaries} with
	elements from the universe $\U$ in the union find model. Furthermore, for
	each operation $\union(u,v)$ we insert an edge $(u,v)$ into the model from
	Section~\ref{sec:preliminaries}.  Since all algorithms we considered always
	collocate vertices from the same connected component, they satisfy the
	constraint that elements from the same set must be assigned to the same
	server. Moreover, in our analysis we were able to focus on the number of
	vertex moves due to Lemma~\ref{lem:reduction}. In our proofs, we showed
	competitive bounds for the number of vertex moves performed by the algorithm
	from Theorem~\ref{thm:distributed} compared with an optimal offline
	algorithm. Thus, the same bounds as derived in Theorem~\ref{thm:distributed}
	apply.
\end{proof}

For $\ell = \Omega(\sqrt{\varepsilon n})$ servers and the exact number of
messages sent by the algorithm, see Theorem~\ref{thm:distributed}. The
guarantees from Theorem~\ref{thm:distributed} carry over immediately.

An examples where distributed union find data structures are used in practice is
search engines~\cite{broder97syntactic}.  A search engine stores many different
documents from the Web over multiple servers. Now union find data structures are
used to collocate duplicate documents on the same server, i.e., when documents
$u$ and $v$ are identified as duplicates the operation $\union(u,v)$ is used to
collocate these documents (and all previously identified duplicates) on the same
server.  Furthermore, union find data structures are used to find blocks in
dense linear systems and in pattern recognition tasks (see Cybenko et
al.~\cite{cybenko88practical} and references therein).

\subsection{Online $k$-Way Partitioning}
\label{sec:partition}
The model and algorithms we study in this paper can also be used to solve an
online variant of the $k$-way partition problem~\cite{ethan18optimal}. In the
static version of the $k$-way partition problem one is given a (multi-)set of
integers $\S$ and the task is to partition $\S$ into $k$ subsets
$\S_1,\dots,\S_k$ such that the sum of all subsets is (approximately) equal.

Our model and our algorithms can be used to solve the following online version
of this fundamental problem. Initially, $\S$ contains $n$ integers and all
integers are $1$. Each integer is assigned to one of $\ell$ bins and each bin has
capacity $(1+\varepsilon) n/\ell$.  Now in an online sequence of operations, an
adversary picks two integers from $\S$ and these integers are added. For
example, after adding integers $a, b \in \S$, $\S$ becomes
$\S = (\S \cup \{a+b\}) \setminus \{a,b\}$. During this sequence of operations
an online algorithm must ensure that the load of all bins is always bounded by
$(1+\varepsilon) n/\ell$. We work under the assumption that after each operation
there always exists an assignment from the integers in $\S$ to the bins such
that each bin has load exactly $n/\ell$. We further assume that at the end of
the sequence of operations there are $\ell$ integers and each integer is
$n/\ell$.

Note that when two integers $a, b \in \S$ from different bins are added, either
$a$ or $b$ must be moved to a different bin. This might cause that bin to exceed
its capacity. 

We will analyze algorithms which have small moving cost. That is, the cost of an
algorithm is the sum of the numbers it has moved. We consider the competitive
analysis of online algorithms compared with an optimal offline algorithm which
knows the sequence of additions in advance and which can move the numbers at
optimal cost.

We then obtain the following result for the $k$-way partitioning problem.
\begin{theorem}
	Consider a system with $\ell$ bins each of capacity $(1+\varepsilon)n/\ell$
	for $\varepsilon \in (0,1/2)$.	Then there exists a
	$O((\ell \lg n \lg \ell) / \varepsilon)$-competitive algorithm for the
	$k$-way partition problem.
\end{theorem}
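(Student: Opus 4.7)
The plan is to reduce the online $k$-way partitioning problem to the workload embedding problem studied in Section~\ref{sec:l-servers}, and then apply Theorem~\ref{thm:lservers} essentially as a black box, in direct analogy with the reduction used in the proof of Theorem~\ref{thm:union-find}.

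First I would set up the reduction. Each integer of value $v$ currently stored in a bin will correspond to a connected component of $v$ unit vertices assigned to the same server. Since the problem starts with $n$ integers all equal to $1$, this matches the initial state of the workload embedding model, where all $n$ vertices are isolated and perfectly balanced across the $\ell$ servers. When the adversary requests the addition of two integers $a, b \in \S$, with $a$ and $b$ residing in bins $B_i$ and $B_j$, I translate this into the insertion of a single edge $(u,v)$ in the embedding model, where $u$ is an arbitrary vertex in the connected component $C_a$ corresponding to $a$ and $v$ is an arbitrary vertex in $C_b$. Since our algorithms always collocate vertices of the same connected component, the resulting merged component $C_a \cup C_b$ will reside on a single server, matching the requirement that the new integer $a+b$ occupy a single bin. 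The assumption that at the end all integers equal $n/\ell$ corresponds to the assumption in Section~\ref{sec:preliminaries} that the final connected components coincide with the ground truth components of size $n/\ell$.

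Next I would check that the cost measures agree up to a constant. Moving an integer of value $v$ between bins costs $v$ in the $k$-way partitioning model, and moving a connected component of $v$ vertices between servers costs $\alpha v$ in the embedding model, so the two costs differ by a factor of $\alpha$ that cancels in the competitive ratio. By Lemma~\ref{lem:reduction}, the communication cost incurred by our algorithms is dominated by their moving cost; so the embedding algorithm's total cost is, up to a constant, the moving cost it pays, which is exactly the cost in the $k$-way partitioning model. A symmetric argument handles \OFF: any offline solution for one problem gives an offline solution for the other of equal cost (up to the same constant), so the optimum moving costs agree.

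Combining these two observations, the distributed algorithm from Theorem~\ref{thm:lservers} applied to the reduced instance yields a solution for the $k$-way partitioning problem with competitive ratio $O((\ell \lg n \lg \ell)/\varepsilon)$, establishing the theorem. The only mildly delicate step is verifying that the capacity and augmentation constraints translate faithfully: a bin of capacity $(1+\varepsilon)n/\ell$ maps to a server of capacity $(1+\varepsilon)n/\ell$, and the invariant that bins admit a perfectly balanced partition after each operation corresponds exactly to the fact that the current connected components can always be repacked into the servers with load $n/\ell$, which is precisely the setting in which the algorithm of Section~\ref{sec:l-servers} operates. I expect this bookkeeping to be the main (minor) obstacle; once it is in place, the theorem follows immediately from Theorem~\ref{thm:lservers}.
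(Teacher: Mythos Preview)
Your proposal is correct and follows essentially the same approach as the paper: identify each integer with a connected component of that size, translate each addition into an edge insertion, observe that moving costs coincide (up to the constant factor $\alpha$ that cancels in the ratio), and invoke Theorem~\ref{thm:lservers}. Your write-up is in fact more careful than the paper's own proof, which leaves the $\alpha$ factor, the appeal to Lemma~\ref{lem:reduction}, and the correspondence of offline optima implicit.
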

\begin{proof}
	We can relate the online version of the $k$-way partition problem to the
	model we study by identifying integers and the sizes of connected
	components. Initially, we identify each $s \in \S$ with a single vertex.
	Note that this can be done since initially $s = 1$ and thus $s$ and the size
	of its corresponding connected component are the same. After that, when two
	integers $a$ and $b$ are added, we take their corresponding connected
	components $C_a$ and $C_b$ and insert an edge between them. Note that the
	resulting integer $a+b$ corresponds to the connected component $C_a \cup
	C_b$ and their sizes agree, i.e., $a + b = |C_a \cup C_b|$. Now observe that
	summing the moving cost for integers is the same as counting the number of
	vertex reassignments for connected components. Thus, the result of the
	theorem follows from Theorem~\ref{thm:lservers}.
\end{proof}

\section{Related Work}
\label{sec:related}

The design of more flexible networked systems
that can adapt to their workloads has received
much attention over the last years, 
with applications for 
traffic engineering~\cite{swan,b4}, 
load-balancing~\cite{ananta,maglev}, 
network 
slicing~\cite{sherwood2010carving},
server migration~\cite{ton13mig},
switching~\cite{p4,smartnic}, or
even adjusting the network topology~\cite{projector}. 
The impact of distributed applications
on the communication network
is also well-documented in the literature~\cite{talk-about,kraken,dist-ml,singh2015jupiter,cisco-pro}.
Several empirical studies exploring the spatial and temporal 
locality in traffic patterns found evidence that these 
workloads are often \emph{sparse
and skewed}~\cite{alizadeh2010data,projector,Roy:2015,judd2015attaining}, introducing optimization opportunities.
E.g., studies of reconfigurable datacenter
networks~\cite{firefly,projector} 
have shown that for certain workloads,
a demand-aware datacenter network 
can achieve a performance similar to a demand-oblivious
datacenter network at 25-40\% lower cost~\cite{firefly,projector}. 

However, much less is known about the algorithmic challenges
underlying such workload-adaptive networked systems,
the focus of our paper. 
From an online algorithm perspective, 
our problem is related to reconfiguration problems
such as online page (resp.~file) migration~\cite{best-deterministic,black1989competitive}
as well as server migration~\cite{Bienkowski:2014:WVS:2591204.2591215} problems, 
$k$-server~\cite{k-server2} problems, or online metrical task systems~\cite{metrical-task-systems}.
In contrast to these problems, in our model,
requests do not appear somewhere in a graph or metric
space but \emph{between communication partners}.
From this perspective, our problem can also be seen as a ``distributed'' 
version of online paging problems~\cite{paging-mark,companion-caching,competitive-analysis,young-paging-soda} (and especially their variants \emph{with bypassing}~\cite{generalized-caching-optimal,caching-rejection-penalties})
 where access costs can be
avoided by moving items to a \emph{cache}: in our model,
access costs are avoided by collocating communication partners
on the same \emph{server} (a ``distributed cache'').

The static version of our problem, how to partition a 
graph, is a most 
fundamental
and well-explored problem in computer science~\cite{Vaquero:2013:APL:2523616.2525943}, with many applications, e.g.,
in community detection~\cite{abbe2018community}. The
balanced graph partitioning problem is 
related to minimum bisection 
problems~\cite{feige2002}, and
known to be hard even to 
approximate~\cite{andreev2006balanced}. 
The best approximation today is due to Krauthgamer~\cite{Krauthgamer2006}. 
In contrast, we in this paper are interested in a dynamic
version of the problem where the edges of the to-be-partitioned
graph are revealed over time, in an online manner.
Further, the offline problem of embedding workloads in a communication-efficient
manner has been studied in the context of the minimum linear arrangement
problem~\cite{mla} and the virtual network embedding problem~\cite{vnep}, 
however, without considering
the option of migrations.
In this regard, our paper features an interesting connection to the
\emph{itinerant list update} model~\cite{waoa18}, a kind of ``dynamic''
minimum linear arrangement problem 
which allows for reconfigurations
and, notably, considers pair-wise requests. However, 
communication is limited to a linear line and so far, only
non-trivial \emph{offline} solutions are known.

One of the applications of the problem we study is a distributed union find data
structure (see Section~\ref{sec:union-find}). Union find data structures have
been initially proposed in the centralized setting and efficient algorithms were
derived~\cite{galler64improved,tarjan84worst}. Later, parallel versions of union
find data structures were considered in a shared memory setting in which the
goal was to derive wait-free algorithms~\cite{anderson91wait}; also external
memory algorithms were considered~\cite{agarwal10efficient}. To the best of our
knowledge studies of union find data structures in a distributed memory setting
were only conducted experimentally, see (for example)
\cite{cybenko88practical,manne09scalable,patwary10experiments,patwary12multi}.

The second application we presented was as online $k$-way partitioning
(Section~\ref{sec:partition}). The $k$-way partitioning problem is known to be
\NP-hard as it constitutes a very simple scheduling
problem~\cite{garey79computers}. The problem has also been researched in
practice, see, e.g., \cite{korf09multi,ethan18optimal} and references therein.
We are not aware of literature studying the online version of the problem which
we have considered.

The paper most closely related to ours
is by Avin et al.~\cite{obr-original,avin19dynamic}
who studied a more general version of
the problem considered in our paper. 
In their model, request patterns can
change arbitrarily over time, and in particular,
do not have to follow a partition
and hence ``cannot be learned''. Indeed,
as we have shown in this paper, learning algorithms
can perform significantly better: 
in~\cite{obr-original}, it was shown that for constant $\ell$ 
any deterministic online algorithm must have a competitive
ratio of at least $\Omega(n)$ unless it can 
collocate \emph{all} nodes on a single server, while we have presented an
$O(\lg{n})$-competitive online algorithm.
Thus, our result is exponentially better than what can possibly be achieved in
the model of~\cite{obr-original}.

\section{Conclusion}
\label{sec:conclusion}

Motivated by the increasing resource allocation flexibilities
available in modern compute infrastructures,
we initiated the study of online algorithms
for adjusting the embedding of workloads according to
the specific communication patterns, to reduce
communication and moving costs. In particular, we presented
algorithms and derived upper and lower bounds on
their competitive ratio.

We believe that our work opens several interesting
questions for future research. In particular, 
it remains to close the gap between the upper and lower
bound of the competitive ratios derived in this paper.
Furthermore, while in this paper we assumed that there are $\ell$ ground truth
components of size $n/\ell$, it will be interesting to study more general
settings with smaller and larger components.

More generally, it will be interesting to consider algorithms which do not
collocate all communication partners. Also, studying collocation in specific
networks such as Clos networks, which are frequently encountered in datacenters,
would be intriguing.

\section*{Acknowledgments}
We are grateful to our shepherd Rachit Agarwal as well as the anonymous
reviewers whose insightful comments helped us improve the presentation of the
paper.

The research leading to these results has received funding from the European
Research Council under the European Community's Seventh Framework Programme
(FP7/2007-2013) / ERC grant agreement No.\ 340506.
Stefan Neumann gratefully acknowledges the financial support from the Doctoral
Programme ``Vienna Graduate School on Computational Optimization'' which is
funded by the Austrian Science Fund (FWF, project no.\ W1260-N35).

{\balance
  \bibliographystyle{plain} 
\bibliography{main}
}

\end{document}